\numberwithin{equation}{section}
\title{Tube Formulae for Generalized von Koch Fractals through Scaling Functional Equations}
\author{Will Hoffer} 
\date{September 22, 2024}
        \footnotesize\smash{
            \begin{tabular}[b]{@{}p{\linewidth}@{}}
                \textbf{2020 Mathematics Subject Classification:} Primary 28A80; Secondary 44A15, 39B22. \\    
                \textbf{Keywords:} Fractal geometry, tube formulae, complex dimensions, von Koch snowflake, fractal zeta functions, iterated function systems. \\
                \bigskip 
        \end{tabular}}
\newcommand{\CC}{\mathbb{C}}
\newcommand{\RR}{\mathbb{R}}
\newcommand{\ZZ}{\mathbb{Z}}
\newcommand{\NN}{\mathbb{N}}
\newcommand{\HH}{\mathbb{H}}
\newcommand{\Dd}{\mathcal{D}}
\newcommand{\Mm}{\mathcal{M}}
\newcommand{\Rr}{\mathcal{R}}
\newcommand{\e}{\varepsilon}
\newcommand{\nth}{^\text{th}}
\DeclareMathOperator{\Res}{Res}
\newcommand{\1}{\mathds{1}}
\DeclarePairedDelimiter{\set}{\{}{\}}
\DeclarePairedDelimiter{\norm}{\|}{\|}
\newcommand{\Mink}{\text{Mink}}
\newcommand{\tubezeta}{\widetilde\zeta}
\newcommand{\partialzeta}{\widetilde\xi} 
\newcommand{\dimension}{N}
\newcommand{\Cnr}{{C_{n,r}}}
\newcommand{\vkCurve}{{C'_{n,r}}}
\newcommand{\Knr}{{K_{n,r}}}
\newcommand{\vkSnow}{{K'_{n,r}}}
\let\OLDthebibliography\thebibliography
\renewcommand\thebibliography[1]{
  \OLDthebibliography{#1}
  \setlength{\parskip}{0pt}
  \setlength{\itemsep}{0pt plus 0.3ex}
}
\begin{document}

\maketitle
\thispagestyle{firstpage}
\enlargethispage{-2\baselineskip} 

\begin{abstract}
%
%

In this work, we provide a treatment of scaling functional equations in a general setting involving fractals arising from sufficiently nice self-similar systems in order to analyze the tube functions, tube zeta functions, and complex dimensions of relative fractal drums. Namely, we express the volume of a tubular neighborhood in terms of scaled copies of itself and a remainder term and then solve this expression by means of the tube zeta functions. 

We then apply our methods to analyze these generalized von Koch fractals, which are a class of fractals that allow for different regular polygons and scaling ratios to be used in the construction of the standard von Koch curve and snowflake. In particular, we describe the volume of an inner tubular neighborhoods and the possible complex dimensions of such fractal snowflakes. 

\end{abstract}

%
%

\section{Introduction}
\label{sec:intro}
%
%

In this work, we establish tube formulae for a class of fractals called generalized von Koch snowflakes (see Definition~\ref{def:vkSnow}) such as those depicted in Figure~\ref{fig:threeGKFs}. We construct and analyze scaling functional equations in a general setting where a set is partitioned into finitely many scaled copies and a remainder term from the leftover region. Using the theory of tube zeta functions and complex fractal dimensions, we deduce the leading asymptotics of the tube function for such sets based on the order of this remainder term. 

Then, as an application of our main result, we deduce explicit tube formulae for the generalized von Koch fractals studied herein. In doing so, we extend the results of Lapidus and Pearse on computing a tube formula for the standard von Koch snowflake in \cite{LapPea06}. Additionally, our analysis of scaling functional equations with remainder terms can be applied to many other types of fractals whose tubular neighborhoods are not completely self-similar. 

Our work is organized as follows. We begin with a background regarding the origins of these fractals, previous work that we build upon, and then a summary of our main results in Section~\ref{sec:intro}. Next, in Section~\ref{sec:GKFs}, we introduce generalized von Koch fractals that are the source of our examples and the main application of results, and then we provide necessary background on tube functions, tube zeta functions, and complex dimensions relevant to our work in Section~\ref{sec:tubesAndZetas}. We state and prove our main results on the analysis of scaling functional equations (with error terms) first in a general setting (Section~\ref{sec:generalSFEs}) and then specializing to tube and zeta functions in Section~\ref{sec:fractalSFEs}. Finally, we apply our results to generalized von Koch fractals in Section~\ref{sec:appToGKFs}. 

\begin{figure}[t]
    \centering
    \subfloat{\includegraphics[trim=60 0 60 0,clip,width=3cm]{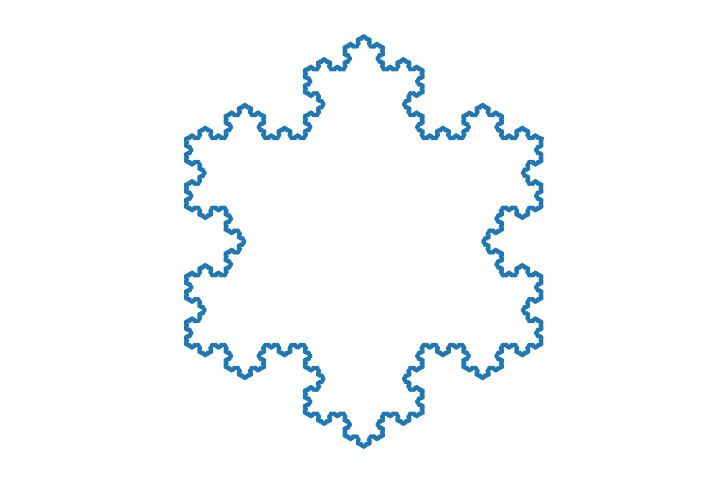}}
    \qquad 
    \subfloat{\includegraphics[trim=60 0 60 0,clip,width=3cm]{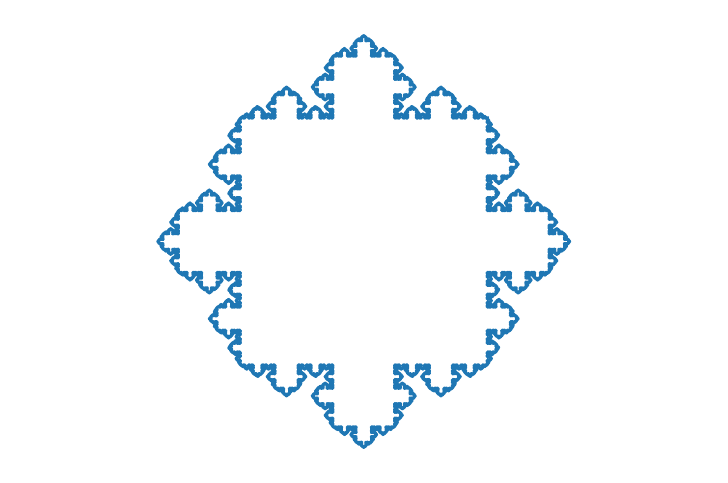}}
    \qquad 
    \subfloat{\includegraphics[trim=60 0 60 0,clip,width=3cm]{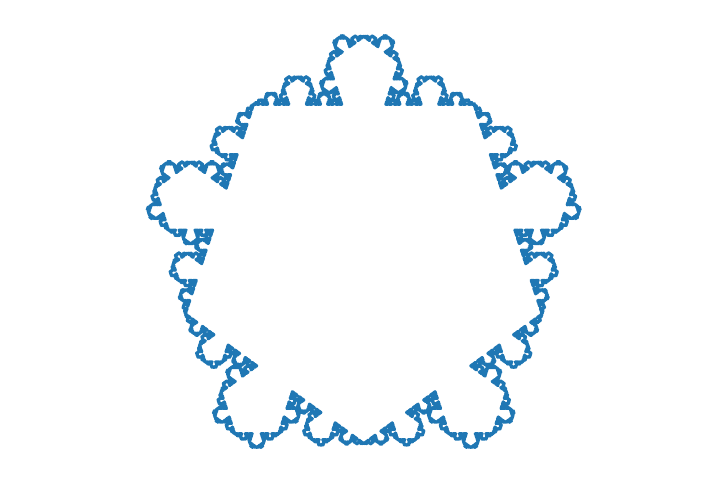}}
    \caption{The von Koch snowflake (left) and two of its generalizations, one with squares (middle) and the other with pentagons (right).}
    \label{fig:threeGKFs}
\end{figure}

\subsection{Background}
In 1904, Helge von Koch published his work on the construction of a planar curve without tangent lines at any point, describing the curve that now bears his name \cite{Koch1904, Koch1906}. See the left curve in Figure~\ref{fig:vKCurves} for a depiction. The union of three of these curves, placed about the edges of an equilateral triangle, from what is now called a von Koch snowflake\footnote{Interestingly, it may or may not have been von Koch himself who first made this combination of the curves leading to the snowflake shape. The earliest known reference to it appears as an exercise in a book published in 1912; see \cite{Dem23} for the reference and discussion of the history.} as seen in the leftmost shape in Figure~\ref{fig:threeGKFs}. The other two fractals depicted are what we call generalized von Koch fractals, and are the focus of this work.

A tube formula (see Definition~\ref{def:tubeFunction}) for the von Koch snowflake was established in the work of Lapidus and Pearse. They computed that the volume of an inner epsilon neighborhood of the von Koch snowflake takes the form:
\[ V(\e) = \sum_{n\in\ZZ}\phi_n \e^{2-D-in P} + \sum_{n\in\ZZ}\psi_n \e^{2-in P}, \]
where $D=\log_3 4$ is the Minkowski dimension of the snowflake, $P=2\pi/\log 3$ is the multiplicative period of the oscillations, and with $\psi_n,\phi_n$ as constants depending only on $n$ \cite{LapPea06}. As a consequence, they deduced the possible complex (fractal) dimensions of the von Koch snowflake. These complex dimensions encode both the amplitude and period of geometric oscillations in a fractal, and are pivotal to establishing such tube formulae in the work of Lapidus and collaborators; see \cite{LapvFr13_FGCD,LRZ17_FZF} and the references therein. 

Our method of studying functional equations is closely related to the methods developed in renewal theory. Feller established the renewal equation and methods thereabout in his work in queuing theory \cite{Fel1950}, and now renewal theory has many applications in fractal analysis such as in the work of Strichartz on self-similar measures \cite{Str1, Str2,Str3}, the work of Lapidus in \cite{LapDundee}, and the work of Kigami and Lapidus on the Weyl problem \cite{KiLap93}. Furthermore, the notion of a (scaling) functional equation was employed by Deniz, Ko\c cak, \"Ozdemir, and \"Ureyen in \cite{DKOU15} to provide a new proof of a tube formula for self-similar sprays in the work of Lapidus and Pearse \cite{LapPea10,LapPea11}. In regards to the same fractals studied herein, Michiel van den Berg and collaborators have used techniques from renewal theory to study the heat equation on these generalized von Koch fractals such as in \cite{vdBGil98,vdBHol99,vdB00}. 

In the study of fractals with multiple scaling ratios, there is a lattice/non-lattice dichotomy in behavior depending on whether or not the ratios are arithmetically related or not. This dichotomy has also been called the arithmetic/non-arithmetic dichotomy, and has been discussed in the work of Lalley in \cite{Lal1,Lal2} and his paper in \cite{BedKS}, in the work of Lapidus and collaborators (such as in \cite{LapvFr13_FGCD,LRZ17_FZF,LapDundee,LapPea10,LapPea11},) and for the generalized von Koch snowflakes by van den Berg and collaborators in their aforementioned work on heat content. See \cite{DGMRSurvey} and the references therein for more information about this dichotomy in fractal geometry.

\begin{figure}[t]
    \centering
    \subfloat{\includegraphics[width=4cm]{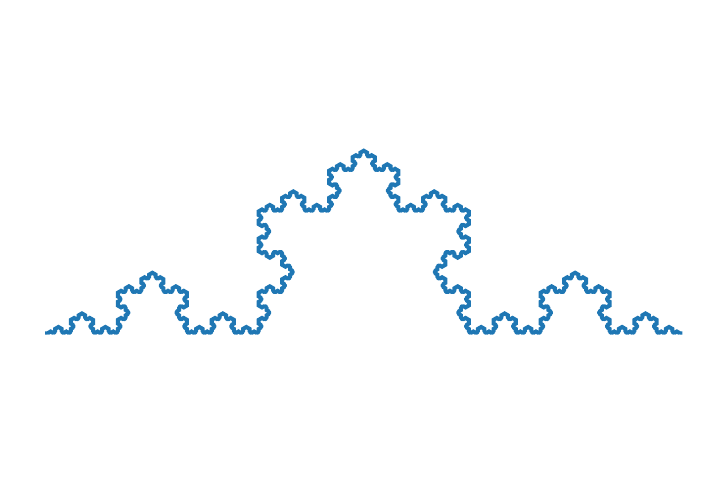}}
    \qquad
    \subfloat{\includegraphics[width=4cm]{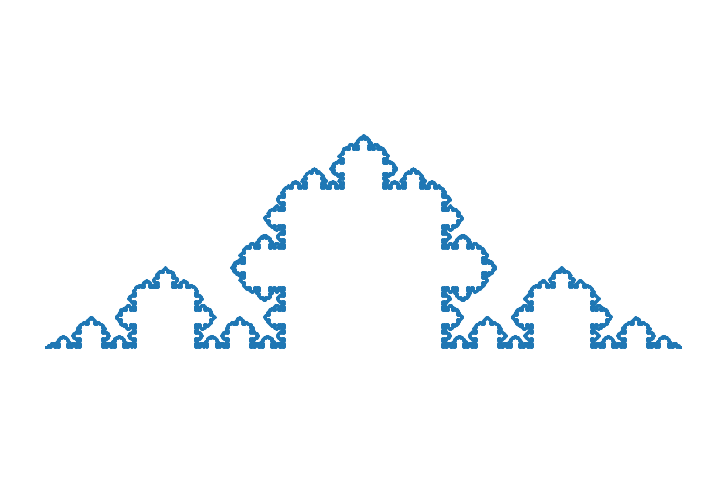}}
    \caption{Two fractal curves are depicted, that studied by von Koch (left) and a curve created from generalizing his construction (right).}
    \label{fig:vKCurves}
\end{figure}

\subsection{Main Results}
Our main results consist of establishing scaling functional equations and analyzing their solutions first in a general setting (Section~\ref{sec:generalSFEs}) and then specifically with relative tube and zeta functions (Section~\ref{sec:fractalSFEs}). We conclude by applying our work to generalized von Koch fractals (introduced in Section~\ref{sec:GKFs} and analyzed in Section~\ref{sec:appToGKFs}.) 

Firstly, Theorem~\ref{thm:transSFE} and Theorem~\ref{thm:solnSFE} represent our treatment of scaling functional equations of the form 
\[ f(x) = \sum_{i=1}^m a_i f(x/\lambda_i) +R(x), \]
where we include a remainder term $R$. Additionally, our treatment of the solution is through multiplicative means, using restricted Mellin transforms, rather than converting to an additive variable and using renewal theory directly. Letting $\Mm^\delta$ denote the restricted Mellin transform (c.f. Definition~\ref{def:truncatedMellin}), we show that 
\[ \Mm^\delta[f](s) = \frac1{1-\sum_{i=1}^m a_i \lambda_i^s}(E(s)+\Mm^\delta[R](s)), \]
where $E(s)=\sum_{i=1}^m a_i\lambda_i^s\Mm_\delta^{\delta/\lambda_i}[f](s)$ is an entire function and where $\Mm^\delta[R](s)$ is holomorphic in the right half plane $\HH_{\sigma_0}$ when $R(t)=O(t^{-\sigma_0})$. The function $f$ may be recovered by Mellin inversion and computation (or estimation) of the integrals appearing $E$; see Theorem~\ref{thm:solnSFE} and the discussion after its proof. 

Secondly, in the context of fractal geometry we describe relative fractal drums $(X,\Omega)$ that arise from self-similar iterated function systems $\Phi$ obeying the open set condition. We introduce a notion of the set $\Omega$ ``osculating'' $X$ under iteration by $\Phi$ so that points in $\phi(\Omega)$ remain closest to $\phi(X)$ for each $\phi\in\Phi$ (see the fifth condition in Definition~\ref{def:oscRFD}.) For any such relative fractal drums, we establish in Theorem~\ref{thm:SFEofSSS} a scaling functional equations satisfied by the tube function $V_{X,\Omega}=|X_\e\cap\Omega|$, namely 
\[ V_{X,\Omega}(\e) = \sum_{i=1}^m a_i\lambda_i^\dimension V_{X,\Omega}(\e/\lambda_i) + V_{X,R}(\e), \]
where $V_{X,R}(t)=O(t^{\dimension - \sigma_0})$ and where the set $\set{\lambda_i}$ is the set of distinct scaling ratios of maps in $\Phi$ and $a_i$ is the multiplicity of each scaling ratio. We then deduce Theorem~\ref{thm:SFEofSSS} that the tube zeta function $\tubezeta_{X,\Omega}$ (see Definition~\ref{def:tubeZeta}) takes the form 
\[ \tubezeta_{X,\Omega}(s) = \frac1{1-\sum_{i=1}^m a_i\lambda_i^s} \cdot h(s), \]
where $h(s)$ is holomorphic in the right half plane $\HH_{\sigma_0}$ (compare with Equation~\ref{eqn:zetaFE}.) Consequently, the complex dimensions of $X$ with real part strictly larger than $\sigma_0$ are exactly the solutions $\omega$ to the complexified Moran equation 
\[ 1 = \sum_{i=1}^m a_i \lambda_i^\omega \]
where also $h(\omega)\neq 0$; see Theorem~\ref{thm:cdimsFromSFE} and Corollary~\ref{cor:exactCdims}.

We conclude by returning to the main class of examples in this work, generalized von Koch snowflakes. These are described in Section~\ref{sec:GKFs} and see Figure~\ref{fig:threeGKFs} for some examples. One way to construct such fractals is to begin with a regular $n$-gon. Then, for each edge between vertices, replace the middle $r\nth$ portion with the $n-1$ edges of a smaller regular $n$-gon to construct the first prefractal approximation. Repeating this process yields the fractal $\Knr$, though to be precise we define these fractals through iterated function systems. We deduce in Theorem~\ref{thm:cDimsOfGKFs} that the possible complex dimensions of $\Knr$ (with positive real part) are solutions $\omega$ to the equation 
\[ 1 = 2\Big( \frac{1-r}2 \Big)^\omega + (n-1)r^\omega. \]
Furthermore, we show that the $k\nth$ antiderivative $V_{\Knr,\Omega}^{[k]}$ of the tube zeta functions of $\Knr$ relative to the interior region $\Omega$ is defines takes the form 
\[ V_{\Knr,\Omega}^{[k]}(t) = \sum_{\mathclap{\omega\in\Dd_{\Knr,\Omega}(\HH_0)}} \Res\Bigg( \frac{t^{2-s+k}}{(3-s)_k} \tubezeta_{\Knr,\Omega}(s;\delta);\omega \Bigg) + O(t^{2-\alpha+k}),  \]
where we refer the reader to Theorem~\ref{thm:GKFtubeFormula} and Section~\ref{sec:appToGKFs} for the appropriate definitions and constraints. If the poles of the function $\tubezeta_{\Knr,\Omega}(s;\delta)$ are simple, then this formula takes the form:
\[ V_{\Knr,\Omega}^{[k]}(t) = \sum_{\mathclap{\omega\in\Dd_{\Knr,\Omega}(\HH_0)}}  \Res(\tubezeta_{\Knr,\Omega}(s;\delta);\omega)\frac{\omega^{2-s+k}}{(3-\omega)_k} + O(t^{2-\alpha+k}). \]
In both of these formulae, $(s)_k$ is the Pochhammer symbol defined by 
\[ (s)_k := \Gamma(s+k)/\Gamma(s) = s(s+1)(s+2)\cdots (s+k-1) \] 
and $\alpha>0$ is any (small) positive alteration to the exponent. 

Of note, in the case of generalized von Koch fractals we see the appearance of the lattice-non lattice dichotomy: the geometry of the fractal is fundamentally different depending on the arithmetic properties of the scaling ratios. In studying this problem through the lens of fractal zeta functions, we are able to understand this behavior by way of the structure of the complex dimensions of the fractal. See for example Figure~\ref{fig:cDimPlots2D}, where the standard von Koch snowflake is arithmetic while the ``squareflake" and ``pentaflake" fractals depicted are non-arithmetic. For the former, the tube function will have an oscillatory leading term, but for the other two the tube formula will have a monotone leading order term and oscillatory effects at lower orders.

\section{Generalized von Koch Fractals}
\label{sec:GKFs}
%
%

What we shall dub generalized von Koch fractals (abbr. GKFs) are fractals in which regular polygons other than triangles are used in the construction of the snowflake and/or those in which the scaling ratio chosen is a factor other than one third. This provides a two-parameter family of fractal curves, where $n\geq 3$ is the number of sides of the regular polygon and $0<r<1$ is the scaling ratio for the middle segments. 

\subsection{Generalized von Koch Fractal Curves}
To define an $(n,r)$-von Koch curve, we shall employ the notion of an iterated function system, which were introduced by Hutchinson in \cite{Hut81}. Hutchinson in particular proved that iterated function systems induce a contraction on the space of nonempty, compact subsets of $\RR^\dimension$ equipped with the Hausdorff metric, whence by the Picard-Banach fixed point theorem there is a unique fixed point of the system (which is nonempty and compact.) This fixed point, or equivalently the attractor of the system, is a convenient way to define many fractals \cite{Fal90_FG,Bar88}. 

We shall actually use a slightly more restrictive type of iterated function system, namely a self-similar system, whose definition we provide below.
\begin{definition}[Self-Similar System]
    A self-similar system $\Phi$ on a complete metric space $(X,d)$ is a finite set of contractive similitudes:
    \[ \Phi := \set{\phi_k:X\to X}_{k=1}^m, \]
    where for each $k=1,...,m$ and for every $x,y\in X$,
    \[ d(\phi_k(x),\phi_k(y)) = r_k d(x,y), \]
    where $r_k\in(0,1)$ is the scaling ratio of $\phi_k$.    
\end{definition}
Note that an equivalent phrasing of the above statement is that a self-similar system is an iterated function system where each contraction mapping is, a fortiori, a similitude. 

Similitudes in Euclidean spaces are compositions of translations, rotations, reflections, and homotheties. In order to explicitly write self-similar systems in what follows, we introduce the following translation, rotation, and homothety/scaling transformations of $\RR^2$: 
\begin{equation}
    \label{eqn:transforms}
    \begin{aligned}
        T_{(a,b)}(x,y)  &:= (x+a,y+b),                                              &(a,b)  &\in\RR^2;  \\
        R_\theta(x,y)   &:= (x\cos\theta-y\sin\theta, x\sin\theta+y\cos\theta),     &\theta &\in\RR;    \\
        S_\lambda(x,y)  &:= (\lambda x, \lambda y),                                 &\lambda&\in\RR^+. 
    \end{aligned}
\end{equation}
Note that the scaling ratio of $S_\lambda$ is $\lambda$, and that translations and rotations are isometries, viz. their scaling ratios are equal to one.

Lastly, we shall also need some angles related to a regular $n$-gon. Let us define $\theta_n=\frac{2\pi}n$ to be the exterior angle (or equivalently the central angle) and $\alpha_n=\pi-\frac{2\pi}n$ to be the interior angle of a regular $n$-gon. See Figure~\ref{fig:ngonAngles} for a depiction of these angles on a hexagon.
\begin{figure}[t]
    \centering
    \includegraphics[width=0.4\linewidth]{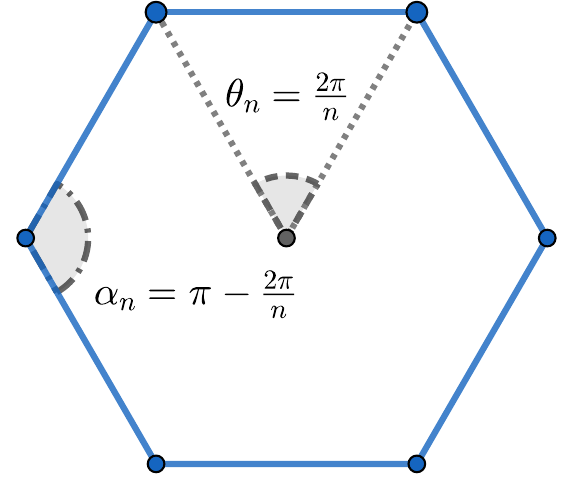}
    \caption{A depiction of the central angle $\theta_n=2\pi/n$ and the interior angle $\alpha_n=\pi-2\pi/n$ of a regular $n$-gon, illustrated on a hexagon where $n=6$.}
    \label{fig:ngonAngles}
\end{figure}

With all of this geometric information, we may now explicitly write a self-similar system whose attractor will be an $(n,r)$-von Koch curve. We shall also allow for sets which are isometric to such an attractor to be considered generalized von Koch curves as well. Let the mappings be defined as follows:
\begin{equation}
    \label{eqn:vkCurveMaps}
    \begin{aligned}
        \phi_L      &:= S_\ell, \\
        \phi_R      &:= T_{(\ell+r,0)}\circ S_\ell, \\
        \psi_1      &:= T_{(\ell,0)}\circ R_{\alpha_n} \circ S_r, \\
        \psi_k      &:= T_{\psi_{k-1}(1,0)}\circ R_{\alpha_n-(k-1)\theta_n}\circ S_r,\quad k>1. 
    \end{aligned}
\end{equation}
Note that $T_{(a,b)},S_\lambda,$ and $R_\theta$ are transformations of $\RR^2$ as defined in Equation~\ref{eqn:transforms} and $\theta_n$ and $\alpha_n$ are, respectively, the central and interior angles of a regular $n$-gon such as those depicted in Figure~\ref{fig:ngonAngles} for $n=6$.
\begin{definition}[$(n,r)$-von Koch Curve]
    \label{def:vkCurve}
    Let $n\geq 3$ be an integer, let $r\in\RR$ satisfy $0<r<1$, and let $\ell=\frac{1-r}2$.  

    A set $\vkCurve\subset\RR^2$ is said to be an $(n,r)$\textbf{-von Koch curve} if it is isometric to the set $\Cnr\subset\RR^2$ which is the unique, nonempty, compact fixed point associated to the following self-similar system on $\RR^2$:
    \begin{align*}
        \Phi_{n,r}  &:= \set{ \phi_L, \phi_R, \psi_k:\RR^2\to\RR^2, k=1,...,n-1 }, 
    \end{align*}   
    where the mappings $\phi_L,$ $\phi_R,$ and $\psi_k$ (for $k=1$ through $n-1$) are defined as in Equation~\ref{eqn:vkCurveMaps}.
\end{definition}
In other words, we may write that 
    \[ \vkCurve \cong \Cnr = \bigcup_{\phi\in\Phi_{n,r}}\phi[\Cnr]. \]
If we denote by $F:\RR^2\to\RR^2$ the isometry that sends $\Cnr$ to $\vkCurve$, then we may write a self-similar system for $\vkCurve$ using:
\[ F\circ \Phi_{n,r} := \set{F\circ \phi : \phi\in \Phi_{n,r}}. \]
Note that there can be more than one iterated function system that generates the same set $\Cnr$ or $\vkCurve$.

An algorithmic approach to constructing the curve is given by iterating the system $\Phi_{n,r}$ on the unit interval $[0,1]\times\set{0}$. The first step removes the the middle $r\nth$ piece of the interval on the $x$-axis and adjoins the $n-1$ other sides of a regular $n$-gon with length $r$. Each successive step repeats this process on every line segment, again removing the middle $r\nth$ portion of the line and adjoining the edges of a polygon whose side length is $r$ times that of the line segment's length. The regular polygon is always added with the same orientation with respect to the line segment. 

Given $n$ total $(n,r)$-von Koch curves, a generalized von Koch snowflake is simply the union of these curves placed about the edges of a regular $n$-gon (of side length one). For example, the leftmost fractal in Figure~\ref{fig:threeGKFs} is a union of three copies of the left curve in Figure~\ref{fig:vKCurves} and the middle fractal in Figure~\ref{fig:threeGKFs} is a union of four copies of the right curve in Figure~\ref{fig:vKCurves}. So, $(3,\frac13)$-von Koch snowflake is the ``ordinary'' von Koch snowflake depicted leftmost in Figure~\ref{fig:threeGKFs}. Additionally, this figure shows a ``squareflake'' (the $(4,\frac14)$-von Koch snowflake) and a ``pentaflake'' (the $(5,\frac15)$-von Koch snowflake), which are generalized snowflake fractals with fourfold and fivefold symmetry, respectively. The general definition is as follows.
\begin{definition}[$(n,r)$-von Koch Snowflake]
    \label{def:vkSnow}
    Let $n\geq 3$ be an integer, let $r\in \RR$ with $0<r<1$, and let $\Cnr$ be an $(n,r)$-von Koch curve with endpoints $(0,0)$ and $(1,0)$.

    A set $\vkSnow\subset\RR^2$ is an $(n,r)$\textbf{-von Koch snowflake} if it is isometric to the set
    \begin{align*}
        \Knr = \bigcup_{k=1}^n U_k[\Cnr],
    \end{align*} 
    where $U_1:= R_{\theta_n}\circ T_{(1,0)}$ and $U_k := R_{k \theta_n}\circ T_{U_{k-1}(1,0)}$ for $k>1$. Note that $T_{(a,b)}$ and $R_\theta$ are transformations of $\RR^2$ defined in Equation~\ref{eqn:transforms}.
\end{definition}
Equivalently, one could construct an explicit self-similar system for $\Knr$ using the maps of $\Cnr$ composed with the isometries $U_k$, $k=1,...,n$ and defined $\Knr$ to be the unique attractor of this system. As later in the paper we shall use symmetry to focus on the portion of this fractal closest to a single curve, this distinction will not be necessary.

\begin{figure}
    \centering
    \subfloat{\includegraphics[trim= 60 0 60 0, clip, width=3cm]{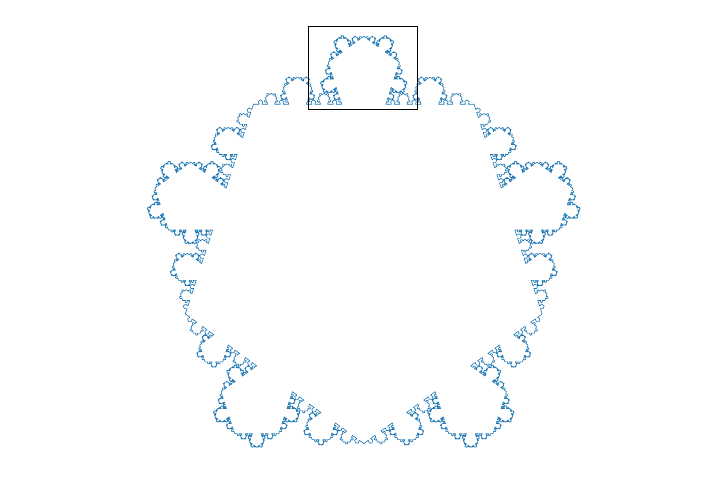}}
    \qquad
    \subfloat{\includegraphics[width=3cm]{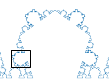}}
    \qquad
    \subfloat{\includegraphics[width=3cm]{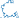}}
    \caption{A depiction of $K_{5,\frac15}$ (at the fourth stage of the prefractal approximation) together with two zoomed-in images of the pentagonal frills.}
    \label{fig:pentaflakeZoom}
\end{figure}

For a closer look at these fractal snowflakes, note that Figure~\ref{fig:pentaflakeZoom} depicts a prefractal approximation of the pentaflake $K_{5,\frac15}$ with two extra levels of zoom onto one of the fringes, which can be seen to be pentagons.

Additionally, it is of note that we are identifying these von Koch snowflakes as unions of curves. One might instead define a snowflake to be the region(s) enclosed by the union of $(n,r)$-von Koch curves, in which case what we call the snowflakes here would be the boundary of this set. In this setting, it would be suitable to call $\Knr$ an $(n,r)$-von Koch \textit{snowflake boundary} or a \textit{snowflake curve} to be unambiguous. In this work, the distinction shall not be necessary.

The author first encountered these in the work of M. van den Berg and his collaborators on asymptotics of heat content of GKFs such as in \cite{vdBHol99}. These have been studied by other authors such as Paquette and Keleti \cite{KelPaq10}, who in particular describe when such GKFs are topologically simple curves or not. They established the condition stated in Proposition~\ref{prop:selfAvoid} for self-avoidance of the curve. Figure~\ref{fig:intersectingGKFs} depicts three $(6,r)$-von Koch snowflakes with decreasing values of $r$; the boundary of the curve may intersect if $r$ is large but cannot when $r$ is sufficiently small.
\begin{proposition}[Self-avoidance of GKFs \cite{KelPaq10}]
    \label{prop:selfAvoid}
    An $(n,r)$-von Koch curve is non-self-intersecting if the scaling ratio $r>0$ satisfies the following:
    \begin{align*} 
        r &< \frac{\sin^2(\pi/n)}{\cos^2(\pi/n)+1}, &&\text{ if }n\text{ is even, and }\\
        r &< 1-\cos(\pi/n),                         &&\text{ if }n\text{ is odd.}
    \end{align*}
    The boundary of the corresponding $(n,r)$-von Koch snowflake $\Knr$ is topologically simple under these conditions.
\end{proposition}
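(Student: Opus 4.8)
The plan is to prove simplicity of the curve first and then deduce the snowflake statement as a corollary of the polygon arrangement. Write $\Phi_{n,r} = \set{\phi_1, \ldots, \phi_{n+1}}$ with $\phi_1 = \phi_L$, $\phi_{k+1} = \psi_k$ for $1 \le k \le n-1$, and $\phi_{n+1} = \phi_R$, ordered so that the maps form a chain: one checks directly from Equation~\ref{eqn:vkCurveMaps} that $\phi_j(1,0) = \phi_{j+1}(0,0)$ for every $j$ (using $\psi_k(0,0)=\psi_{k-1}(1,0)$ and $\psi_{n-1}(1,0)=(\ell+r,0)=\phi_R(0,0)$), together with $\phi_1(0,0) = (0,0)$ and $\phi_{n+1}(1,0) = (1,0)$ since $2\ell+r=1$. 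Thus $\Cnr$ is a chain of $n+1$ similar copies of itself running from $(0,0)$ to $(1,0)$, each consecutive pair sharing a single prescribed connection point.

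The reduction I would use is the standard criterion for a self-similar set to be an arc (in the spirit of Hata's structure theorem for self-similar sets): if the level-one pieces satisfy $\phi_j[\Cnr] \cap \phi_{j+1}[\Cnr] = \set{\phi_j(1,0)}$ for consecutive indices and $\phi_i[\Cnr] \cap \phi_j[\Cnr] = \emptyset$ whenever $|i-j|\ge 2$, then the address map $[0,1] \to \Cnr$ is a homeomorphism, so $\Cnr$ is a simple arc; being isometric to it, so is every $\vkCurve$. Since $\Cnr$ itself appears inside each intersection condition, I would break the circularity by enclosing $\Cnr$ in an explicit bounding region $B$ --- a wedge or triangle with its two base corners at $(0,0)$ and $(1,0)$ and apex angles chosen so that $\phi_j[B]$ and $\phi_{j+1}[B]$ meet only at the shared corner. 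Because the $\phi_j$ are similitudes, $\phi_j[B]$ is a copy of $B$ up to scale, and the two intersection conditions reduce to a finite system of explicit inequalities in $n$ and $r$.

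The heart of the argument is then to identify the binding inequality --- the configuration in which two non-adjacent pieces first come into contact as $r$ increases --- and this is where the parity of $n$ enters. Computing the direction of the $k\nth$ edge of the frill as $\pi - k\theta_n$ shows that the middle edge ($k=n/2$) is horizontal when $n$ is even, so the frill has a flat top, whereas for odd $n$ the frill terminates in a single apex vertex; the symmetric pair of extremal pieces that approach each other is therefore different in the two cases. Carrying out the closest-approach computation for the extremal pair should yield the threshold $r < \frac{\sin^2(\pi/n)}{\cos^2(\pi/n)+1}$ in the even case and $r < 1 - \cos(\pi/n)$ in the odd case, and self-similarity guarantees that once the level-one pieces are separated in this sense no deeper-level collision can occur, so these conditions are sufficient at every scale.

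Finally, for the snowflake I would observe that $\Knr = \bigcup_{k=1}^n U_k[\Cnr]$ is built from $n$ isometric copies of the (now simple) curve placed on the edges of a regular $n$-gon, consecutive copies sharing only a polygon vertex and non-consecutive copies being separated by the polygon's convex geometry; hence the union is a simple closed curve and $\Knr$ is topologically simple. The main obstacle I anticipate is the bounding-region step: choosing $B$ tight enough that consecutive pieces meet only at the connection point yet loose enough to contain $\Cnr$, and correctly pinning down the extremal pair in each parity class so that the resulting inequality is exactly the stated threshold rather than a strictly weaker sufficient condition.
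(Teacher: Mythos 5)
First, note what you are being compared against: the paper does not prove Proposition~\ref{prop:selfAvoid} at all. The statement is imported verbatim from Keleti and Paquette \cite{KelPaq10}, and the surrounding text explicitly attributes the condition to them. So any argument you give is necessarily independent of the paper; the only question is whether your sketch stands as a proof on its own, and it does not yet.

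The skeleton is reasonable: the chain decomposition of $\Phi_{n,r}$, a Hata-type criterion reducing simplicity of the attractor to level-one intersection conditions, and bounding regions to break the circularity are all standard and workable, and your parity observation (flat top of the frill for even $n$, apex vertex for odd $n$) does identify the real source of the two different thresholds. But the entire mathematical content of the proposition is the pair of exact constants $\sin^2(\pi/n)/(\cos^2(\pi/n)+1)$ and $1-\cos(\pi/n)$, and your proposal defers precisely that step: you never identify which pair of pieces is extremal, never set up the closest-approach computation, and never derive either inequality --- you only assert that the computation ``should yield'' them, and you flag this yourself as the anticipated obstacle. Two further points are also unresolved. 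First, your bounding region $B$ must be forward-invariant under $\Phi_{n,r}$, i.e. $\bigcup_j \phi_j[B]\subseteq B$, for the containment $\Cnr\subseteq B$ to hold at all; this invariance constraint competes directly with the tightness you need for the separation inequalities, and this tension is exactly where such arguments live or die. Second, the snowflake step cannot be waved through with ``the polygon's convex geometry'': the curves bulge outside the polygon, so separating copies mounted on distinct edges requires its own bounding-wedge argument of the same kind, not an appeal to convexity of a region the curves do not lie in. As it stands the proposal is a plausible program rather than a proof; to complete it you would either carry out the extremal-pair computation in both parity cases, or simply cite \cite{KelPaq10} as the paper does.
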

In particular, this proposition implies that for each $n\geq 3$, there is an interval of admissable scaling ratios, namely $(0,r_0)$. The condition in Proposition~\ref{prop:selfAvoid} is sufficient but not necessary, as the fractals can interleave for certain values of $n$ and $r$. We refer the interested reader to Keleti and Paquette's paper \cite{KelPaq10} for more information. Of note, $n=6$ is the first value for which $1/n$ exceeds the value of $r_0$ provided in the proposition. See Figure~\ref{fig:intersectingGKFs} for a depiction of $K_{6,\frac16}$.

\begin{figure}[t]
    \centering
    \subfloat{\includegraphics[trim=80 0 80 0,clip,width=3cm]{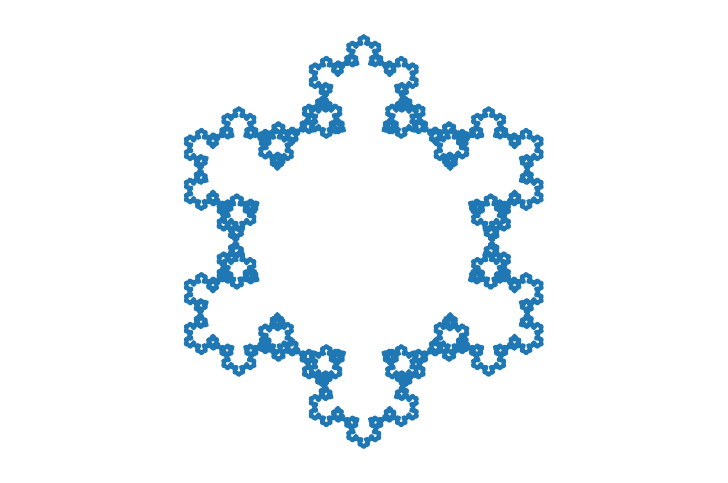}}
    \qquad
    \subfloat{\includegraphics[trim=80 0 80 0,clip,width=3cm]{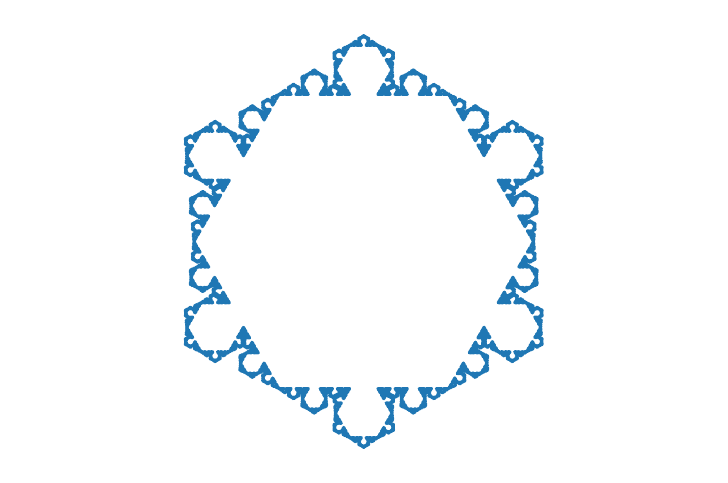}}
    \qquad
    \subfloat{\includegraphics[trim=80 0 80 0,clip,width=3cm]{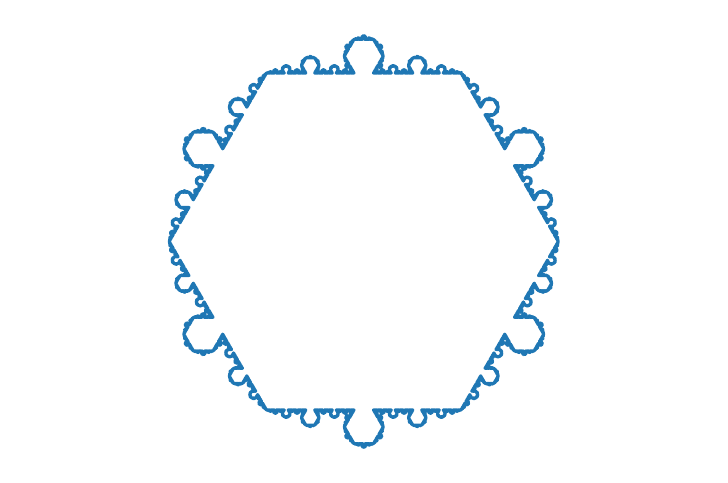}}
    \caption{Three $(6,r)$-von Koch snowflakes, with values of $r$ equal to $0.3$, $\frac16$, and $0.1$ respectively from left to right. The fractal curve is topologically simple if $r<1-\frac{\sqrt{3}}2$, such as with the rightmost figure. For the other two fractals, depicted in the middle and on the left, the curves are self-intersecting.}
    \label{fig:intersectingGKFs}
\end{figure}

\section{Tubular Neighborhoods and Zeta Functions}
\label{sec:tubesAndZetas}
%
%

Tube zeta functions were introduced in \cite{LRZ17_FZF} to study fractals in higher dimensions and to extend the theory of complex dimensions (see Definition~\ref{def:cDims}) to such fractals. These tube zeta functions are constructed from tube functions, which are the volumes of tubular neighborhoods of the given set. The tube zeta functions are essentially restricted Mellin transforms of this volume function, and thus, they possess important scaling properties (c.f. Lemma~\ref{lem:zetaScaling}.)  

\subsection{Tube Functions and their Properties}
Let $\dimension$ denote a given Euclidean dimension. Given a set $X\subset\RR^\dimension$, let $X_\e$ denote an epsilon neighborhood of $X$, that is, 
\[ X_\e := \{ y\in\RR^\dimension : \exists x\in X,\,|y-x|<\e \}. \]
Note that even if $X$ itself is not Lebesgue measurable, $X_\e$ is necessarily measurable as it is open. For example, it may be written as a union of open sets $B_\e(x)$ for $x\in X$, where $B_\e(x)$ is an open ball of radius $\e$ centered at $x$.

In what follows, we will be considering these tubular neighborhoods relative to some other set $\Omega\subset\RR^\dimension$. To that end, we recall the notion of a relative fractal drum \cite{LRZ17_FZF}. 
\begin{definition}[Relative Fractal Drum]
    \label{def:RFD}
    Let $X,\Omega\subset\RR^\dimension$ and suppose further that $\Omega$ is open, has finite Lebesgue measure, and has the property that $\exists\delta>0$ such that $\Omega_\delta\supset X$. Then the pair $(X,\Omega)$ is called a relative fractal drum, or RFD for short. 
\end{definition}

The tube function of an RFD $(X,\Omega)$ shall be the volume of the tubular neighborhood of $X$ contained within the set $\Omega$. Notably, such tube functions may be defined for any subset of $\RR^\dimension$ so long as the set $\Omega$ relative to which the volume is computed is Lebesgue measurable.
\begin{definition}[Relative Tube Function of a Set]
    \label{def:tubeFunction}
    Let $X\subset\RR^\dimension$, let $\Omega\subset\RR^\dimension$ be a Lebesgue measurable set, and denote by $m$ the $\dimension$-dimensional Lebesgue measure. The tube function $V_{X,\Omega}$ of $X$ relative to $\Omega$ is the function $V_{X,\Omega}(\e):=m(X_\e\cap \Omega)$, defined for $\e\geq 0$.
\end{definition}
When the set $\Omega$ is clear from context or is all of $\RR^\dimension$, we may suppress it in the notation. Any such function $V_{X,\Omega}$ is continuous and non-decreasing on its domain, and it is finite when $X$ or $\Omega$ is a bounded set.

Note that the Lebesgue measure $m$ has the scaling property that, for any $\lambda\in\RR^+$, $m(\lambda X) = \lambda^\dimension m(X)$. When a tubular neighborhood $X_\e$ is scaled, the parameter $\e$ defining the new scaled neighborhood will scale linearly. In other words, $\lambda\cdot(X_\e)=(\lambda X)_{\lambda \e}$. Combining these properties, we may deduce how tube functions change under scaling. In the simplest case, where $\Omega=\RR^\dimension$ and we omit it, we have that $V_{\lambda X}(\lambda\e) = \lambda^\dimension V_{X}(\e)$. We state and prove the following property for relative tube functions. 
\begin{lemma}[Scaling Property of Relative Tube Functions] 
    \label{lem:volumeScaling}
    Let $X\subset\RR^\dimension$ and let $\Omega\subset \RR^\dimension$ be measurable. For any $\e\geq 0$ and $\lambda>0$, we have that
    \[ V_{\lambda X,\lambda\Omega}(\e) = \lambda^\dimension V_{X,\Omega}(\e/\lambda). \]
    More generally, if $\phi:\RR^\dimension\to\RR^\dimension$ is a similitude with scaling ratio $\lambda$, then we have the analogous identity:
    \[ V_{\phi(X),\phi(\Omega)}(\e) = \lambda^\dimension V_{X,\Omega}(\e/\lambda). \]
\end{lemma}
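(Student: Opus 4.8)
The plan is to reduce everything to the two elementary scaling facts already recorded before the statement: that Lebesgue measure satisfies $m(\lambda A)=\lambda^\dimension m(A)$, and that neighborhoods scale via $\lambda\cdot(X_\e)=(\lambda X)_{\lambda\e}$. The first assertion follows by a short chain of set-theoretic identities, and the general similitude case is then obtained from the first by peeling off an isometry.

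First I would establish the special case $V_{\lambda X,\lambda\Omega}(\e)=\lambda^\dimension V_{X,\Omega}(\e/\lambda)$. Starting from the definition $V_{\lambda X,\lambda\Omega}(\e)=m\big((\lambda X)_\e\cap\lambda\Omega\big)$, I would rewrite $(\lambda X)_\e$ using the neighborhood-scaling identity with $\e$ replaced by $\e/\lambda$, which gives $(\lambda X)_\e=\lambda\cdot(X_{\e/\lambda})$. Since scaling by $\lambda>0$ is a bijection of $\RR^\dimension$, it commutes with intersection, so $\lambda\cdot(X_{\e/\lambda})\cap\lambda\Omega=\lambda\cdot(X_{\e/\lambda}\cap\Omega)$. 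Applying the measure-scaling property then yields $\lambda^\dimension m(X_{\e/\lambda}\cap\Omega)=\lambda^\dimension V_{X,\Omega}(\e/\lambda)$, as desired. Along the way I would note that $X_{\e/\lambda}$ is open, hence measurable, so that its intersection with the measurable set $\Omega$ is measurable, and that the image of a measurable set under scaling is again measurable, so every quantity written is well defined.

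For the general similitude I would decompose $\phi$ as $\phi=g\circ S_\lambda$, where $S_\lambda$ is the homothety of ratio $\lambda$ and $g$ is an isometry, concretely $\phi(x)=\lambda O x+b=(T_b\circ O\circ S_\lambda)(x)$ with $O$ orthogonal. The two properties of $g$ that I would invoke are that isometries preserve $\e$-neighborhoods, i.e. $g(X_\e)=(g(X))_\e$, and preserve Lebesgue measure, i.e. $m(g(A))=m(A)$. Then $\phi(X)=g(\lambda X)$, so $(\phi(X))_\e=g\big((\lambda X)_\e\big)$; intersecting with $\phi(\Omega)=g(\lambda\Omega)$ and using that $g$ is a bijection gives $g\big((\lambda X)_\e\cap\lambda\Omega\big)$, whose measure equals $m\big((\lambda X)_\e\cap\lambda\Omega\big)=V_{\lambda X,\lambda\Omega}(\e)$. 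Invoking the special case already proved then finishes the computation.

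The only genuinely delicate points, and the ones I would state cleanly rather than the routine manipulations, are the two geometric identities $g(X_\e)=(g(X))_\e$ and $\lambda\cdot(X_\e)=(\lambda X)_{\lambda\e}$. Both follow immediately from the definition of $X_\e$ together with the distance behavior of the maps — isometries preserving distances and homotheties multiplying them by $\lambda$ — so I do not expect a real obstacle; the proof is essentially careful bookkeeping of how the neighborhood radius and the measure transform under the two factors of the map.
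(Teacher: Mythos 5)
Your proposal is correct and follows essentially the same route as the paper's proof: both decompose the similitude into an isometry composed with a homothety, then combine isometry-invariance of Lebesgue measure and of $\e$-neighborhoods with the scaling identities $m(\lambda A)=\lambda^\dimension m(A)$ and $\lambda\cdot(X_\e)=(\lambda X)_{\lambda\e}$. The only difference is organizational — you prove the pure-scaling case first and reduce the general similitude to it via $\phi=g\circ S_\lambda$, whereas the paper argues the general case directly (with a three-factor decomposition $\widetilde\psi\circ\phi_\lambda\circ\psi'$) and obtains the scaling case as a specialization — which does not change the substance of the argument.
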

\begin{proof}
    Let $\phi=\widetilde\psi\circ\phi_\lambda\circ\psi'$, where $\psi',\widetilde\psi$ are compositions of rotations, reflections, and translations and where $\phi_\lambda$ is a scaling transformation, viz. $\phi_\lambda(x)=\lambda x$. Any similitude of scaling ratio $\lambda$, by definition, may be written this way. Note that the first identity is a special case of the second property; merely set $\widetilde\psi=\psi'=\text{Id}$. For convenience, let us denote $U'=\psi'(U)$ in what follows, and note that $\lambda U = \phi_\lambda(U)$.
    
    The maps in question have lots of nice properties. First, we note that $\psi'$ and $\widetilde\psi$ are isometries, which is to say that $m(U')=m(U)$ and $m(\tilde\psi(U))=m(U)$ for any $U\subset\RR^\dimension$. Since $\widetilde\psi,\psi',\phi_\lambda,$ and their composition $\phi$ are injective, they each have the property that the intersection of the images of two sets is the image of the intersection of those sets.
    The same property is true (unconditionally) for the union of images being the image of the union. 

    This lattermost fact is one way to see that $\widetilde\psi(X_\e)=(\widetilde\psi(X))_\e$. One may write the neighborhood $X_\e$ as a union of $\e$-balls, and then apply the preservation of unions under images to see that this set is exactly the neighborhood of the transformed set $\widetilde\psi(X)$. 
    
    Using this fact and the above properties, we find that the tube functions are unaffected by an injective isometry such as $\widetilde\psi$:
    \begin{align*}
        V_{\widetilde\psi(\lambda X'),\widetilde\psi(\lambda \Omega')}(\e) 
            &= m(\widetilde\psi((\lambda X')_\e)\cap\widetilde\psi(\lambda\Omega')) \\
            &= m(\widetilde\psi[(\lambda X')_\e\cap\lambda\Omega'])
                =V_{\lambda X',\lambda\Omega'}(\e).
    \end{align*}
    As mentioned before, we have that $\phi_\lambda(X_\e)=(\lambda X)_{\lambda\e}$. Writing $\e=\lambda t$, we obtain that:
    \begin{align*}
        V_{\lambda X',\lambda\Omega'}(\lambda t) = m(\phi_\lambda(X'_t\cap\Omega')) =\lambda^N V_{X',\Omega'}(t),
    \end{align*}
    where the last step is the scaling property of the Lebesgue measure. We have already shown that for an injective isometry like $\psi'$, $V_{\psi'(X),\psi'(\Omega)}=V_{X,\Omega}$. Thus, the result follows by writing $t=\e/\lambda$ and combining these equalities. 
\end{proof}

\subsection{Tube Zeta Functions and Complex Dimensions}
Tube zeta functions were introduced in \cite{LRZ17_FZF} to study fractals in higher dimensions. They are defined as a restricted Mellin transform of the tube function of the corresponding set, introducing a cutoff value of $\delta>0$ to the upper bound of the usual Mellin transform.
\begin{definition}[Relative Tube Zeta Function]
    \label{def:tubeZeta}
    Let $(X,\Omega)$ be a relative fractal drum and let $\delta>0$. The relative tube zeta function $\tubezeta_{X,\Omega}$ of $X$ relative to $\Omega$ is given by
    \[ \tubezeta_{X,\Omega}(s;\delta) := \int_0^\delta t^{s-N-1}V_{X,\Omega}(t)\,dt, \]
    for $s\in\CC$ with sufficiently large real part. 
\end{definition}
As with the tube functions, when the set $\Omega$ is clear from context, we shall sometimes write $\tubezeta_X=\tubezeta_{X,\Omega}$ and omit the relative set. We shall also denote by $\tubezeta_{X,\Omega}$ the maximal meromorphic extension of the holomorphic function defined by the integral. 

Next, we observe that for any two $\delta_2>\delta_1>0$, we have that 
\begin{equation}
    \label{eqn:zetaDiff}
    \tubezeta_{X,\Omega}(s;\delta_2)-\tubezeta_{X,\Omega}(s;\delta_1)
        = \int_{\delta_1}^{\delta_2}t^{s-N-1}V_{X,\Omega}(t)\,dt. 
\end{equation}
Of note, this difference is a holomorphic function which possesses an entire analytic continuation \cite{LRZ17_FZF}. As such, the poles of a tube zeta function are not dependent on the choice of $\delta$. 

These poles of a tube zeta function are of importance to the geometry of the fractal. They are called complex (fractal) dimensions of the set $X$. Typically, one specifies a subset of the complex plane, or a ``window,'' which is usually the domain of meromorphicity of the function in $\CC$ or a half-plane contained therein.
\begin{definition}[Complex Dimensions of a Set]
    \label{def:cDims}
    Let $(X,\Omega)$ be a relative fractal drum and let $\tubezeta_{X,\Omega}$ be the relative tube zeta function of $X$.
    
    If $W\subset \CC$, then the complex dimensions of $X$ relative to $\Omega$ contained in the window $W$, denoted by $\Dd_X(W)$, are the poles of $\tubezeta_{X,\Omega}$ contained within $W$. 
\end{definition}
The complex dimensions control the exponents and the form of the tube formula for the set \cite{LapvFr13_FGCD,LRZ17_FZF}. See especially chapter five in \cite{LRZ17_FZF}.

Tube zeta functions are well suited to studying self-similar or nearly self-similar objects due to their scaling properties. These follow from the nature of the zeta function as a (restricted) Mellin transform on the space of positive scaling factors. The scaling relation is complicated only slightly by the truncation of the integral, as the cutoff value will change with the scaling. 
\begin{lemma}[Scaling Property of $\tubezeta_{X,\Omega}$]
    \label{lem:zetaScaling}
    Let $(X,\Omega)$ be a relative fractal drum in $\RR^\dimension$ and let $\lambda>0$. Then for all $s$ in its domain, the tube zeta function of $X$ relative to $\Omega$ satisfies the following:
    \[ \tubezeta_{\lambda X,\lambda\Omega}( s;\delta) = \lambda^{s} \tubezeta_{X,\Omega}(s;\delta/\lambda). \]
    Moreover, if $\phi:\RR^\dimension\to\RR^\dimension$ is a similitude of scaling ratio $\lambda>0$, then similarly we have that:
    \[ \tubezeta_{\phi(X),\phi(\Omega)}(s;\delta) = \lambda^s \tubezeta_{X,\Omega}(s;\delta/\lambda). \] 
\end{lemma}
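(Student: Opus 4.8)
The plan is to reduce both identities to a single change-of-variables computation built on the scaling property of relative tube functions already proved in Lemma~\ref{lem:volumeScaling}. Observe that the general similitude case of that lemma produces exactly the same relation $V_{\phi(X),\phi(\Omega)}(t)=\lambda^\dimension V_{X,\Omega}(t/\lambda)$ as the pure scaling case $V_{\lambda X,\lambda\Omega}(t)=\lambda^\dimension V_{X,\Omega}(t/\lambda)$, because the isometric factors of $\phi$ leave volumes unchanged. Consequently it suffices to establish the first identity; the second then follows verbatim, with $\lambda X$ and $\lambda\Omega$ replaced by $\phi(X)$ and $\phi(\Omega)$ throughout.

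First I would substitute the volume scaling relation into the defining integral from Definition~\ref{def:tubeZeta}, writing
\[ \tubezeta_{\lambda X,\lambda\Omega}(s;\delta) = \int_0^\delta t^{s-\dimension-1} V_{\lambda X,\lambda\Omega}(t)\,dt = \lambda^\dimension\int_0^\delta t^{s-\dimension-1} V_{X,\Omega}(t/\lambda)\,dt. \]
Next I would perform the substitution $u=t/\lambda$, so that $t=\lambda u$, $dt=\lambda\,du$, the upper limit $\delta$ becomes $\delta/\lambda$, and $t^{s-\dimension-1}=\lambda^{s-\dimension-1}u^{s-\dimension-1}$. Collecting the three powers of $\lambda$ that appear — namely $\lambda^\dimension$ from the volume scaling, $\lambda^{s-\dimension-1}$ from the integrand, and $\lambda$ from $dt$ — gives the clean total exponent $\dimension+(s-\dimension-1)+1=s$, leaving precisely $\lambda^s\int_0^{\delta/\lambda}u^{s-\dimension-1}V_{X,\Omega}(u)\,du=\lambda^s\tubezeta_{X,\Omega}(s;\delta/\lambda)$.

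The only point requiring any care is the domain of validity, and I do not expect it to be a genuine obstacle. The substitution is justified wherever the defining integral converges absolutely, i.e. on the common right half-plane where both $\tubezeta_{\lambda X,\lambda\Omega}(\cdot;\delta)$ and $\tubezeta_{X,\Omega}(\cdot;\delta/\lambda)$ are given by convergent integrals; there the identity holds pointwise. Since both sides are meromorphic in $s$ as restricted Mellin transforms admitting meromorphic continuation (as recalled in the discussion following Equation~\ref{eqn:zetaDiff}), the identity theorem then extends the equality to the full domain where either side is defined. The entire content of the proof is thus the exponent bookkeeping in the change of variables, with analytic continuation handling the passage from the half-plane of convergence to the meromorphic extension.
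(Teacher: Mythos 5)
Your proposal is correct and follows essentially the same route as the paper: both substitute the scaling relation from Lemma~\ref{lem:volumeScaling} into the defining integral of Definition~\ref{def:tubeZeta} and perform the change of variables $u=t/\lambda$, with the powers of $\lambda$ collecting to $\lambda^s$ and the upper limit becoming $\delta/\lambda$. The only cosmetic differences are that the paper carries out the computation directly for a general similitude (with the pure scaling case as a special instance) using the Haar measure form $\frac{dt}{t}$, whereas you prove the pure scaling case first and note the similitude case is verbatim identical; your closing remark on extending from the half-plane of convergence by analytic continuation is a harmless elaboration of what the paper leaves implicit.
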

We note that this result is essentially just Proposition~4.6.11 in \cite{LRZ17_FZF}, where the only change is that we draw explicit attention to the usage of similitudes. We include a proof since this result may be seen as a corollary of Lemma~\ref{lem:volumeScaling}. 
\begin{proof}
    Using Lemma~\ref{lem:volumeScaling} and a change of variables, we compute that
    \begin{align*}
        \tubezeta_{\phi(X),\phi(\Omega)}( s;\delta) 
            &= \int_0^{\delta} t^{s-N}\lambda^\dimension V_{X,\Omega}(t/\lambda)\,\frac{dt}{t} \\
            &= \int_0^{\delta/\lambda} (\lambda t)^{s-N}\lambda^\dimension V_{X,\Omega}(t)\,\frac{dt}{t} 
            = \lambda^{s} \tubezeta_{X,\Omega}(s;\delta/\lambda).
    \end{align*}
\end{proof}
Now, from Lemma~\ref{lem:zetaScaling} and Equation \ref{eqn:zetaDiff}, we obtain the functional relation:
\begin{equation}
    \label{eqn:zetaFunctionalEq1}
    \tubezeta_{\lambda X,\lambda\Omega}(s;\delta) = \lambda^s \tubezeta_{X,\Omega}(s;\delta) 
        + \lambda^s\int_{\delta}^{\delta/\lambda}t^{s-N-1}V_{X,\Omega}(t)\,dt.
\end{equation}
We shall introduce notation for this ``partial'' tube zeta function, as it encapsulates the effects the truncation on the scaling property of the tube zeta function.
\begin{definition}[Partial Tube Zeta Function]
    \label{def:partialTubeZeta}
    Let $(X,\Omega)$ be a relative fractal drum in $\RR^\dimension$, $V_{X,\Omega}$ its relative tube function, and $0< \delta_1<\delta_2$. We define a partial tube zeta function to be the following:
    \[ 
        \partialzeta_{X,\Omega}(s;\delta_1,\delta_2) 
            := \int_{\delta_1}^{\delta_2} t^{s-\dimension-1}V_{X,\Omega}(t)\,dt, 
    \] 
    for all $s\in\CC$ with sufficiently large real part.
\end{definition}
These partial tube zeta functions shall appear in scaling functional equations such as Equation~\ref{eqn:zetaFunctionalEq1}, and thus, we shall have need of estimating them. We note the following. Letting $\sigma=\Re(s)$, we have that:
\begin{align}
    \label{eqn:partialTubeEstimate}
    |\partialzeta_{X,\Omega}(s;\delta_1,\delta_2)|\leq 
        \begin{cases}
            V_{X,\Omega}(\delta_2) \cfrac{\delta_2^{\sigma-\dimension}-\delta_1^{\sigma-\dimension}}{\sigma-\dimension}, & \sigma\neq \dimension, \\
            V_{X,\Omega}(\delta_2) \log(\delta_2/\delta_1), & \sigma=\dimension.
        \end{cases}
\end{align}
Note that these bounds are enough to deduce that $\partialzeta_{X,\Omega}$ extends to an entire function in the complex variable $s\in\CC$, whence follows the result of Lapidus, Radunovi\'c, and \v Zubrini\'c on the independence of the poles of $\tubezeta_{X,\Omega}$ on the parameter $\delta$ in \cite{LRZ17_FZF}. We conclude by stating this as a lemma:
\begin{lemma}[Partial Tube Functions are Entire]
    \label{lem:partialTubeEntire}
    Let $\partialzeta_{X,\Omega}(s;\delta_1,\delta_2)$ be a partial tube zeta function of the relative fractal drum $(X,\Omega)$, with fixed positive parameters $\delta_1,\delta_2$. Then $\partialzeta_{X,\Omega}$ extends to an entire function in the variable $s$.
\end{lemma}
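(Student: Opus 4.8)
The plan is to exploit the fact that, unlike the full tube zeta function $\tubezeta_{X,\Omega}$, the partial tube zeta function integrates over the \emph{compact} interval $[\delta_1,\delta_2]$ with $\delta_1>0$, so the integrand has no singularity at the lower endpoint. Consequently there is no mechanism to produce poles, and the integral should define a holomorphic function on the whole plane. The strategy is the standard one for integrals depending holomorphically on a parameter: show that for each fixed $t$ the integrand is entire in $s$, that the integral converges locally uniformly in $s$, and then invoke Morera's theorem together with Fubini to transfer holomorphy from the integrand to the integral.

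Concretely, I would fix $\delta_1,\delta_2$ and proceed as follows. First, for each fixed $t\in[\delta_1,\delta_2]$, the map $s\mapsto t^{s-\dimension-1}V_{X,\Omega}(t)=e^{(s-\dimension-1)\log t}\,V_{X,\Omega}(t)$ is entire in $s$, being a constant multiple of an exponential. Second, the estimate~\eqref{eqn:partialTubeEstimate} already shows that $\partialzeta_{X,\Omega}(s;\delta_1,\delta_2)$ converges for every $s\in\CC$; more precisely, on any compact set $K\subset\CC$ the real part $\sigma=\Re(s)$ is bounded, $V_{X,\Omega}$ is bounded on $[\delta_1,\delta_2]$ (it is continuous and non-decreasing), and $|t^{s-\dimension-1}|=t^{\sigma-\dimension-1}$ is bounded on $[\delta_1,\delta_2]$ since $t$ is bounded away from $0$ and from $\infty$. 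Hence the integrand is bounded by a constant (depending on $K$) uniformly in $(s,t)\in K\times[\delta_1,\delta_2]$. Third, I would verify holomorphy by Morera's theorem: for any closed triangular contour $\gamma\subset\CC$, the uniform bound justifies Fubini, so that
\[
    \oint_\gamma \partialzeta_{X,\Omega}(s;\delta_1,\delta_2)\,ds
        = \int_{\delta_1}^{\delta_2}\!\Bigg(\oint_\gamma t^{s-\dimension-1}\,ds\Bigg)V_{X,\Omega}(t)\,dt = 0,
\]
the inner contour integral vanishing by Cauchy's theorem because $s\mapsto t^{s-\dimension-1}$ is entire. Continuity of $s\mapsto\partialzeta_{X,\Omega}$ follows from the joint continuity of the integrand and the dominating bound via dominated convergence, so Morera applies and yields that $\partialzeta_{X,\Omega}$ is holomorphic on all of $\CC$, i.e.\ entire.

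I do not expect a genuine obstacle here; the argument is essentially bookkeeping. The only point requiring care is the Fubini interchange, which is fully justified by the uniform bound above, and the conceptual crux worth emphasizing is simply that $\delta_1>0$ keeps the integrand away from the singular locus $t=0$. This is precisely the feature that upgrades the conclusion from ``holomorphic in a right half-plane'' (as for $\tubezeta_{X,\Omega}$ itself) to ``entire,'' and it is what makes the difference $\tubezeta_{X,\Omega}(s;\delta_2)-\tubezeta_{X,\Omega}(s;\delta_1)$ in~\eqref{eqn:zetaDiff} an entire function, thereby rendering the complex dimensions independent of the cutoff $\delta$.
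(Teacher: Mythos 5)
Your proof is correct and takes essentially the same route as the paper: the paper's justification is precisely the remark that the bounds in Equation~\ref{eqn:partialTubeEstimate} (finite for every $s\in\CC$, because $\delta_1>0$ keeps the integrand away from the singularity at $t=0$) suffice to conclude that $\partialzeta_{X,\Omega}$ is entire. Your Morera/Fubini argument with the locally uniform bound is just the standard bookkeeping that makes this remark rigorous.
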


\section{General Scaling Functional Equations}
\label{sec:generalSFEs}
%
%

In this section, we study scaling functional equations in a general setting. In particular, we show how they may be solved by means of (truncated) Mellin transforms. The approach is similar in spirit to solving functional equations by means of renewal theory, except in this case we use the Mellin transform which is natural for studying for the positive real line as a group with multiplication. We shall illustrate the ideas up through the use of the transform to create nicely solvable functional equations into the Mellin inversion theorem. Notably, we use a truncation of the Mellin transform as opposed to the standard Mellin transform. This is primarily because in the application to tube functions in Section~\ref{sec:fractalSFEs}, these are preferred. 

The positive real line $\RR^+$ (as a group with multiplication) may be thought of as the space of scaling factors, and its associated Haar measure $\frac{dt}{t}$ is invariant with respect to scaling transformations. We shall first treat scaling functional equations of functions on this space, and later specialize to tube and zeta functions in the next section. 

\subsection{Truncated Mellin Transforms}
Let $C^0(\RR^+)$ denote the space of continuous functions $f:\RR^+\to\RR$. The standard Mellin transform of $f$ is the integral $\Mm[f](s) = \int_0^\infty x^{s-1}f(x)\,dx.$ A truncated or restricted Mellin transform is simply an integral of the same integrand but over an interval which is a subset of $(0,\infty)$. 
\begin{definition}[Truncated Mellin Transform]
    \label{def:truncatedMellin}
    Let $f\in C^0(\RR^+,\RR)$ and fix $\alpha,\beta \geq 0$. The truncated Mellin transform of $f$, denoted by $\Mm_\alpha^\beta[f]$, is given by 
    \[ \Mm_\alpha^\beta[f](s) := \int_\alpha^\beta t^{s}f(t)\,\frac{dt}t \]
    for all $s\in\CC$ for which the integral is convergent. If $\alpha=0$, we write $\Mm^\beta$ for the truncated transform and if additionally $\beta=\infty$ then $\Mm=\Mm_0^\infty$ is the standard Mellin transform.
\end{definition}
Note that we may equivalently define $\Mm_\alpha^\beta[f]$ to be the Mellin transform of $f$ times the characteristic function of the interval $(\alpha,\beta)$, viz. 
\[ \Mm_\alpha^\beta[f] = \Mm[f\cdot\1_{[\alpha,\beta]}]. \] 
This allows us to compare the convergence of the two directly, and it shows that the truncated transform inherits the properties of its counterpart, most notably linearity. 

First in our discussion of convergence of such transforms, we note an immediate corollary of the alternate definition is that a restricted transform converges automatically if the Mellin transform converges. It is known (see for example Chapter 6 in \cite{Graf10}) that $\Mm[f](s)$ is holomorphic in the vertical strip $\sigma_-<\Re(s)<\sigma_+$, where 
\begin{align*}
    \sigma_- :=& \inf\set{\sigma : f(x) = O(x^{-\sigma}, x\to0^+)}\\
    \sigma_+ :=& \sup\set{\sigma : f(x) = O(x^{-\sigma}, x\to\infty)}.
\end{align*}
When $\alpha<\beta<\infty$, $f\cdot\1_{[\alpha,\beta]}\equiv 0$ as $x\to\infty$, whence $\sigma_+=\infty$. Similarly, if $0<\alpha<\beta$ then $f\cdot\1_{[\alpha,\beta]}\equiv 0$ as $x\to0^+$, whence $\sigma_-=-\infty$. So, if $0<\alpha<\beta<\infty$, the restricted transform is automatically entire, i.e. holomorphic in $\CC$. If $0=\alpha<\beta<\infty$ and $f=O(x^{\sigma_0})$, then $\Mm^\beta[f](s)$ is holomorphic in the half plane $\HH_{-\sigma_0}$, i.e. when $\Re(s)>-\sigma_0$.  

Notably, the restricted transform may converge even if the full Mellin transform integral diverges. Perhaps the most important class of functions for which this occurs is that of polynomials: if $f(t)=t^k$ and $\Re(s)>-k$, $\Mm^\beta[f](s)$ converges whilst $\Mm[f](s)$ is divergent for all $s\in\CC$. Thus, for a general polynomial $p(t)=\sum_{k=0}^n a_kt^k$, we have that 
\[ \Mm^\beta[p](s) = \sum_{k=0}^n \frac{a_k}{s+k}\beta^{s+k},\quad  \Re(s)>\max\set{-k: a_k\neq 0}. \]
Negative powers become admissable if the lower bound $\alpha$ of the truncation is positive. 

\subsection{Scaling Operators and Associated Zeta Functions}
Next, we define scaling operators which act on $C^0(\RR^+)$. To borrow terminology from statistical mechanics, we shall define pure and mixed scaling operators. A pure scaling operator $M_\lambda$, where $\lambda\in\RR^+$, shall act by precomposition by a scaling operator in the following fashion
\[ M_\lambda[f](x) := f(x/\lambda), \]
where this convention of inverting the scaling factor shall be convenient for our applications. A mixed scaling operator shall be a linear combination of such scaling operators. If we have that $L=\sum_{i=1}^m a_i M_{\lambda_i}$ is a (mixed) scaling operator, then it acts by:
\[ L[f](x) = \sum_{i=1}^m a_i f(x/\lambda_i). \]
We shall not require the combination to be convex; later, the only constraint we will add is that the multiplicities be positive and integral. For now, each $a_i\in\RR$. 

Given such a scaling operator $L$, we define an associated scaling zeta function. This function shall play a key role in describing solutions to scaling functional equations, owing to its relation to the Mellin transform of such functions and the role of its poles. 
\begin{definition}[Associated Scaling Zeta Function]
    \label{def:scalingOpZF}
    Let $L=\sum_{i=1}^m a_i M_{\lambda_i}$ be a scaling operator with $\lambda_i\in\RR^+$ and $a_i\in\RR$ for each $i=1,...,m$. 

    We define the zeta function $\zeta_L$ associated to $L$ to be
    \[ \zeta_L(s) := \cfrac1{1-\sum_{i=1}^m a_i \lambda_i^s}, \]
    for all $s\in \CC$ for which the expression converges. 
\end{definition}
In practice, we shall enforce that $0<\lambda_i<1$ and that $a_i>0$ for each $i$. Under these conditions, we obtain the characterization of such zeta functions in Proposition~\ref{prop:propertiesSZF}. Of note, the results are the exact same as those for self-similar fractal strings, c.f. Chapters~2 and 3 and in particular Theorem~3.6 in \cite{LapvFr13_FGCD}. See Figure~\ref{fig:cDimPlots} for two examples of such zeta functions which we will relate to von Koch fractals later in Section~\ref{sec:appToGKFs}.

To state these results, we shall define the similarity dimension of a scaling operator. 
\begin{definition}[Similarity Dimension of a Scaling Operator]
    \label{def:simDimScalingOp}
    Let $L$ be a scaling operator with scaling ratios $\lambda_i\in(0,1)$, each having positive multiplicities $a_i$. The similarity dimension $D\in\RR$ of $L$ is the unique real solution to the Moran scaling equation, that is $D$ solves the equation
    \begin{equation}
        \label{eqn:Moran}
        1 = \sum_{i=1}^m a_i \lambda_i^s. 
    \end{equation}
\end{definition}
If $\sum_{i=1}^m a_i >1$ and $m>2$, then $D$ is positive; this is automatically true if the multiplicities are integral and there are either at least two distinct scaling ratios or at least one scaling ratio with multiplicity larger than one. 

To see that Equation~\ref{eqn:Moran} has a unique real solution, define the function $p(s)=\sum_{i=1}^m a_i\lambda_i^s$. Then $p(0)=\sum_{i=1}^m a_i$, $\lim_{s\to-\infty}p(s)=+\infty$, and $\lim_{s\to\infty}p(s)=0$ since $0<\lambda_i<1$. Additionally, a simple calculation shows that $p'(s)<0$ for all $s\in\RR$ since $\log(\lambda_i)<0$ and $a_i>0$ for each $i$. Thus, it follows that $p(s)=1$ has exactly one solution for $s\in\RR$, and this solution is positive if $p(0)=\sum_{i=1}^m a_i >1$. This is guaranteed the multiplicities $a_i$ are integral and supposing that $L$ has at least two distinct scaling ratios or one scaling ratio with multiplicity of at least two.

\begin{figure}[t]
    \centering
    \subfloat{\includegraphics[width=4cm]{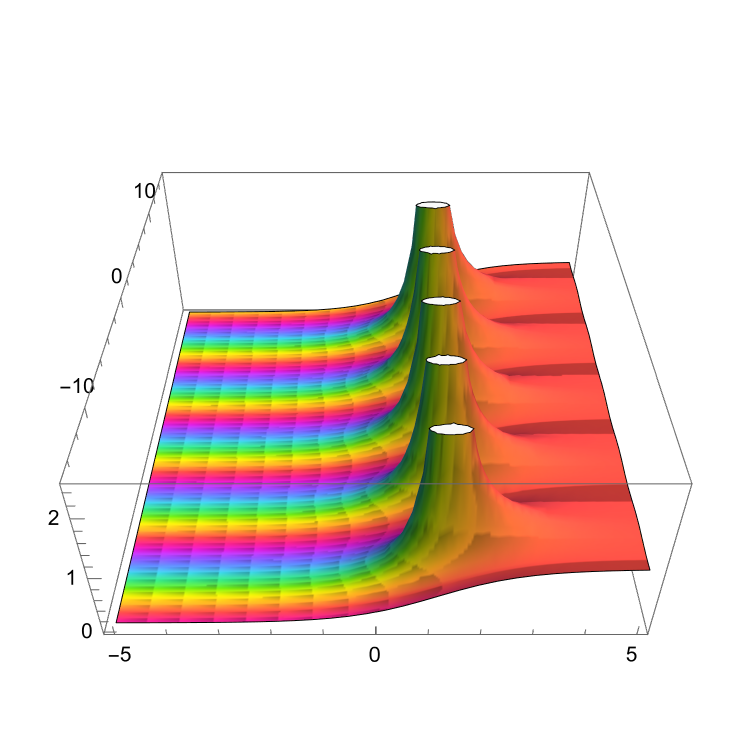}}
    \qquad
    \subfloat{\includegraphics[width=4cm]{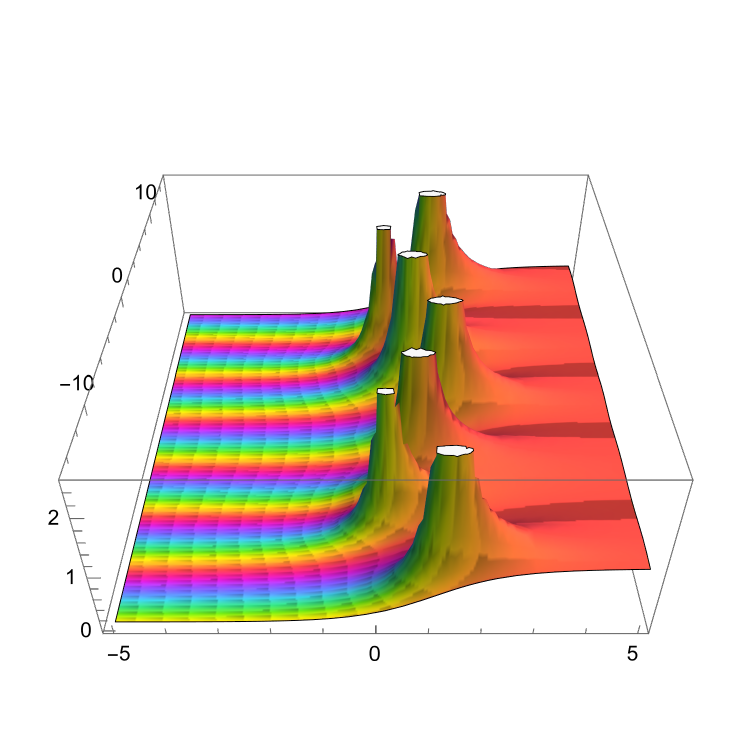}}
    \caption{Plots of associated scaling zeta functions of the scaling operator $L_{3,\frac13}:=4 M_{\frac13}$ (left) and the operator $L_{4,\frac14}:= 2M_{\frac{3}8}+3M_{\frac14}$ (right.) We will show in Section~\ref{sec:appToGKFs} that the possible complex dimensions of the fractals $K_{3,\frac13}$ and $K_{4,\frac14}$, respectively, occur at the poles of these functions.}
    \label{fig:cDimPlots}
\end{figure}

\begin{proposition}[Properties of Scaling Zeta Functions]
    \label{prop:propertiesSZF}
    Let $\zeta_L$ be the zeta function associated to the scaling operator $L$ with scaling ratios $\lambda_i\in(0,1)$ and positive multiplicities $a_i$. Let $D$ be the similarity dimension of $L$, defined as per Definition~\ref{def:simDimScalingOp}.
    
    Then $\zeta_L$ is holomorphic in the right half plane $\HH_D$, having a unique real pole at $D$. It possesses a meromorphic continuation to all of $\CC$, with poles at points the set $\Dd_L=\Dd_L(\CC)$ defined by:
    \[ \Dd_L(W) := \Big\{\omega\in W\subset\CC: 1-\sum_{i=1}^m a_i \lambda_i^s = 0 \Big\}. \]
    $\Dd_L(W)$ is the set of complex dimensions associated to $L$ in the window $W$.
\end{proposition}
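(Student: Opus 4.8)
The plan is to exploit the fact that the denominator of $\zeta_L$ is an entire function, so that $\zeta_L$ is automatically meromorphic, and then to locate its poles. I would write $g(s) := 1 - \sum_{i=1}^m a_i \lambda_i^s$, noting that each term $\lambda_i^s = e^{s\log\lambda_i}$ is entire in $s$ (as $\lambda_i>0$). Being a finite linear combination of entire functions, $g$ is itself entire, and hence $\zeta_L = 1/g$ is meromorphic on all of $\CC$, with poles precisely at the zeros of $g$. Since the zero set of $g$ is by definition exactly $\Dd_L(\CC)$, this immediately yields the claimed meromorphic continuation and identifies the poles with the complex dimensions. It then remains only to analyze the location of these zeros: to show that $D$ is the unique real zero and that no zeros lie in the open half-plane $\HH_D$.

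For the real zero, I would invoke the monotonicity analysis already carried out for the function $p(s) := \sum_{i=1}^m a_i \lambda_i^s$ in the discussion preceding Definition~\ref{def:simDimScalingOp}. Restricted to $s\in\RR$, $p$ is strictly decreasing (since $p'(s)<0$ everywhere), with $p(s)\to+\infty$ as $s\to-\infty$ and $p(s)\to 0$ as $s\to+\infty$; hence $p(s)=1$ has exactly one real solution, which is by definition the similarity dimension $D$. Equivalently, $g$ has the single real zero $s=D$, giving the unique real pole of $\zeta_L$. One also checks that this pole is simple, since $g'(D) = -p'(D) > 0$ is nonzero.

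The crux is to show $\zeta_L$ is holomorphic throughout the open half-plane $\HH_D$, i.e. that $g(s)\neq 0$ whenever $\sigma := \Re(s) > D$. Here I would use the elementary but decisive estimate
\[ \Big| \sum_{i=1}^m a_i \lambda_i^s \Big| \leq \sum_{i=1}^m a_i\,|\lambda_i^s| = \sum_{i=1}^m a_i \lambda_i^{\sigma} = p(\sigma), \]
where the middle equality uses $|\lambda_i^s| = \lambda_i^{\Re(s)}$ (valid since $\lambda_i>0$) together with the positivity of the multiplicities $a_i$. Because $p$ is strictly decreasing and $p(D)=1$, we have $p(\sigma)<1$ for $\sigma>D$, so by the reverse triangle inequality $|g(s)| \geq 1 - p(\sigma) > 0$ and thus $g(s)\neq 0$. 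Hence $g$ is zero-free on $\HH_D$ and $\zeta_L$ is holomorphic there. The main obstacle is largely conceptual rather than computational: one must carefully separate the open half-plane (where holomorphicity holds) from its boundary line $\Re(s)=D$ (which carries the pole at $D$), and confirm that the strict inequality $p(\sigma)<1$ survives passage to the complex modulus. This argument mirrors the treatment of self-similar fractal strings in Theorem~3.6 of \cite{LapvFr13_FGCD}.
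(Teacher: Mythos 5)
Your proposal is correct and follows essentially the same route as the paper: the identical estimate $\left|\sum_{i=1}^m a_i\lambda_i^s\right|\leq p(\sigma)<p(D)=1$ for $\Re(s)>D$, and the same appeal to the preceding monotonicity discussion for uniqueness of the real pole. The only difference is cosmetic but in your favor: where the paper justifies the meromorphic continuation with a terse appeal to analytic continuation, you make it explicit that the denominator $g(s)=1-\sum_{i=1}^m a_i\lambda_i^s$ is entire and not identically zero, so $\zeta_L=1/g$ is globally meromorphic with poles exactly at the zeros of $g$ (and you note, as a bonus, that the pole at $D$ is simple since $g'(D)\neq 0$).
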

\begin{proof}
    Clearly, $\zeta_L$ has poles at points $s\in\CC$ for which Equation~\ref{eqn:Moran} holds. That $D$ is the unique real pole of $\zeta_L$ follows from the preceding discussion that the Moran equation has a unique real solution. 

    Letting $p(s)=\sum_{i=1}^m a_i\lambda_i^s$ as before, we see that the zeta function is holomorphic when $\sigma=\Re(s)>D$ since 
    \[ \left|\sum_{i=1}^m a_i\lambda_i^s \right| \leq \sum_{i=1}^m a_i \lambda_i^\sigma = p(\sigma) < p(D) = 1. \]
    Thus $z=p(s)$ lies within the unit disk wherein $\frac1{1-z}$ is holomorphic. That $\zeta_L$ has a meromorphic extension to $\CC$ follows by analytic continuation.
\end{proof}

Lastly, we shall have need of growth estimates for this zeta function $\zeta_L$. There are two versions, the first of which is languid growth; see Definition~5.1.3 in \cite{LRZ17_FZF} and Definition~5.2 in \cite{LapvFr13_FGCD}. The second condition, which we shall use in this paper, is known as strongly languid growth, which implies languid growth; see Definition~5.1.4 in \cite{LRZ17_FZF} and Definition~5.3 in \cite{LapvFr13_FGCD}. We provide the definition for strongly languid growth for an function in general; a relative fractal drum $(X,\Omega)$ or a scaling operator $L$ is said to the strongly languid if the corresponding zeta function ($\tubezeta_{X,\Omega}$ or $\zeta_L$, respectively) is strongly languid (for some $\delta>0$ in the former case.)

There are two hypothesis, the first of which concerns power-law/polynomial growth along a sequence of horizontal lines.
\begin{definition}[Languidity Hypothesis \textbf{L1}]
    \label{def:languidL1}
    Let $f$ be a complex function which is holomorphic in the half plane $\HH_D$ and which possesses analytic continuation to a neighborhood of the set $\set{s\in\CC : \Re(s)>S(\Im(s))}$, where $S$ is a Lipschitz continuous function called a ``screen.'' Then $f$ is said to satisfy the languidity hypothesis \textbf{L1} with exponent $\kappa$ (with respect to the screen $S$) if the following holds: 

    There is a positive constant $C>0$, a constant $\beta>D$, and a doubly infinite sequence $\set{\tau_n}_{n\in\ZZ}$ with $\lim_{n\to\infty}\tau_n\to\infty$, $\lim_{n\to-\infty}\tau_n=-\infty$, and $\forall n\geq 1$, $\tau_{-n}<0<\tau_n$ so that on every horizontal interval of the form 
    \[ I_n = [S(\tau_n) +i\tau_n,\,\beta +i\tau_n ]\subset\CC, \]
    $f$ has at most power-law $\kappa$ growth with respect to $\tau_n$. More precisely, for all $\sigma$ in $[S(\tau_n),\beta]$, we have that 
    \[ |f(\sigma+i\tau_n)| \leq C(|\tau_n|+1)^\kappa. \]
    
\end{definition}
The second hypothesis concerns power-law/polynomial growth along a vertical curve. The standard version of languidity concerns growth along a single curve, and strong languidity concerns growth along a sequence of such curves moving to the left. These curves are known as screens owing to their connection to the windows appearing in Definition~\ref{def:cDims}. 
\begin{definition}[Languidity Hypothesis \textbf{L2}]
    \label{def:languidL2}
    Let $f$ be a complex function which is holomorphic in the half plane $\HH_D$ and which possesses analytic continuation to a neighborhood of the set $\set{s\in\CC : \Re(s)>S(\Im(s))}$, where $S$ is a Lipschitz continuous function called a ``screen.'' Then $f$ is said to satisfy the languidity hypothesis \textbf{L2} with exponent $\kappa$ (with respect to the screen $S$) if the following holds:  

    There is a positive constant $C$ such that for all $\tau \in\RR$ with $|\tau|\geq 1$, 
    \[ |f(S(\tau)+i\tau)| \leq C|\tau|^\kappa.   \]

\end{definition}
A function which satisfies both hypotheses \textbf{L1} and \textbf{L2} with respect to the same screen $S$ and exponent $\kappa$ is said to have languid growth (with respect to these parameters.) 

Strong languidity may be seen as a sequence of languidity conditions. Namely, it concerns languidity along a sequence of screens converging to the left. 
\begin{definition}[Strong Languidity of a Function]
    \label{def:stronglyLanguid}
    Let $f$ be a complex function which is holomorphic in the half plane $\HH_D$ and which possesses analytic continuation to the whole complex plane excepting for a discrete set of singular points. Then $f$ is said to be strongly languid with exponent $\kappa$ if the following conditions hold. 

    In what follows, we suppose the existence of a sequence $\set{S_m}_{m\in\NN}$ of sequence of Lipschitz continuous functions called ``screens" with two properties: firstly, letting $\norm{S_m}_\infty$ denote the supremum of $S_m$ over $\RR$, we have that $\sup_{m\in\NN} \norm{S_m}_\infty=-\infty$; secondly, there exists a uniform bound for the Lipschitz constants of the screens $S_m$.
    \begin{itemize}
        \item[\textbf{L1}] $f$ satisfies languidity hypothesis \textbf{L1} with respect to $\kappa$ and each screen $S_m$. (Equivalently, with respect to the replacement $S(\tau)\equiv -\infty$.)
        \item[\textbf{L2}$'$] There exist positive constants $C$ and $B$ such that for all $\tau\in\RR$ and $m\geq 1$, $f$ satisfies 
        \[ |f(S_m(\tau)+i\tau)|\leq CB^{|S_m(\tau)|}(|\tau|+1)^\kappa. \]
    \end{itemize}
    
\end{definition}
Note that the strong languidity condition allows for a prefactor with exponential growth related to $\norm{S_m}_\infty$, but is otherwise analogous to \textbf{L2}. Condition \textbf{L2}$'$ implies \textbf{L2} for each of the screens with finite supremum. Also, without loss of generality, the constant $C$ may be chosen to be the same in each of the hypotheses. 

Additionally, we note that typically it is the geometric object (e.g. a relative fractal drum, a fractal string, or similar) which is said to be languid. In the case of associated scaling zeta functions, the corresponding object would be the scaling operator itself. However, we are extending this notion to apply to a function which satisfies the corresponding growth conditions. We note that this will be useful later in reference to a function whose corresponding geometric object may or may not be known. See in particular Corollary~\ref{thm:languidZetaGKF}.

Now, we have that any of the associated scaling zeta functions $\zeta_L$ shall be strongly languid with respect to the exponent $\kappa=0$.
\begin{proposition}[Languidity of Scaling Zeta Functions]
    \label{prop:languidSZF}
    Let $L:=\sum_{i=1}^m a_i M_{\lambda_i}$ be scaling operator with distinct scaling ratios $\lambda_i\in(0,1)$ and positive multiplicities $a_i$ and suppose that $\zeta_L$ is its associated scaling zeta function. Let $\Lambda=\min_{i=1,...,m}\lambda_i$ and let $D$ be the similarity dimension of $L$. 
    
    Then $\zeta_L$ is strongly languid with respect to exponent $\kappa=0$ as in Definition~\ref{def:stronglyLanguid}. Explicitly, there is a constant $C>0$ and a sequence $\set{\tau_n}_{n\in\ZZ}$ with  $\lim_{n\to\infty}\tau_n\to\infty$, $\lim_{n\to-\infty}\tau_n=-\infty$, and $\tau_{-n}<0<\tau_n$ for all $n\geq 1$ such that the following conditions hold.
    \begin{enumerate}
        \item For any fixed $\alpha,\beta$ with $\alpha<D<\beta$, we have that $\zeta_L$ is uniformly bounded all of the intervals $I_n=[\alpha+i\tau_n,\beta+i\tau_n]$, with constant $C$. That is, for all $\tau_n$ and $\sigma\in[\alpha,\beta]$, $ |\zeta_L(\sigma+i\tau_n)|\leq C.$ 
        \item For all $s\in\CC$ with sufficiently small $\sigma=\Re(s)$, $|\zeta_L(s)|\leq C \Lambda^{|\sigma|}$.
    \end{enumerate} 
\end{proposition}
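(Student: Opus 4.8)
The plan is to write $\zeta_L(s)=1/f(s)$, where $f(s):=1-p(s)$ and $p(s):=\sum_{i=1}^m a_i\lambda_i^s$, and to reduce both conditions to lower bounds on $|f|$. I first locate the poles, i.e. the zeros of $f$. Since $|p(\sigma+i\tau)|\le p(\sigma)$ and $p$ is strictly decreasing with $p(D)=1$, there are no zeros of $f$ with $\Re(s)>D$ (this re-derives the holomorphy in $\HH_D$ from Proposition~\ref{prop:propertiesSZF}); and, as the computation for condition~2 shows, $|p(s)|\to\infty$ uniformly in $\tau$ as $\Re(s)\to-\infty$, so all zeros of $f$ lie in a vertical strip $D_\ell\le\Re(s)\le D$ for some finite $D_\ell$. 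Everything then reduces to controlling $f$ in this strip and recording decay to its left.

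Condition~2 is the routine part. Let $\Lambda=\lambda_{i_0}$ be the (unique, since the ratios are distinct) smallest ratio. For $\sigma<0$ the reverse triangle inequality gives
\[ |f(s)| \ge a_{i_0}\Lambda^\sigma - 1 - \sum_{i\ne i_0} a_i\lambda_i^\sigma = \Lambda^\sigma\Big( a_{i_0} - \Lambda^{-\sigma} - \sum_{i\ne i_0} a_i (\lambda_i/\Lambda)^\sigma \Big). \]
As $\sigma\to-\infty$ every term in the parenthesis other than $a_{i_0}$ tends to $0$ (because $\Lambda<1$ and $\lambda_i/\Lambda>1$ for $i\ne i_0$), so below some threshold $\sigma_0$ the parenthesis exceeds $a_{i_0}/2$. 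The bound is uniform in $\tau=\Im(s)$, whence $|\zeta_L(s)|\le (2/a_{i_0})\Lambda^{-\sigma}=(2/a_{i_0})\Lambda^{|\sigma|}$ for all $s$ with $\Re(s)\le\sigma_0$, which is condition~2.

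For condition~1 I split the segment $[\alpha+i\tau_n,\beta+i\tau_n]$ at $\sigma_0$. On the far-left piece $\alpha\le\sigma\le\sigma_0$ the estimate from condition~2 already gives $|\zeta_L|\le (2/a_{i_0})\Lambda^{|\sigma|}\le 2/a_{i_0}$, uniformly and independently of the left endpoint $\alpha$; this is what keeps the constant $C$ uniform across screens marching to the left. It remains to bound $\zeta_L$ on the fixed compact $\sigma$-window $[\sigma_0,\beta]$ at a suitable sequence of heights. The essential input is that the complex dimensions $\Dd_L$ have bounded density in the imaginary direction (the density-of-zeros estimate for self-similar Dirichlet polynomials, cf.\ \cite{LapvFr13_FGCD}); consequently the imaginary parts of the zeros of $f$ leave gaps of some fixed positive length infinitely often, both as $\tau\to+\infty$ and $\tau\to-\infty$. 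I take $\{\tau_n\}$ to be centers of such gaps, so that a fixed $\eta>0$ satisfies: no zero of $f$ has imaginary part within $\eta$ of any $\tau_n$.

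The main obstacle is converting this separation from the zeros into a uniform lower bound $|f(\sigma+i\tau_n)|\ge c$ on the entire segment $\sigma\in[\sigma_0,\beta]$, since a holomorphic function can be small away from its zeros. Here I would use almost periodicity. The translates $f_\tau:=f(\,\cdot\,+i\tau)$, restricted to the box $B=[\sigma_0-1,\beta+1]\times[-\eta,\eta]$, are parametrized continuously by the phase vector $(\tau\log\lambda_1,\dots,\tau\log\lambda_m)\in\mathbb{T}^m$ and hence form a relatively compact family in $C(B)$; every subsequential limit $f_*$ is again a Dirichlet polynomial of the same shape, so $f_*\not\equiv 0$ (indeed $|f_*|\ge 1-p(\beta)>0$ near $\sigma=\beta$) and its zeros again lie in $\Re(s)\in[D_\ell,D]$. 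If condition~1 failed along $\{\tau_n\}$ there would be $\sigma_n\in[\sigma_0,\beta]$ with $f(\sigma_n+i\tau_n)\to 0$; passing to a subsequence with $\sigma_n\to\sigma_*$ and $f_{\tau_n}\to f_*$ uniformly on $B$ yields $f_*(\sigma_*)=0$ with $\sigma_*\in[\sigma_0,\beta]\cap[D_\ell,D]$, an interior point of $B$. By Hurwitz's theorem the zeros of $f_{\tau_n}$ converge to $\sigma_*$, i.e.\ $f$ has zeros at heights $\tau_n+o(1)$, contradicting the gap condition for large $n$. This gives $|f|\ge c$ on $[\sigma_0,\beta]\times\{\tau_n\}$, hence $|\zeta_L|\le\max(2/a_{i_0},1/c)=:C$ on the whole segment, uniformly in $n$ and in the window, which is condition~1 with $\kappa=0$. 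The lattice/non-lattice dichotomy enters only at this step: in the lattice case the $\mathbb{T}^m$-orbit is a single circle and $f$ is genuinely periodic, so the gaps and the lower bound can be read off a single period, whereas the compactness argument is precisely what absorbs the non-lattice case uniformly.
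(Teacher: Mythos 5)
Your proposal is correct, but it takes a genuinely different route from the paper. The paper's proof is essentially a one-line citation: it observes that $\zeta_L$ coincides with the geometric zeta function of a self-similar fractal string (with length and gap parameters set to zero) and invokes Theorem~3.26, Section~6.4, and in particular Equation~6.36 of \cite{LapvFr13_FGCD}, where the strong languidity of such zeta functions is established. You instead prove the two estimates directly. Your treatment of condition~2 (reverse triangle inequality after factoring out $\Lambda^\sigma$, using distinctness of the ratios so that $\lambda_i/\Lambda>1$ for $i\ne i_0$) is elementary and complete, and it also supplies the portion of condition~1 to the left of $\sigma_0$ with a constant independent of the left endpoint $\alpha$ --- exactly the uniformity that distinguishes strong languidity from languidity along a single screen. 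On the compact window $[\sigma_0,\beta]$ you still import one ingredient from the same book, namely the linear density of the zeros of $f=1-\sum_{i} a_i\lambda_i^s$ in horizontal strips, which by pigeonhole yields zero-free gaps of a uniform length and hence the sequence $\set{\tau_n}$; your conversion of zero-separation into a uniform lower bound on $|f|$ via the torus parametrization of vertical translates, normality, and Hurwitz's theorem is sound. (Two small remarks: the degenerate case where the failing heights have bounded index is immediate, since continuity then places a zero exactly at some height $\tau_n$, violating the gap condition directly; and your claim that the zeros of a limit function $f_*$ lie in the critical strip is never actually used.) What each approach buys: the paper's citation is short but rests on the full strength of the self-similar string theory; your argument is self-contained except for the zero-density estimate, makes explicit where the sequence $\set{\tau_n}$ comes from and why the constant survives screens marching to $-\infty$, and cleanly isolates where the lattice/non-lattice dichotomy does and does not matter, since compactness of the orbit closure in $\mathbb{T}^m$ absorbs both cases at once.
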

\begin{proof}
    This result is a corollary of Theorem~3.26 and the discussion in Section~6.4 of \cite{LapvFr13_FGCD}, as $\zeta_L$ is essentially the same as that of a self-similar fractal string. (Let the length and gap parameters be zero.) Note in particular the explicit estimate in Equation~6.36 of \cite{LapvFr13_FGCD}.
\end{proof}

\subsection{Scaling Functional Equations}
Let $f,R\in C^0(\RR^+)$. The function $f$ is said to obey a scaling functional equation on $\RR^+$ with remainder term $R$ if we have that for all $x\in\RR^+$, there is a scaling operator $L$ such that 
\begin{align}
    \label{eqn:SFE}
    f(x) = L[f](x) + R(x) = \sum_{i=1}^m a_i f(x/\lambda_i) + R(x).
\end{align}
The scaling functional equation is exact if $R\equiv 0$ and approximate if not. We generally assume that $R$ cannot be written as the image of $f$ under some scaling operator $T$ (whence the equation is exact with respect to the operator $L+T$,) but do not require it since it may be unknown except perhaps for estimates thereupon. 

We now analyze the solutions to scaling functional equations using (truncated) Mellin transforms. The first step is to transform the functional equation into an equation for the truncated Mellin transform.
\begin{theorem}[Truncated Mellin Transforms of SFEs]
    \label{thm:transSFE}
    Let $f,R\in C^0(\RR^+)$ and let $L=\sum_{i=1}^m a_i M_{\lambda_i}$ be a scaling operator such that the scaling ratios $\lambda_i$ are each in $(0,1)$ and with positive integer multiplicities $a_i$. Suppose also that either $m\geq 2$ or that $a_1\geq 2$. Additionally, assume that $R(x)=O(x^{-\sigma_0})$ as $x\to 0^+$ for some $\sigma_0\in \RR$.
    
    If $f$ satisfies the scaling functional equation $f=L[f]+R$, then its truncated Mellin transform $\Mm^\delta[f]$ satisfies the equation:
    \[ \Mm^\delta[f](s) = \zeta_L(s) (E(s)+\Mm^\delta[R](s)), \]
    for any $s\in\CC\setminus\Dd_L\cap \HH_{\sigma_0}$. Here, $E$ is the entire function explicitly given by:
    \[ E(s) := \sum_{i=1}^m a_i \lambda_i^s \Mm_{\delta}^{\delta/\lambda_i}[f](s), \]
    and $\HH_{\sigma_0}$ is the half plane $\Re(s)>\sigma_0$. 

    Moreover, if $D$ is the unique positive solution to Equation~\ref{eqn:Moran}, then the function $\Mm^\delta[f]$ is holomorphic in $\HH_{\max(D,\sigma_0)}$. It is meromorphic in $\HH_{\sigma_0}$, with poles contained within a subset of $\Dd_L(\HH_{\sigma_0})$.
\end{theorem}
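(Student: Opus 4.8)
The plan is to apply the truncated Mellin transform $\Mm^\delta$ directly to the functional equation, solve the resulting algebraic relation, and then deduce every analytic claim from the structure of the right-hand side. Before any of this is legitimate I would secure a crude polynomial growth bound for $f$ near the origin, guaranteeing that $\Mm^\delta[f]$ converges on a nonempty half-plane; this preliminary estimate is where I expect the real difficulty to lie, since the continuity hypothesis alone gives no control as $x\to0^+$. Passing to the logarithmic variable, writing $g(u)=f(e^{-u})$, $\rho(u)=R(e^{-u})$, and $p_i=-\log\lambda_i>0$, the relation $f=L[f]+R$ becomes $g(u)=\sum_{i=1}^m a_i\,g(u-p_i)+\rho(u)$. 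For any fixed $\gamma>\max(D,\sigma_0)$ one has $p(\gamma)=\sum_i a_i\lambda_i^\gamma<1$ by the monotonicity of $p$ recorded following Definition~\ref{def:simDimScalingOp}, and a Gr\"onwall-type argument applied to the nondecreasing majorant $G(u):=\sup_{v\le u}|g(v)|e^{-\gamma v}$ then gives a uniform bound, the contraction factor $p(\gamma)<1$ absorbing the self-referential terms and $\gamma\ge\sigma_0$ controlling $\rho$. This yields $f(x)=O(x^{-\gamma})$ as $x\to0^+$ for every such $\gamma$, so that $\Mm^\delta[f]$ converges and is holomorphic on $\HH_{\max(D,\sigma_0)}$. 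The point worth stressing is that this can be done by a direct contraction estimate rather than by invoking renewal theory, in keeping with the paper's philosophy.

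With convergence in hand the core computation is routine. Applying $\Mm^\delta$ to both sides and using linearity, the substitution $y=x/\lambda_i$ gives $\Mm^\delta[M_{\lambda_i}f](s)=\lambda_i^s\,\Mm^{\delta/\lambda_i}[f](s)$, and splitting the integral at $\delta$ produces $\Mm^{\delta/\lambda_i}[f]=\Mm^\delta[f]+\Mm_\delta^{\delta/\lambda_i}[f]$. Collecting the $\Mm^\delta[f]$ terms yields $\big(1-\sum_i a_i\lambda_i^s\big)\,\Mm^\delta[f](s)=E(s)+\Mm^\delta[R](s)$, where $E(s)=\sum_i a_i\lambda_i^s\,\Mm_\delta^{\delta/\lambda_i}[f](s)$; dividing by $1-\sum_i a_i\lambda_i^s=\zeta_L(s)^{-1}$ gives the asserted identity on $\HH_{\max(D,\sigma_0)}\setminus\Dd_L$.

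Finally I would read off the holomorphicity and meromorphicity claims from the three factors on the right. Each $\Mm_\delta^{\delta/\lambda_i}[f]$ has both endpoints strictly positive and so is entire by the discussion following Definition~\ref{def:truncatedMellin}, whence $E$ is entire; since $R(x)=O(x^{-\sigma_0})$, the same discussion makes $\Mm^\delta[R]$ holomorphic on $\HH_{\sigma_0}$; and by Proposition~\ref{prop:propertiesSZF} the factor $\zeta_L$ is holomorphic on $\HH_D$ and meromorphic on $\CC$ with pole set $\Dd_L$. Consequently $E+\Mm^\delta[R]$ is holomorphic on $\HH_{\sigma_0}$, the product $\zeta_L\cdot(E+\Mm^\delta[R])$ is holomorphic on $\HH_D\cap\HH_{\sigma_0}=\HH_{\max(D,\sigma_0)}$ and meromorphic on $\HH_{\sigma_0}$, and the identity furnishes the meromorphic continuation of $\Mm^\delta[f]$ to all of $\HH_{\sigma_0}\setminus\Dd_L$. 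The poles of the continuation can occur only at zeros of $1-\sum_i a_i\lambda_i^s$, i.e.\ in $\Dd_L(\HH_{\sigma_0})$, with possible cancellations against zeros of $E+\Mm^\delta[R]$, giving exactly the stated subset of $\Dd_L(\HH_{\sigma_0})$.
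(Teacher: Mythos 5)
Your core computation coincides exactly with the paper's proof: apply $\Mm^\delta$ to $f=L[f]+R$, change variables to get $\lambda_i^s\Mm^{\delta/\lambda_i}[f](s)$, split each integral at $\delta$ to isolate $E(s)=\sum_{i=1}^m a_i\lambda_i^s\Mm_\delta^{\delta/\lambda_i}[f](s)$, collect the $\Mm^\delta[f]$ terms, and divide by $1-\sum_{i=1}^m a_i\lambda_i^s$; your concluding holomorphy/meromorphy bookkeeping (E entire, $\Mm^\delta[R]$ holomorphic on $\HH_{\sigma_0}$, $\zeta_L$ holomorphic on $\HH_D$ and meromorphic on $\CC$) is also the same. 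Where you genuinely depart from the paper is the preliminary step: the paper manipulates the integrals formally and never derives a growth bound for $f$ near $0$, so convergence of $\Mm^\delta[f]$ on a nonempty half-plane is implicitly assumed rather than deduced from the scaling functional equation. Your instinct that this is where the real content lies is sound, and the logarithmic-variable contraction (in effect the elementary renewal estimate, with $p(\gamma)<1$ as the contraction factor) is the right mechanism for supplying it.

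However, that step fails as literally written. Your majorant $G(u)=\sup_{v\le u}|g(v)|e^{-\gamma v}$ ranges over $v\to-\infty$, i.e.\ over $x\to\infty$, where the hypotheses control nothing: $R$ is only assumed $O(x^{-\sigma_0})$ as $x\to0^+$, and $f$ is merely continuous. Concretely, take $f(x)=e^x$ and define $R:=f-L[f]$; then $R$ is continuous and $R(x)\to 1-\sum_i a_i$ as $x\to 0^+$, so all hypotheses of the theorem hold with $\sigma_0=0$, yet $|g(v)|e^{-\gamma v}=e^{e^{-v}}e^{-\gamma v}\to\infty$ as $v\to-\infty$, so $G\equiv+\infty$ and the inequality $G\le p(\gamma)G+C$ yields nothing (the same problem afflicts $\sup_{v\le u}|\rho(v)|e^{-\gamma v}$). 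The repair is routine because $\Mm^\delta[f]$ only sees $x\in(0,\delta]$: fix $u_0$ and set $G(u)=\sup_{u_0\le v\le u}|g(v)|e^{-\gamma v}$, which is finite as the supremum of a continuous function on a compact interval; for those terms $g(v-p_i)$ whose argument drops below $u_0$, bound them by the maximum of $|g|$ on the compact interval $[u_0-\max_i p_i,\,u_0]$ and absorb them into the additive constant. Then $G(u)\le p(\gamma)G(u)+C$ gives the uniform bound, $f(x)=O(x^{-\gamma})$ as $x\to0^+$ for every $\gamma>\max(D,\sigma_0)$ follows, and the rest of your argument goes through verbatim.
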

\begin{proof}
    We begin by applying $\Mm^\delta$ to the scaling functional equation \[f=L[f]+R.\] This and a change of variables yield
    \begin{align*}
        \Mm^\delta[f](s) &= \sum_{i=1}^m a_i \int_0^{\delta} t^s f(t/\lambda_i)\,\frac{dt}t + \Mm^\delta[R](s) \\
            &= \sum_{i=1}^m a_i \lambda_i^s \int_0^{\delta/\lambda_i} t^s f(t)\,\frac{dt}t + \Mm^\delta[R](s).
    \end{align*}
    By the linearity of integration, we may split the integrals to write that 
    \begin{align*}
        \Mm^\delta[f](s) - \sum_{i=1}^m a_i \lambda_i^s \Mm^\delta[f](s) = E(s) + \Mm^\delta[R](s).
    \end{align*}
    Provided that $s\notin\Dd_L$, it follows that $\Mm^\delta[f](s)=\zeta_L(s)(E(s)+\Mm^\delta[R](s))$. 

    We have that the function $E$ is entire since it is a linear combination of entire functions $\lambda_i^s \Mm_\delta^{\delta/\lambda_i}[f]$ and that $\Mm^\delta[R](s)$ is holomorphic in $\HH_{\sigma_0}$. By Proposition~\ref{prop:propertiesSZF}, $\zeta_L$ is holomorphic in the half plane $\HH_D$ and meromorphic in all of $\CC$. 
    
    So, the functions $\zeta_L\cdot\Mm^\delta[R]$ and $\zeta_L\cdot E$ are holomorphic when $\Re(s)>\max{D,\sigma_0}$ and meromorphic in $\HH_{\sigma_0}$ with poles at exactly the points $\omega\in\CC$ such that 
    \[ \Res(\zeta_L(E(s)+\Mm^\delta[R]))\neq 0. \] 
\end{proof}

The solution to this scaling functional equation is then produced by the Mellin inversion theorem. This result is analogous to the Fourier inversion theorem, and provides a means by which one may invert the Mellin transform; see for instance \cite{Tit1937}. We shall make use of the following notation convention: 
\[ \int_{c-i\infty}^{c+i\infty} f(z)\,dz := \lim_{T\to\infty}\int_{c-iT}^{c+iT}f(z)\,dz, \]
which may also be considered a principal value integral such as in \cite{Graf10}.
\begin{theorem}[Solutions to Scaling Functional Equations]
    \label{thm:solnSFE}
    Let $f,R\in C^0(\RR^+)$, let $L=\sum_{i=1}^m a_i M_{\lambda_i}$ be a scaling operator, and suppose that $f$ satisfies the scaling functional equation $f=L[f]+R$. Suppose further that $R(x)=O(x^{-\sigma_0})$ as $x\to0^+$ and that $D$ is the unique positive solution to Equation~\ref{eqn:Moran}. 
    
    Then for any $x\in\RR^+$ and for any $\delta>x$, $f$ is given by:
    \begin{align*}
        f(x)    &= \Mm^{-1}[\zeta_L\cdot(E+\Mm^\delta[R])](x)\\
                &= \frac{1}{2\pi i} \int_{c-i\infty}^{c+i\infty} x^{-s} \zeta_L(s)(E(s)+\Mm^\delta[R](s))\,ds,
    \end{align*}
    where $c$ is any positive constant greater than both $D$ and $\sigma_0$ and $E$ is as in Theorem~\ref{thm:transSFE}.
\end{theorem}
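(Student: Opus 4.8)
The plan is to recognize the claimed formula as nothing more than Mellin inversion applied to the truncated transform already computed in Theorem~\ref{thm:transSFE}. The key is the identity $\Mm^\delta[f]=\Mm[f\cdot\1_{[0,\delta]}]$ recorded after Definition~\ref{def:truncatedMellin}: the truncated transform of $f$ is the genuine Mellin transform of the function $g:=f\cdot\1_{[0,\delta]}$. Thus recovering $g$ by Mellin inversion recovers $f$ wherever the cutoff is inactive, and the hypothesis $\delta>x$ is exactly what guarantees $g(x)=f(x)$ at the point in question. Substituting the factorization $\Mm^\delta[f](s)=\zeta_L(s)\bigl(E(s)+\Mm^\delta[R](s)\bigr)$ from Theorem~\ref{thm:transSFE} into the inversion integral then produces precisely the stated expression, the first equality being simply the definition of $\Mm^{-1}$ applied to $\zeta_L\cdot(E+\Mm^\delta[R])$.

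Next I would fix the contour by choosing $\Re(s)=c$ with $c>\max(D,\sigma_0)$, a single choice that does three jobs. It places the line in the half-plane $\HH_{\max(D,\sigma_0)}$ where the defining integral for $\Mm^\delta[f]$ converges and the factorization of Theorem~\ref{thm:transSFE} is therefore valid; it keeps the line in the region where $\zeta_L$ is holomorphic, by Proposition~\ref{prop:propertiesSZF} (recall $c>D$); and it secures the weighted integrability $\int_0^\infty x^{c-1}\lvert g(x)\rvert\,dx<\infty$ demanded by the inversion theorem. Since $g$ is supported in $[0,\delta]$, this last condition reduces to integrability of $x^{c-1}f(x)$ near $0$, which is exactly what $c$ lying above the abscissa of convergence $\max(D,\sigma_0)$ provides. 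With these in place, the Mellin inversion theorem (\cite{Tit1937,Graf10}) applies to $g$.

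The main obstacle is the convergence of the inversion integral and the pointwise identification of its value. Because $f$ is only assumed continuous, $\Mm^\delta[f](s)$ need not decay fast enough along $\Re(s)=c$ for absolute convergence: the factor $\zeta_L$ is merely bounded on vertical segments by strong languidity (Proposition~\ref{prop:languidSZF}), while the finite-interval transforms appearing in $E(s)$ and in $\Mm^\delta[R](s)$ tend to $0$ by a Riemann--Lebesgue argument but with no guaranteed integrable rate. This is precisely why the integral is read in the principal-value (symmetric limit) sense fixed before the statement. I would therefore invoke the inversion theorem in its Dirichlet--Jordan form, valid under a local bounded-variation condition and returning the symmetric average $\tfrac12\bigl(g(x^+)+g(x^-)\bigr)$; the hypothesis $\delta>x$ is exactly what keeps $x$ away from the single jump of $g$ at $\delta$, so $g$ is continuous there and the symmetric limit equals $g(x)=f(x)$. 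Combining this with the substitution of the first paragraph completes the argument. The technical price is the local-regularity hypothesis underlying the principal-value inversion, which in the intended applications (the monotone tube functions of Section~\ref{sec:fractalSFEs}) is automatic.
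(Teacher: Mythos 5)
Your proposal follows exactly the paper's own route: substitute the factorization from Theorem~\ref{thm:transSFE}, view $\Mm^\delta[f]$ as $\Mm[f\cdot\1_{[0,\delta]}]$, apply Mellin inversion along a line $\Re(s)=c>\max(D,\sigma_0)$, and use $\delta>x$ to drop the indicator. The extra care you take with the inversion hypotheses (weighted integrability of the cutoff function, the principal-value/Dirichlet--Jordan reading of the contour integral, and the location of the jump of $f\cdot\1_{[0,\delta]}$ at $\delta$) is a sound refinement of details the paper leaves implicit, not a different argument.
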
 
\begin{proof}
    Theorem~\ref{thm:transSFE} yields that $\Mm^\delta[f](s)=\zeta_L(s) (E(s)+\Mm^\delta[R](s))$. We know from this theorem and from Proposition~\ref{prop:propertiesSZF} that the functions $\zeta_L$ and $\Mm^\delta[R]$ are holomorphic in the half planes $\HH_D$ and $\HH_{\sigma_0}$, respectively. 
    
    Since $\Mm^\delta[f]=\Mm[f\cdot\1_{[0,\delta]}]$ is holomorphic in the half plane $\HH_{\max(D,\sigma_0)}$, we may apply the Mellin inversion theorem to find that for any $c>\max(D,\sigma_0)$,
    \[ 
        f(x)\cdot\1_{[0,\delta]}(x) 
            = \frac1{2\pi i}\int_{c-i\infty}^{c+i\infty} x^{-s} \zeta_L(s)(E(s)+\Mm^\delta[R](s))\,ds. 
    \]
    Since $\delta>x$, we have that $\1_{[0,\delta]}(x)=1,$ whence the result follows.
\end{proof}

In practice, the next step is to apply the residue theorem and contour deformation in order to obtain the function in terms of the poles of $\zeta_L$ and integration involving the remainder term. As this has already been done in detail for fractal tube functions (see Chapters 5 and 8 in \cite{LapvFr13_FGCD} in the context of explicit formulae and Chapter 5 of \cite{LRZ17_FZF} for the treatment of higher dimensional tube formulae,) we shall specialize to the relative tube functions of interest in the next section.

\section{Fractal Scaling Functional Equations}
\label{sec:fractalSFEs}
%
%

In this section, we establish our main results for fractal geometry: analysis of scaling functional equations for tube functions and explicit description of the corresponding tube zeta functions. We shall focus on relative fractal drums which describe attractors of self-similar systems, i.e. fractals arising from iterated function systems in which the mappings are all similitudes. The relevant background information may be found in Section~\ref{sec:tubesAndZetas}, and we shall use the theorems of Section~\ref{sec:generalSFEs} which have been established in a more general setting. 

\subsection{Osculant Fractal Drums}
Throughout this section, let $\Phi=\set{\phi_i}_{i=1}^m$ be a self-similar system on $\RR^\dimension$. For convenience, given a set $U\subset\RR^\dimension$ let us denote $U_i=\phi_i(U)$ for each $i=1,...,m$. Then if $X$ is the attractor of $\Phi$, we have that $X=\cup_{i=1}^m \phi_i(X)$ and in this case $X$ is said to be self-similar.

We will study iterated function systems satisfying the open set condition, introduced by Moran \cite{Moran46} for relating similarity and Hausdorff dimensions. See also the work of Hutchinson in \cite{Hut81}. 
\begin{definition}[Open Set Condition]
    \label{def:OSC}
    An iterated function system $\Phi=\set{\phi_i}_{i=1}^m$ on $\RR^\dimension$ satisfies the open set condition (OSC) if there exists an open set $U\subset\RR^\dimension$ such that:
    \begin{enumerate}
        \item $U\supset \bigcup\limits_{i=1}^m \phi_i(U)$;
        \item For each $i,j\in\set{1,...,m}$ with $i\neq j$, $\phi_i(U)\cap \phi_j(U)=\emptyset$. 
    \end{enumerate}
\end{definition}
In particular, we will focus on relative fractal drums $(X,\Omega)$ such that $X$ is the attractor of a self-similar system $\Phi$ for which $\Omega$ is such an open set. 

When $X$ is a self-similar set, its tubular neighborhoods $X_\e$ may or may not themselves be self-similar in the same sense that it partitions according to a self-similar system. However, with an appropriate relative fractal drum an approximate self-similarity will still lend itself to direct analysis. Namely, if a relative fractal drum $(X,\Omega)$ (recall Definition~\ref{def:RFD}) can be written in terms of some similar RFDs $(\lambda_i X,\lambda_i\Omega)$ and a residual RFD $(X,R)$, then the tubular neighborhood can be partitioned according to its overlap with the sets $\lambda_i\Omega$ and $R$. 

To that end, suppose $\Phi$ is a self-similar system with attractor $X$, and let $\Omega$ be an open set for which $\Phi$ satisfies the open set condition and such that $(X,\Omega)$ is a relative fractal drum. Write $R=\Omega\setminus (\cup_{i=1}^m \phi_i(\Omega))$ so that $\Omega$ is the disjoint union of each $\phi_i(\Omega)$ and $R$. There are two points of order. Firstly, we need that $X_\e\cap \phi_i(\Omega)$ is in fact $(\phi_i(X))_\e\cap\phi_i(\Omega)$. Equivalently, for every point $y\in\phi_i(\Omega)$, $d(y,X)=d(y,\phi_i(X))$. This condition will be necessary to relate $V_{X,\phi_i(\Omega)}$ to $V_{\phi_i(X),\phi_i(\Omega)}$. 

The other is a small technical point: $R$ need not be open, so strictly speaking $(X,R)$ is not an RFD. If $\partial R$ has measure zero, we may safely replace $R$ with its interior and $V_{X,R}$ shall be unaffected. Even if not, however, the definition of $R$ ensures that it is measurable with $m(R)\leq m(\Omega)<\infty$, and thus $V_{X,R}$ (and consequently $\tubezeta_{X,R}$) will still be well-defined as in Definition~\ref{def:tubeFunction} (resp. Definition~\ref{def:tubeZeta}.)

We will conclude this subsection by endowing the relative fractal drums to which our methods apply with a suitable name. 
\begin{definition}[Osculating Sets of Iterated Function Systems]
    \label{def:oscRFD}
    Let $\Phi=\set{\phi_i}_{i=1}^m$ be an iterated function system on $\RR^\dimension$, and let $X$ be its attractor. An open set $\Omega\subset\RR^\dimension$ is said to be an osculating set for $\Phi$ if the following hold:
    \begin{enumerate}
        \item $\Omega\supset\bigcup_{i=1}^m \phi_i(\Omega)$;
        \item For each $i,j\in\set{1,...,m}$ with $i\neq j$, $\phi_i(\Omega)\cap \phi_j(\Omega)=\emptyset$;
        \item $\Omega$ has finite Lebesgue measure;
        \item $\exists\,\delta>0$ such that $\Omega_\delta \supset X$;
        \item For each $i=1,..,m$, if $y\in\phi_i(\Omega)$, then $d(y,X)=d(y,\phi_i(X))$. 
    \end{enumerate}
    The pair $(X,\Omega)$ is called an osculant fractal drum for $\Phi$. 
\end{definition}
Note that the first two conditions imply that $\Phi$ satisfies the open set condition as per Definition~\ref{def:OSC}. The third and fourth conditions imply that $(X,\Omega)$ is a relative fractal drum as per Definition~\ref{def:RFD}. The last condition, which to the best of our knowledge has not been studied before, might be called the ``osculating" condition: the points in iterates $\phi_i(\Omega)$ of $\Omega$ stay closest to the corresponding iterate $\phi_i(X)$ of $X$, rather than to a different iterate $\phi_j(X)$.

\subsection{Self-Similar Systems Yielding SFEs}
Given a self-similar system $\Phi$ which satisfies the open set condition, we suppose that it gives rise to an osculant fractal drum $(X,\Omega)$. For this work, our examples shall be focused on (self-avoidant) generalized von Koch snowflakes; see Section~\ref{sec:appToGKFs}. Under the conditions of Definition~\ref{def:oscRFD}, we deduce a scaling functional equation satisfied by the volume of a tubular neighborhood relative to $\Omega$. 
\begin{theorem}[SFE of a Self-Similar System]
    \label{thm:SFEofSSS}
    Let $\Phi=\set{\phi_i}_{i=1}^m$ be a self-similar system on $\RR^\dimension$ where the scaling ratio of $\phi_i$ is $\lambda_i$ for each $i=1,...,m$. Let $X$ be the attractor of $\Phi$ and suppose that $\Omega$ is an osculating set for $\Phi$. Define $R=\Omega\setminus(\cup_{i=1}^m \phi_i(\Omega)).$

    Then the tube function $V_{X,\Omega}:\RR^+\to[0,\infty]$ satisfies the following scaling functional equation:
    \begin{equation}
        \label{eqn:SFEofSSS}
        V_{X,\Omega}(\e) = L[V_{X,\Omega}](\e) + V_{X,R}(\e) = \sum_{i=1}^m \lambda_i^\dimension V_{X,\Omega}(\e/\lambda_i) + V_{X,R}(\e),
    \end{equation}
    for all $\e$. Here, $L:=\sum_{i=1}^m \lambda_i^\dimension M_{\lambda_i}$ is the associated scaling operator.
\end{theorem}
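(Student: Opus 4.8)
The plan is to decompose the tubular neighborhood $X_\e$ according to the partition of $\Omega$ into the residual piece $R$ and the images $\phi_i(\Omega)$, and then to identify each piece using the osculating condition and the scaling lemma. First I would record that, by conditions (1) and (2) of Definition~\ref{def:oscRFD} together with the definition $R = \Omega \setminus (\bigcup_{i=1}^m \phi_i(\Omega))$, the set $\Omega$ is a disjoint union $R \sqcup \bigcup_{i=1}^m \phi_i(\Omega)$. Since every set in sight is Lebesgue measurable (each $\phi_i(\Omega)$ is open, $X_\e$ is open, and $R$ is measurable with finite measure), finite additivity of $m$ gives
\[ V_{X,\Omega}(\e) = m(X_\e \cap \Omega) = m(X_\e \cap R) + \sum_{i=1}^m m(X_\e \cap \phi_i(\Omega)), \]
and the first summand is exactly $V_{X,R}(\e)$ by Definition~\ref{def:tubeFunction}.

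The key step is to rewrite each term $m(X_\e \cap \phi_i(\Omega))$. Here I would invoke the osculating condition, condition (5) of Definition~\ref{def:oscRFD}: for every $y \in \phi_i(\Omega)$ one has $d(y, X) = d(y, \phi_i(X))$. Consequently, for such $y$ the inequality $d(y,X) < \e$ holds if and only if $d(y, \phi_i(X)) < \e$, which establishes the set identity
\[ X_\e \cap \phi_i(\Omega) = (\phi_i(X))_\e \cap \phi_i(\Omega). \]
Taking measures identifies $m(X_\e \cap \phi_i(\Omega))$ with the relative tube function $V_{\phi_i(X),\phi_i(\Omega)}(\e)$.

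Finally, since each $\phi_i$ is a similitude of scaling ratio $\lambda_i$, I would apply the scaling property of relative tube functions (Lemma~\ref{lem:volumeScaling}) to obtain $V_{\phi_i(X),\phi_i(\Omega)}(\e) = \lambda_i^{\dimension} V_{X,\Omega}(\e/\lambda_i)$. Substituting these two identifications into the additive decomposition yields exactly Equation~\ref{eqn:SFEofSSS}. The only genuine subtlety, and the single point where the hypotheses are used in full force, is the osculating identity $X_\e \cap \phi_i(\Omega) = (\phi_i(X))_\e \cap \phi_i(\Omega)$: without condition (5) a point of $\phi_i(\Omega)$ could be closer to a different iterate $\phi_j(X)$ than to $\phi_i(X)$, and the neighborhood restricted to $\phi_i(\Omega)$ would then fail to be a neighborhood of $\phi_i(X)$ alone, breaking the clean self-similar decomposition. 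Everything else is routine measure theory combined with the previously established scaling lemma.
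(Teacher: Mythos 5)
Your proposal is correct and follows essentially the same route as the paper's proof: partition $\Omega$ into $R$ and the images $\phi_i(\Omega)$, use disjoint additivity of Lebesgue measure, invoke the osculating condition to identify $X_\e \cap \phi_i(\Omega)$ with $(\phi_i(X))_\e \cap \phi_i(\Omega)$, and finish with the scaling property of Lemma~\ref{lem:volumeScaling}. You even isolate the same key point the paper emphasizes, namely that condition (5) is precisely what prevents a point of $\phi_i(\Omega)$ from being captured by a neighborhood of some other iterate $\phi_j(X)$.
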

Note that a scaling ratio may be repeated, in which case $L$ may be written in the reduced form
\[ L = \sum_{j=1}^{m'} a_j \lambda_{j}^N M_{\lambda_j}, \]
where the $\lambda_j$'s are distinct, $a_j\in\NN$ is the multiplicity of $\lambda_j$, and $m'\leq m$. 
\begin{proof}    
    We begin by partitioning $\Omega$ according to the iterates and the residual set: $\Omega = \uplus_{i=1}^m \phi_i(\Omega) \uplus R$. Next, we intersect the sets of the partition with $X_\e$ and taking the Lebesgue measure. By disjoint additivity, we obtain that
    \[ V_{X,\Omega}(\e) = \sum_{i=1}^m V_{X,\phi_i(\Omega)}(\e) + V_{X,R}(\e). \]

    Now, we have that for each $i\in\set{1,...,m}$, $d(y,X)=d(y,\phi_i(X))$ for all $y\in\phi_i(\Omega)$. Consequently, $X_\e\cap\phi_i(\Omega)=(\phi_i(X))_\e\cap\phi_i(\Omega)$, as the epsilon neighborhoods are the preimages of the interval $[0,\e)$ under the respective distance functions $d(\cdot,X)\equiv d(\cdot,\phi_i(X))$, equivalent in $\phi_i(\Omega)$. Thus, $V_{X,\phi_i(\Omega)}=V_{\phi_i(X),\phi_i(\Omega)}$.
    
    Therefore, we obtain the identity that 
    \[ V_{X,\Omega}(\e) = \sum_{i=1}^m V_{\phi_i(X),\phi_i(\Omega)}(\e) + V_{X,R}(\e). \]
    Using Lemma~\ref{lem:volumeScaling}, we have that $V_{\phi_i(X),\phi_i(\Omega)}(t) = \lambda_i^\dimension V_{X,\Omega}(t/\lambda)$. Equation~\ref{eqn:SFEofSSS} then follows by applying this term-by-term. 
\end{proof}

In practice, it is generally easiest to ensure that a relative fractal drum is osculant in the construction of the set $\Omega$. In this work, we explicitly apply our results to generalized von Koch fractals in Section~\ref{sec:appToGKFs}. 

\subsection{Complex Dimensions from SFEs}
We first note the following which will serve as a dictionary between truncated Mellin transforms and tube zeta functions. 
\begin{proposition}[Tube Zeta Functions are Mellin Transforms]
    \label{prop:zetaMellin}
    Let $(X,\Omega)$ be a relative fractal drum in $\RR^\dimension$. Then 
    \[ \tubezeta_{X,\Omega}(s;\delta) = \Mm^\delta[t^{-\dimension} V_{X,\Omega}(t)](s). \]
\end{proposition}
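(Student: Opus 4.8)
The plan is to directly unwind both definitions and observe that they are literally the same integral. Recall from Definition~\ref{def:tubeZeta} that the relative tube zeta function is
\[ \tubezeta_{X,\Omega}(s;\delta) = \int_0^\delta t^{s-\dimension-1} V_{X,\Omega}(t)\,dt, \]
while from Definition~\ref{def:truncatedMellin} the truncated Mellin transform $\Mm^\delta$ (with lower limit zero) of a function $g$ is
\[ \Mm^\delta[g](s) = \int_0^\delta t^{s} g(t)\,\frac{dt}{t} = \int_0^\delta t^{s-1} g(t)\,dt. \]
So the whole proof amounts to setting $g(t) = t^{-\dimension} V_{X,\Omega}(t)$ and checking that the two integrands coincide.

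First I would substitute $g(t) = t^{-\dimension}V_{X,\Omega}(t)$ into the formula for $\Mm^\delta$, obtaining $\int_0^\delta t^{s-1} \cdot t^{-\dimension} V_{X,\Omega}(t)\,dt$. Then I would combine the powers of $t$, writing $t^{s-1}\cdot t^{-\dimension} = t^{s-\dimension-1}$, which yields exactly the integrand $t^{s-\dimension-1}V_{X,\Omega}(t)$ appearing in the definition of $\tubezeta_{X,\Omega}(s;\delta)$. Since the limits of integration $0$ and $\delta$ agree as well, the two integrals are identical, and the claimed equality holds for every $s$ in the common domain of convergence.

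There is essentially no obstacle here, as this is a bookkeeping identity relating the two definitions; the only point worth a sentence is to confirm that the domains of convergence match. Both integrals converge precisely for $s\in\CC$ with $\Re(s)$ sufficiently large, since the integrand is literally the same function of $t$; hence convergence of one is equivalent to convergence of the other, and the asserted identity is valid wherever either side is defined. By the remarks following Definition~\ref{def:tubeZeta} and the discussion of holomorphy of restricted Mellin transforms in Section~\ref{sec:generalSFEs}, both sides then share the same maximal meromorphic continuation, so the identity persists under analytic continuation as well.

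I would remark, finally, that this proposition is the ``dictionary'' that lets the general results of Section~\ref{sec:generalSFEs} on truncated Mellin transforms of scaling functional equations be transported verbatim to tube zeta functions: the tube function $V_{X,\Omega}$ plays the role of $f$ up to the harmless factor $t^{-\dimension}$, which is exactly the bookkeeping this identity records.
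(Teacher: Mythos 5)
Your proof is correct and matches the paper's treatment: the paper states this proposition without proof precisely because, as you show, it is an immediate unwinding of Definition~\ref{def:tubeZeta} and Definition~\ref{def:truncatedMellin}, with the substitution $g(t)=t^{-\dimension}V_{X,\Omega}(t)$ making the two integrals literally identical. Your closing remarks on convergence and on the role of this identity as the dictionary for Section~\ref{sec:generalSFEs} are accurate and consistent with the paper's own commentary following the proposition.
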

The main note of interest is that the tube function is scaled by a factor of $t^{-\dimension}$. This in particular is behind the weights of $\lambda_i^\dimension$ in the scaling operator in Theorem~\ref{thm:SFEofSSS}. 

Now, given a scaling functional equation determined by a scaling operator $L$ and a remainder term for the tube function of a set $X$, we deduce the possible complex dimensions of $X$ up to an order based on the estimates for the remainder. 

\begin{theorem}[Complex Fractal Dimensions from SFEs]
    \label{thm:cdimsFromSFE}
    Let $(X,\Omega)$ be a relative fractal drum in $\RR^\dimension$ and let $V_{X,\Omega}(\e)$ be the corresponding relative tube function. Suppose that $V_{X,\Omega}$ satisfies the scaling functional equation
    $V_{X,\Omega}(\e) = L[V_{X,\Omega}](\e) + R(\e)$ where $L:=\sum_{i=1}^m a_i\lambda_i^\dimension M_{\lambda_i}$ is a scaling operator with distinct ratios $\lambda_i\in(0,1)$ with positive integral multiplicities $a_i$ and where $R(\e)$ is a continuous remainder term of order $O(\e^{\dimension-r})$, where $r\in\RR$ depends on this remainder quantity.
    
    Then the set of complex fractal dimensions with real part $\sigma>r$, i.e. in the open right half plane $\HH_r$, is contained in the set: 
    \begin{equation}
        \label{eqn:cdimsFromSFE}
        \Dd_X(\HH_r) \subset \Big\{\omega \in \HH_r : 1 = \sum_{i=1}^m a_i \lambda_i^\omega \Big\}.
    \end{equation}
    If $D=\overline{\dim}_\Mink(X,\Omega)$ is the relative upper Minkowski dimension of the set $X$, suppose that $r<D$. Then $\tubezeta_{X,\Omega}$ is holomorphic in $\HH_D$ with abscissa of absolute convergence $D(\tubezeta_{X,\Omega})=\overline{\dim}_\Mink(X,\Omega)$. 
\end{theorem}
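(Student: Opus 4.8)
The plan is to reduce everything to the general machinery of Section~\ref{sec:generalSFEs} by passing from the tube function to the normalized function $f(t) := t^{-\dimension} V_{X,\Omega}(t)$ that appears in Proposition~\ref{prop:zetaMellin}. First I would rewrite the hypothesized scaling functional equation in terms of $f$. Substituting $V_{X,\Omega}(\e/\lambda_i) = (\e/\lambda_i)^{\dimension} f(\e/\lambda_i)$ into $V_{X,\Omega}(\e) = \sum_i a_i \lambda_i^{\dimension} V_{X,\Omega}(\e/\lambda_i) + R(\e)$ and multiplying through by $\e^{-\dimension}$, the weights $\lambda_i^{\dimension}$ cancel and one obtains $f = \widetilde{L}[f] + \widetilde{R}$ with the \emph{unweighted} operator $\widetilde{L} := \sum_{i=1}^m a_i M_{\lambda_i}$ and remainder $\widetilde{R}(t) := t^{-\dimension} R(t)$. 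Since $R(\e) = O(\e^{\dimension - r})$, the new remainder obeys $\widetilde{R}(t) = O(t^{-r})$, so in the notation of Theorem~\ref{thm:transSFE} the decay exponent is $\sigma_0 = r$; moreover the associated zeta function $\zeta_{\widetilde{L}}(s) = (1 - \sum_i a_i \lambda_i^s)^{-1}$ has pole set exactly the complexified Moran set of Equation~\ref{eqn:cdimsFromSFE}.

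Second, I would apply Theorem~\ref{thm:transSFE} to $f$ (its hypotheses on $\widetilde{L}$ hold by assumption) and use Proposition~\ref{prop:zetaMellin} to identify $\Mm^\delta[f](s) = \tubezeta_{X,\Omega}(s;\delta)$, giving the factorization $\tubezeta_{X,\Omega}(s;\delta) = \zeta_{\widetilde{L}}(s)\bigl(E(s) + \Mm^\delta[\widetilde{R}](s)\bigr)$ on $\HH_r \setminus \Dd_{\widetilde{L}}$, where $E$ is entire and $\Mm^\delta[\widetilde{R}]$ is holomorphic in $\HH_r$. Because the second factor is holomorphic on all of $\HH_r$, multiplying by it can only cancel poles, never introduce them; hence every pole of $\tubezeta_{X,\Omega}$ in $\HH_r$ is a pole of $\zeta_{\widetilde{L}}$, which yields the claimed containment $\Dd_X(\HH_r) \subset \{\omega \in \HH_r : 1 = \sum_i a_i \lambda_i^\omega\}$.

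Third, for the abscissa of convergence I would argue by positivity, independently of the factorization. The integral defining $\tubezeta_{X,\Omega}(s;\delta)$ carries the nonnegative, nondecreasing factor $V_{X,\Omega} \geq 0$, so it behaves like a generalized Dirichlet integral whose abscissa of absolute convergence is governed solely by the decay of $V_{X,\Omega}(t)$ as $t \to 0^+$. By the definition of $\overline{\dim}_\Mink(X,\Omega)$ as the infimal exponent $\alpha$ with $\limsup_{t\to 0^+} t^{-(\dimension - \alpha)} V_{X,\Omega}(t) < \infty$, a direct comparison (splitting the integral and bounding it against $\int_0^\delta t^{\sigma - \alpha - 1}\,dt$ as in Equation~\ref{eqn:partialTubeEstimate}, with the monotonicity of $V_{X,\Omega}$ supplying the matching lower bound) shows the integral converges for $\Re(s) > D$ and diverges for $\Re(s) < D$. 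Hence $D(\tubezeta_{X,\Omega}) = \overline{\dim}_\Mink(X,\Omega) = D$, and since the integral converges absolutely and holomorphically to the right of its abscissa, $\tubezeta_{X,\Omega}$ is holomorphic in $\HH_D$. The hypothesis $r < D$ places $D$ inside $\HH_r$, so the first part applies to it: combined with the Landau-type principle that the real abscissa is itself a singularity of the continuation, $D$ must be a genuine pole, i.e. the unique real solution of the Moran equation (the similarity dimension of $\widetilde{L}$).

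The algebra of the reduction is routine, so the real work lies in two places. The first is the one-directional pole count: one must be sure the containment goes only one way and that $E + \Mm^\delta[\widetilde{R}]$ contributes no poles in $\HH_r$ (this rests on $E$ being entire and on the estimate $\widetilde{R} = O(t^{-r})$ controlling $\Mm^\delta[\widetilde{R}]$ throughout $\HH_r$). The second, and more delicate, is the Landau-positivity argument identifying the abscissa with $\overline{\dim}_\Mink$, where the hypothesis $r < D$ does genuine work by guaranteeing that the leading real dimension is visible within the window $\HH_r$ analyzed in the first part.
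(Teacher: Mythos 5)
Your argument for the containment in Equation~\ref{eqn:cdimsFromSFE} is essentially identical to the paper's: normalize to $f(t)=t^{-\dimension}V_{X,\Omega}(t)$ so that the weights $\lambda_i^{\dimension}$ cancel, note that $t^{-\dimension}R(t)=O(t^{-r})$ so that $\sigma_0=r$ in Theorem~\ref{thm:transSFE}, identify $\Mm^\delta[f]=\tubezeta_{X,\Omega}(\,\cdot\,;\delta)$ via Proposition~\ref{prop:zetaMellin} to get the factorization $\tubezeta_{X,\Omega}(s;\delta)=\zeta_L(s)\bigl(E(s;\delta)+\tubezeta_R(s;\delta)\bigr)$ on $\HH_r\setminus\Dd_L$, and observe that the holomorphic second factor can only cancel, never create, poles. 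The genuine difference is in the last claim: the paper disposes of the abscissa of absolute convergence purely by citation (Theorem~4.1.7 of \cite{LRZ17_FZF}, together with Equation~2.2.23 and Theorem~5.3.2 there, via the relation between tube and distance zeta functions), whereas you prove it directly from positivity and monotonicity of $V_{X,\Omega}$: comparison against $\int_0^\delta t^{\sigma-\alpha-1}\,dt$ for $D<\alpha<\sigma$ gives convergence in $\HH_D$, and for $\sigma<\alpha<D$ the sequence $t_k\to0^+$ witnessing $\limsup_{t\to0^+}t^{-(\dimension-\alpha)}V_{X,\Omega}(t)=\infty$, integrated over $[t_k,2t_k]$ using monotonicity, forces divergence (note that $\sigma<\dimension$ holds automatically here since $\sigma<D\le\dimension$, so the elementary antiderivative bound applies). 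This buys a self-contained proof that keeps the result independent of external machinery, and it shows the abscissa identity does not actually need the hypothesis $r<D$; in your writeup that hypothesis is used only for the supplementary Landau-type observation that $D$ is then a genuine pole and hence the unique real Moran root --- a valid and pleasant corollary, but more than the theorem asserts. Both routes are correct.
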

Explicitly, the functional equation for $V_{X,\Omega}$ takes the form:
\begin{align}
    \label{eqn:tubeSFE}
    V_{X,\Omega}(\e) &= L[V_{X,\Omega}](\e) + R(\e) = \sum_{i=1}^{m} a_i\lambda_i^\dimension V_{X,\Omega}(\e/\lambda_i) + R(\e).
\end{align}
The structure of these complex (fractal) dimensions is dependent on the nature of the scaling ratios. In particular, there is a dichotomy between the lattice (or arithmetic) and the non-lattice (or non-arithmetic) cases. We note that the structure results are exactly the same as in Theorem~2.16 of \cite{LapvFr13_FGCD}. 
\begin{proof}
    First, we note that the function $f(t) = t^{-\dimension} V_{X,\Omega}(t)$ satisfies the corresponding scaling functional equation 
    \[ f(t) = \sum_{i=1}^m a_i f(t/\lambda_i) + t^{-\dimension} R(t), \]
    since $\lambda_i^{\dimension}\e^{-\dimension} V_{X,\Omega}(\e/\lambda_i)=f(\e/\lambda_i)$. Note that by Proposition~\ref{prop:zetaMellin}, the restricted Mellin transform of $f$ is exactly the tube zeta function of the RFD $(X,\Omega)$, i.e. $\Mm^\delta[f](s)=\tubezeta_{X,\Omega}(s;\delta)$. Let us also define $\tubezeta_R$ by $\tubezeta_R(s;\delta)=\Mm^\delta[t^{-\dimension}R(t)](s)$. 

    So, by Theorem~\ref{thm:transSFE}, we have that the tube zeta function $\tubezeta_{X,\Omega}$ satisfies the functional equation: 
    \begin{equation}
        \label{eqn:zetaFE}
        \tubezeta_{X,\Omega}(s;\delta) = \zeta_L(s)(E(s;\delta)+\tubezeta_R(s;\delta)).
    \end{equation}
    Here, $L=\sum_{i=1}^m a_i M_{\lambda_i}$, $\zeta_L$ is its associated scaling zeta function, and the remainder term satisfies $t^{-\dimension}R(t)=O(t^{-r})$ since $R(t)=O(t^{\dimension-r})$. This formula is valid for all $s\in \CC\setminus \Dd_L \cap \HH_r$. Additionally, the entire function $E$ can be written in terms of partial tube zeta functions, namely:
    \[ E(s) = E(s;\delta) = \sum_{i=1}^m a_i \lambda_i^s \partialzeta_{X,\Omega}(s;\delta,\delta/\lambda_i). \]
    The theorem additionally yields that $\tubezeta_{X,\Omega}(s;\delta)$ is holomorphic in $\HH_{\max(D,r)}$ and meromorphic in $\HH_r$, with $\Dd_X(\HH_r)\subset \Dd_L(\HH_L)$. Equation~\ref{eqn:cdimsFromSFE} follows from this containment and Proposition~\ref{prop:propertiesSZF}. 

    The abscissa of absolute convergence is a consequence of Theorem~4.1.7 in \cite{LRZ17_FZF}, in light of the functional equation relating tube and distance zeta functions. See Equation~2.2.23 and Theorem~5.3.2 in \cite{LRZ17_FZF}. 
\end{proof}

Note that Equation~\ref{eqn:zetaFE} yields the following additional information about the structure of $\Dd_X(\HH_r)$. Namely if $\Re(\omega)>r$, then $\omega\in\Dd_X(\HH_r)$ if and only if $\zeta_L$ has a pole of higher order than the degree of vanishing of the term $E(s;\delta)+\tubezeta_R(s;\delta)$. So, we obtain the following corollary. 
\begin{corollary}[Criterion for Exact Complex Dimensions]
    \label{cor:exactCdims}
    Let $(X,\Omega)$ be a relative fractal drum as in Theorem~\ref{thm:cdimsFromSFE}, in particular whose complex dimensions in $\HH_r$ satisfy Equation~\ref{eqn:cdimsFromSFE}. Suppose that for each $\omega\in\Dd_L(\HH_r)$ we have that 
    \[ \sum_{i=1}^m a_i\lambda_i^\omega \partialzeta_{X,\Omega}(\omega;\delta,\delta/\lambda_i) + \tubezeta_R(\omega;\delta) \neq 0. \]
    Then in fact $\Dd_X(\HH_r)=\Dd_L(\HH_r)$, where $\Dd_X(\HH_r)$ is the set of complex dimensions of the RFD $(X,\Omega)$ in $\HH_r$, as in Definition~\ref{def:cDims}. In other words, each such $\omega$ is a complex dimension of $(X,\Omega)$ rather than simply a possible complex dimension.
\end{corollary}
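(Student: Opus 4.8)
The plan is to extract the result directly from the factored functional equation for the tube zeta function established in the proof of Theorem~\ref{thm:cdimsFromSFE}. Recall Equation~\ref{eqn:zetaFE}, which reads
\[ \tubezeta_{X,\Omega}(s;\delta) = \zeta_L(s)\bigl(E(s;\delta)+\tubezeta_R(s;\delta)\bigr), \]
valid on $\CC\setminus\Dd_L\cap\HH_r$, where $E(s;\delta) = \sum_{i=1}^m a_i\lambda_i^s \partialzeta_{X,\Omega}(s;\delta,\delta/\lambda_i)$. Theorem~\ref{thm:cdimsFromSFE} already gives the inclusion $\Dd_X(\HH_r)\subset\Dd_L(\HH_r)$; the content of the corollary is the reverse inclusion, i.e.\ that under the stated non-vanishing hypothesis every $\omega\in\Dd_L(\HH_r)$ is an honest pole of $\tubezeta_{X,\Omega}$, not merely a candidate.

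First I would isolate the second factor $g(s):= E(s;\delta)+\tubezeta_R(s;\delta)$ and check that it is holomorphic on $\HH_r$. This is immediate from the ingredients: $E$ is entire, being a finite linear combination of the entire partial tube zeta functions $\partialzeta_{X,\Omega}$ (Lemma~\ref{lem:partialTubeEntire} and Theorem~\ref{thm:transSFE}), and $\tubezeta_R(s;\delta)=\Mm^\delta[t^{-\dimension}R(t)](s)$ is holomorphic in $\HH_r$ because $t^{-\dimension}R(t)=O(t^{-r})$, so its truncated Mellin transform converges there. Hence $g$ carries no poles of its own in $\HH_r$ that could interfere.

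The key step is then a local analysis at a fixed $\omega\in\Dd_L(\HH_r)$. By Proposition~\ref{prop:propertiesSZF}, $\zeta_L$ has a pole at $\omega$, say of order $p\geq 1$; write $\zeta_L(s) = (s-\omega)^{-p}u(s)$ with $u$ holomorphic and $u(\omega)\neq 0$ near $\omega$. Since $g$ is holomorphic at $\omega$, the product $\tubezeta_{X,\Omega}(s;\delta)=\zeta_L(s)g(s)$ has a Laurent expansion at $\omega$ whose most singular coefficient is governed by $u(\omega)g(\omega)$. The hypothesis of the corollary is exactly that
\[ g(\omega)=\sum_{i=1}^m a_i\lambda_i^\omega \partialzeta_{X,\Omega}(\omega;\delta,\delta/\lambda_i)+\tubezeta_R(\omega;\delta)\neq 0, \]
so $g$ does not vanish at $\omega$ and cannot cancel the pole; consequently $\tubezeta_{X,\Omega}$ has a pole of the same order $p$ at $\omega$, whence $\omega\in\Dd_X(\HH_r)$ by Definition~\ref{def:cDims}. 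Combining this with the inclusion from Theorem~\ref{thm:cdimsFromSFE} yields $\Dd_X(\HH_r)=\Dd_L(\HH_r)$.

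I do not expect a genuine obstacle here; the one point requiring care is logical rather than computational, namely to quantify the cancellation mechanism correctly: it is the non-vanishing of the holomorphic factor $g$ at $\omega$, not merely its finiteness, that guarantees the pole survives. A vanishing of $g$ to order $k<p$ would still leave a pole of reduced order $p-k$, but $g(\omega)\neq 0$ is the clean sufficient condition that rules out any cancellation and preserves the full order of the pole coming from $\zeta_L$.
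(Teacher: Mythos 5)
Your proposal is correct and follows essentially the same route as the paper: the paper justifies the corollary via the preceding remark that, by Equation~\ref{eqn:zetaFE}, $\omega\in\Dd_X(\HH_r)$ precisely when the pole order of $\zeta_L$ exceeds the vanishing order of $E(s;\delta)+\tubezeta_R(s;\delta)$, which is exactly your local Laurent-expansion argument with the non-vanishing hypothesis forcing vanishing order zero. Your added remark about partial cancellation reducing (but not necessarily killing) the pole order matches the paper's discussion as well.
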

We conjecture that for every pole of $\zeta_L$, there is at least one value of $\delta$ for which the factor is nonzero. This claim would be enough to prove that these sets of poles are in fact equal, as the poles of $\tubezeta_{X,\Omega}$ are known to be independent of the parameter $\delta$. The integrals defining $\partialzeta_{X,\Omega}$ are notably oscillatory, so one must take care in estimating them and their linear combination. Additionally, this claim is likely dependent on a suitable estimate for $\tubezeta_R(s;\delta)$. 

Of note, Theorem~\ref{thm:cdimsFromSFE} is only as good as the scaling exponent in the known estimate of the remainder term. If the remainder term is merely bounded (viz. $R(t)=O(\e^0)$, with $r=\dimension$), then there is little information to be gained. The best results occur if the parameter $r$ is small; in Section~\ref{sec:appToGKFs}, our results shall apply for any order $\alpha>0$ arbitrarily close to $0$. Informally, $r$ corresponds to the (Minkowski) dimension of the remainder quantity: when $R$ scales like the volume of an epsilon neighborhood of a point, the exponent is $\dimension$ and thus $r=0$; when $R$ scales like the volume of a neighborhood of a line, so the exponent is $\dimension-1$ and $r=1$; and so forth. In other words, our work determines the complex dimensions with real parts (or amplitudes) up to the real part corresponding to the dimension of (the estimate of) the remainder term.

\section{Results on Generalized von~Koch Fractals}
\label{sec:appToGKFs}
%
%

In this section, we shall apply the results of Section~\ref{sec:fractalSFEs} to generalized von Koch fractals (or GKFs.) See Section~\ref{sec:GKFs} for the definitions and properties concerning these fractals. In particular, we provide a scaling functional equation for the tubular neighborhood volume and estimates of the error term that allow us to describe the possible complex dimensions of these GKFs. 

Notably, a lattice/non-lattice dichotomy arises depending on whether the scaling ratio $r$ and its conjugate factor $\ell=\frac12(1-r)$ are arithmetically related or not. For the standard von Koch snowflake, or the $(3,\frac13)$-von Koch snowflake to be precise, the ratios are the same: $r=\ell$, and thus the fractal falls under the lattice (aka arithmetic) case. The emergence of the non-lattice (aka non-arithmetic) case is unique to the generalizations of the von Koch snowflake. 

\subsection{Complex Dimensions of GKFs}
Let $\Knr$ be a $(n,r)$-von Koch snowflake which is a simple, closed curve.\footnote{See Proposition~\ref{prop:selfAvoid} for a sufficient condition based on $r$ and $n$.} By the Jordan curve theorem, a simple closed curve bisects the plane into two connected components, one of which is bounded and the other unbounded. Let us denote by $\Omega$ the bounded component; then $\partial\Omega=\Knr$. 

Note that $(\Knr,\Omega)$ is readily seen to be a relative fractal drum since the sets are both bounded. We shall be interested in describing the inner tubular neighborhood of $\Knr$, which is the tube function of $\Knr$ relative to the set $\Omega$. Our goal is to construct a tube functional equation \`a la Equation~\ref{eqn:tubeSFE} and apply Theorem~\ref{thm:cdimsFromSFE}. 

\begin{figure}[t]
    \centering
    \subfloat{\includegraphics[width=4cm]{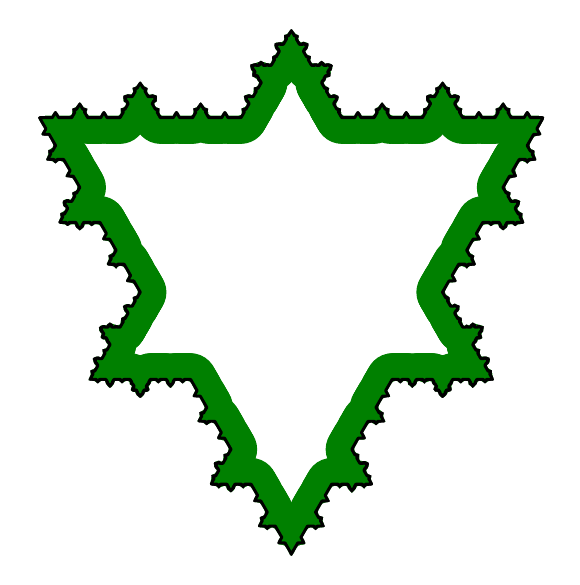}}
    \qquad
    \subfloat{\includegraphics[width=4cm]{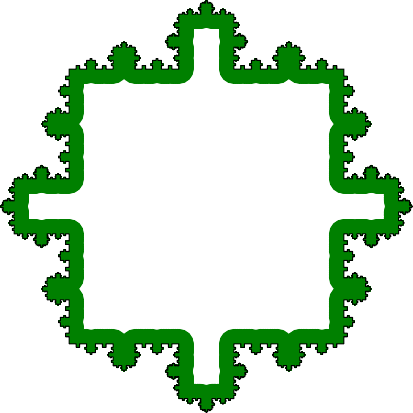}}
    \caption{Inner tubular neighborhoods of the fractals $K_{3,\frac15}$ (left) and $K_{4,\frac15}$ (right).}
    \label{fig:innerTubeNbds}
\end{figure}

To that end, we first note that by construction $\Knr$ has $n$-fold rotational symmetry about its center. So we may reduce our analysis to the portion of $(\Knr)_\e$ contained inside a single sector of angle $2\pi/n$ intersected with $\Omega$. So, let $S$ be any one of the $n$ congruent (open) sectors defined by the center of $\Knr$ and two adjacent vertices of the underlying regular $n$-gon. Defining $V_K$ for convenience, we have that 
\[ V_K(\e) := V_{\Knr,\Omega\cap S}(\e) = \frac1n V_{\Knr,\Omega}(\e)  \]
for all $\e\geq 0$. Let us also define a set $K(\e)$ which represents the portion of the inner epsilon neighborhood contained in $\Omega\cap S$, namely
\[ K(\e) := \{ y\in \Omega\cap S : \exists x\in K,\, |x-y|<\e  \}=(\Knr)_\e\cap \Omega\cap S. \]
We may partition $K(\e)$ into $2+(n-1)$ self-similar copies of itself as well as $4$ pieces contained within congruent triangles and $2$ pieces which are exactly circular sectors. This partitioning leads to the following theorem.
\begin{theorem}[Scaling Functional Equation of $V_K$]
    \label{thm:vonKochSFE}
    For any $\e\geq 0$, $V_K$ satisfies the approximate scaling functional equation
    \begin{equation}
        \label{eqn:volumeFunctionalEq}
        V_K(\e) = 2\ell^2V_K(\e/\ell)+(n-1)r^2V_K(\e/r)+R(\e),
    \end{equation}
    where $\ell=\frac12(1-r)$ is the conjugate scaling factor to $r$ and $R(\e)=O(\e^2)$ as $\e\to 0^+$.
    
    The remainder satisfies the estimate $R(\e)\leq (2\cot(\theta_n/2)+\theta_n)\e^2$, where $\theta_n=\frac{2\pi}n$ is the central angle of the regular $n$-gon, for all $\e>0$.   
\end{theorem}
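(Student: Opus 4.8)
The plan is to establish Equation~\ref{eqn:volumeFunctionalEq} by a direct geometric decomposition of the set $K(\e) = (\Knr)_\e \cap \Omega \cap S$, paralleling the proof of Theorem~\ref{thm:SFEofSSS} but tracking the discrepancy between $\Omega\cap S$ and the union of its scaled sub-copies as an explicit remainder. Within the sector $S$ the snowflake boundary is a congruent copy of a single $(n,r)$-von Koch curve $\Cnr$, so I would start from its self-similar decomposition in Definition~\ref{def:vkCurve}, namely $\Cnr = \phi_L(\Cnr)\cup\phi_R(\Cnr)\cup\bigcup_{k=1}^{n-1}\psi_k(\Cnr)$. This exhibits $\Cnr$ as $2$ pieces of scaling ratio $\ell$ (from $\phi_L,\phi_R$) together with $n-1$ pieces of scaling ratio $r$ (from $\psi_1,\dots,\psi_{n-1}$), which is exactly the multiplicity structure appearing in the two main terms.

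Next I would partition the region $\Omega\cap S$ into sub-regions adapted to these curve pieces together with a leftover region, and verify on each sub-region the osculating/distance condition (condition $5$ of Definition~\ref{def:oscRFD}): a point of the sub-region associated to a piece $\phi(\Cnr)$ is strictly closest to that piece rather than to any other piece or to a neighboring curve. Self-avoidance (Proposition~\ref{prop:selfAvoid}) is what keeps the neighborhoods of distinct pieces from overlapping beyond the accounted-for error. Granting this, $K(\e)$ restricted to such a sub-region equals $(\phi(\Cnr))_\e$ restricted to it, and Lemma~\ref{lem:volumeScaling} with $\dimension=2$ converts its measure into $\lambda^2 V_K(\e/\lambda)$, with $\lambda\in\{\ell,r\}$. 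Summing over the $2$ pieces of ratio $\ell$ and the $n-1$ pieces of ratio $r$ produces $2\ell^2 V_K(\e/\ell)+(n-1)r^2 V_K(\e/r)$, leaving the contribution of the leftover region as $R(\e)$.

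Then I would identify and estimate this remainder. The leftover is supported near the finitely many junction points where consecutive curve pieces meet (their $\e$-neighborhoods overlap or leave a small wedge) and near the two vertices of the underlying $n$-gon that bound the sector (where the two radii of $S$ truncate the neighborhood). Following the count indicated before the statement, the leftover embeds into $4$ congruent triangles and $2$ circular sectors. Each triangle can be taken right-angled with an acute angle $\theta_n/2$ and with the leg opposite this angle of length at most $\e$, so its area is at most $\tfrac12\e^2\cot(\theta_n/2)$; each of the two remaining pieces is contained in a circular sector of radius at most $\e$ and angle at most $\theta_n$, of area at most $\tfrac12\theta_n\e^2$. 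Adding these contributions gives
\[ R(\e)\le 4\cdot\tfrac12\e^2\cot(\theta_n/2)+2\cdot\tfrac12\theta_n\e^2=(2\cot(\theta_n/2)+\theta_n)\e^2, \]
which is the claimed bound and in particular shows $R(\e)=O(\e^2)$ as $\e\to0^+$.

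The hard part is the geometric bookkeeping of the remainder: precisely locating the leftover region, confirming it embeds in exactly $4$ triangles and $2$ sectors of the stated dimensions, and in tandem verifying that the osculating condition genuinely holds on each scaled sub-region so that no contribution from neighboring pieces or from the two adjacent sectors of $\Omega$ leaks in uncounted. This hinges on computing the turning angles of the curve at its base junctions $(\ell,0)$ and $(\ell+r,0)$ and at the frill vertices (governed by $\theta_n$ and $\alpha_n$), and on using self-avoidance to control how close distinct pieces and adjacent curves can come.
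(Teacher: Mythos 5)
Your proposal follows essentially the same route as the paper's proof: reduce by $n$-fold symmetry to a single curve $K\cong\Cnr$ in the sector, use the IFS decomposition with $U=\Omega\cap S$ as an osculating set so that the scaling machinery (Theorem~\ref{thm:SFEofSSS}, i.e.\ Lemma~\ref{lem:volumeScaling} applied piecewise) produces the terms $2\ell^2V_K(\e/\ell)+(n-1)r^2V_K(\e/r)$, and bound the leftover region by four triangles and two circular sectors, with exactly the same arithmetic yielding $(2\cot(\theta_n/2)+\theta_n)\e^2$. The one step you defer as ``the hard part''---verifying the osculating condition---is settled in the paper not by turning-angle computations but by a shorter reflection-symmetry argument: the images $\psi_j(U)$ partition the scaled $n$-gon along its lines of symmetry, adjacent pieces $\psi_j(K)$ and $\psi_{j+1}(K)$ are mirror images across those lines, so $d(\cdot,K)=\min_{\phi\in\Phi}d(\cdot,\phi(K))$ can only change identity across them, while for the edge pieces one notes that a ball centered in $\phi_L(U)$ (resp.\ $\phi_R(U)$) meets $\phi_L(K)$ (resp.\ $\phi_R(K)$) before it can reach $\psi_1(K)$ (resp.\ $\psi_{n-1}(K)$).
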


\begin{proof}
    Without loss of generality, we may suppose that $K:=\Knr\cap S$ is exactly the $(n,r)$-von Koch curve $\Cnr$; in general, $K$ is isometric to this set. In this setting, we have explicitly the similitudes for the self-similar system $\Phi_{n,r}=\set{\phi_L,\phi_R,\psi_k, k=1,...,n-1}$ which defines $\Cnr$ as in Definition~\ref{def:vkCurve}. (In the general setting, simply compose these maps $\phi_i$ with the isometry between the sets.) 

    By definition, $K$ is the attractor of the self-similar system $\Phi$ in Definition~\ref{def:vkCurve}. There are two similitudes $\phi_L$ and $\phi_R$ with scaling ratios $\ell:=\frac12(1-r)\in(0,1)$ and $n-1$ similitudes $\psi_k$ with scaling ratios $r\in(0,1)$. So, we define the scaling operator $L=L_{n,r}$ by $L:= 2M_\ell + (n-1)M_r.$ 

    Let $U=\Omega\cap S$. Then we have that $U$ is an osculating set for $\Phi$. Indeed, the collection of sets $\set{\phi_L(U),\phi_R(U),\psi_k(U),k=1,...,n-1}$ is readily seen to be pairwise disjoint and each set is contained within $U$ itself. $(K,U)$ is a clearly a relative fractal drum since $K$ and $U$ are bounded sets. 
    
    The osculating condition may be seen from the symmetry involved in the construction of $K$. In particular, we shall use the property that if $X$ and $X'$ are two sets in $\RR^2$ related by reflection about the line $L$, then the set of points which are equidistant to $X$ and $X'$ must lie on $L$. We note that by choice of $U$, its images $\psi_k(U)$ partition a regular $n$-gon scaled by a factor of $r$ about the lines of symmetry passing from the center of the shape to its vertices. Because the adjacent images $\psi_j(K)$ and $\psi_{j+1}(K)$ are reflections of each other about these lines, it follows that $d(\cdot,K)=\min_{\phi\in\Phi}d(\cdot,\phi(K))$ can only change form across these lines by continuity of the distance functions. In other words, it follows that $d(\cdot,K)\equiv d(\cdot,\psi_j(K))$ in each $\psi_j(U)$, $j=1,...,n-1$. 

    Now, the other two iterates on the edges are simpler. Indeed, we have that $d(\cdot,K)\equiv d(\cdot,\phi_L(K))$ in $\phi_L(U)$ (resp. $d(\cdot,K)\equiv d(\cdot,\phi_R(K))$ in $\phi_R(U)$) is clear from the geometry of $K$ since a ball $B_\e(y)$ for $y\in\phi_L(U)$ (resp. $y\in\phi_R(U)$) will intersect with $\phi_L(K)$ (resp. $\phi_R(K))$ before ever reaching $\psi_1(K)$ (resp. $\psi_{n-1}(K)$). 
    
    So, by Theorem~\ref{thm:SFEofSSS}, we obtain the following equality:
    \[ V_K(\e) = 2\ell^2 V_K(\e/\ell) + (n-1)r^2V_K(\e/r) + V_{K,R}(\e), \]
    where $R=U\setminus(\cup_{\phi\in\Phi} \phi(U))$ and $V_{K,R}(\e) = m(K(\e)\cap R)$. Now, we note that $K(\e)\cap R$ may be partitioned into two sectors of angle $\theta_n$ and length $\e$ and four congruent pieces contained in triangles of height $\e$ and width $\e\cot(\theta_n/2)$. See Figure~\ref{fig:trianglet} for a depiction of one such triangle. Therefore, we may write that
    \[ R(\e) := V_{K,R}(\e) \leq \theta_n\e^2+2\e^2\cot(\theta_n/2). \]
\end{proof}

\begin{figure}[t]
    \centering
    \includegraphics[width=0.6\textwidth,trim=10 20 10 0, clip]{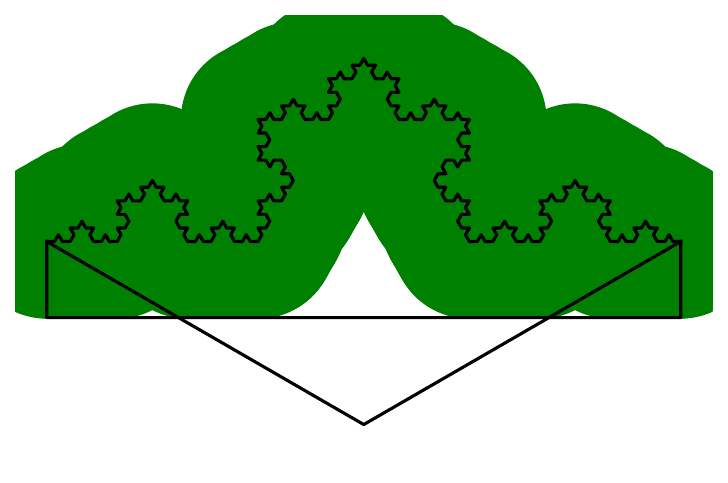}
    \caption{A portion of the epsilon neighborhood of the von Koch snowflake $K_{3,1/3}$ is depicted. In particular, note the regions contained in the triangles of height $\e$ and width $\e\cot(\theta_n/2)$, where $\theta_n=2\pi/n$, on the left and the right.}
    \label{fig:trianglet}
\end{figure} 

Note that it follows that $V_{\Knr,\Omega}$ satisfies the same functional equation as Equation~\ref{eqn:volumeFunctionalEq} but with remainder $nR(\e)$. So, Theorem~\ref{thm:cdimsFromSFE} will yields a description of the complex dimensions of $\Knr$. Additionally, we highlight the description of the tube zeta function in terms of a self-similar zeta function, $\zeta_L$.
\begin{theorem}[Possible Complex Dimensions of GKFs]
    \label{thm:cDimsOfGKFs}
    Let $n\geq3$ and $r\in(0,1)$ be such that $\Knr$ is a simple, closed curve. Let $\tilde\zeta_{\Knr,\Omega}(s)$ be the relative tube zeta function for $\Knr$ relative to the bounded component $\Omega$ of $\RR^2\setminus\Knr$. Further, let $\ell=\frac12(1-r)$ be the conjugate scaling ratio to $r$ and $\delta>0$. Then for all $s\in\CC\setminus \Dd_L\cap\HH_0$, 
    \begin{equation}
        \label{eqn:zetaGKF}
        \tilde\zeta_{\Knr,\Omega}(s) = \cfrac1{1-2\ell^s-(n-1)r^s}\cdot \Big( E(s;\delta) + \tubezeta_R(s;\delta) \Big),
    \end{equation}
    where $E(s;\delta)$ is an entire function, $\tubezeta_R(s;\delta):=n\int_0^\delta t^{s-3}R(t)\,dt$ is holomorphic in the right half plane $\HH_0$, and $L:=2M_\ell+(n-1)M_r$ is a scaling operator with associated zeta function $\zeta_L(s)=\frac1{1-2\ell^s-(n-1)r^s}$ and complex dimensions $\Dd_L(\CC)=\set{\omega\in\CC:1=2\ell^\omega+(n-1)r^\omega}$. 

    Consequently, we have that the complex dimensions of $\Knr$ satisfy the containment: 
    \[ \Dd_{\Knr}(\HH_0) \subseteq \{ \omega \in \CC : 1-2\ell^\omega-(n-1)r^\omega=0 \}. \]
    $\tubezeta_{\Knr,\Omega}$ is meromorphic in $\HH_0$ with poles at the points in $\Dd_{\Knr}(\HH_0)$ and holomorphic in $\HH_D$, where $D=\overline{\dim}_\Mink(\Knr,\Omega)$. 
\end{theorem}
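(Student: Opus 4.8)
The plan is to recognize the theorem as a direct specialization of Theorem~\ref{thm:cdimsFromSFE} to the scaling functional equation already produced in Theorem~\ref{thm:vonKochSFE}, so the work is almost entirely a matter of matching hypotheses and transcribing the conclusion. First I would pass from the single-sector tube function $V_K$ to the full tube function $V_{\Knr,\Omega}$. Because $\Knr$ has $n$-fold rotational symmetry one has $V_{\Knr,\Omega} = n\,V_K$, so multiplying Equation~\ref{eqn:volumeFunctionalEq} through by $n$ shows that $V_{\Knr,\Omega}$ satisfies $V_{\Knr,\Omega}(\e) = 2\ell^2 V_{\Knr,\Omega}(\e/\ell) + (n-1)r^2 V_{\Knr,\Omega}(\e/r) + nR(\e)$, with the remainder estimate of Theorem~\ref{thm:vonKochSFE} scaled by $n$, so that $nR(\e) = O(\e^2)$ as $\e \to 0^+$. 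This is exactly a scaling functional equation of the form required by Theorem~\ref{thm:cdimsFromSFE}.

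Next I would verify the hypotheses of Theorem~\ref{thm:cdimsFromSFE} with ambient dimension $\dimension = 2$. The two scaling ratios are $\ell$ and $r$, both in $(0,1)$, with positive integer multiplicities $a_1 = 2$ and $a_2 = n - 1$; the remainder has order $O(\e^{\dimension - r_0})$ with threshold $r_0 = 0$, where I write $r_0$ to avoid clashing with the scaling ratio $r$. Since $\sum a_i = n + 1 > 1$ and there is a ratio (namely $\ell$) of multiplicity at least two, the similarity dimension $D$ is positive, so that $r_0 = 0 < D$ and the Minkowski-dimension conclusion of the theorem is available. I would also flag the degenerate case $r = \tfrac13$, for which $\ell = r$ and the ratios coincide: there one reads $L$ in its reduced form $(n+1)M_{1/3}$, but the displayed zeta function $1/(1 - 2\ell^s - (n-1)r^s)$ and all subsequent formulae remain valid verbatim.

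With the hypotheses in place, Theorem~\ref{thm:cdimsFromSFE} --- concretely, the functional Equation~\ref{eqn:zetaFE} appearing in its proof --- gives the factorization $\tubezeta_{\Knr,\Omega}(s;\delta) = \zeta_L(s)\big(E(s;\delta) + \tubezeta_R(s;\delta)\big)$ on $\CC \setminus \Dd_L \cap \HH_0$, where by Proposition~\ref{prop:propertiesSZF} the associated scaling zeta function is $\zeta_L(s) = 1/(1 - 2\ell^s - (n-1)r^s)$ with pole set $\Dd_L(\CC) = \{\omega : 1 = 2\ell^\omega + (n-1)r^\omega\}$. Here $E(s;\delta) = 2\ell^s \partialzeta_{\Knr,\Omega}(s;\delta,\delta/\ell) + (n-1)r^s \partialzeta_{\Knr,\Omega}(s;\delta,\delta/r)$ is entire by Lemma~\ref{lem:partialTubeEntire}, while $\tubezeta_R(s;\delta) = \Mm^\delta[t^{-2}\,nR(t)](s) = n\int_0^\delta t^{s-3} R(t)\,dt$; its holomorphy on $\HH_0$ follows from the truncated-Mellin convergence discussion after Definition~\ref{def:truncatedMellin}, since $t^{-2}\,nR(t) = O(t^0)$ as $t \to 0^+$ forces convergence (and holomorphy) for $\Re(s) > 0$.

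The stated consequences then read off immediately: the poles of $\tubezeta_{\Knr,\Omega}$ in $\HH_0$ can only occur where $\zeta_L$ has a pole, yielding the containment $\Dd_{\Knr}(\HH_0) \subseteq \{\omega : 1 - 2\ell^\omega - (n-1)r^\omega = 0\}$, and the same theorem supplies meromorphy in $\HH_0$ with poles at $\Dd_{\Knr}(\HH_0)$ together with holomorphy in $\HH_D$, $D = \overline{\dim}_\Mink(\Knr,\Omega)$. I do not expect a genuine obstacle here: the substantive geometric input (the osculating property, and the triangle-and-sector bound on the remainder) was already discharged in Theorem~\ref{thm:vonKochSFE}, so the remaining risk is purely bookkeeping --- correctly converting the $O(\e^2)$ remainder into the threshold $r_0 = 0$ so that the window is exactly $\HH_0$, keeping the $\lambda_i^{\dimension}$ weights aligned between the two theorems, and handling the $\ell = r$ degeneracy.
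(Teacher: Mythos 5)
Your proposal is correct and follows exactly the paper's own route: the result is deduced as a corollary of Theorem~\ref{thm:vonKochSFE} (scaled by $n$ to pass from $V_K$ to $V_{\Knr,\Omega}$) and Theorem~\ref{thm:cdimsFromSFE}, reading the factorization off Equation~\ref{eqn:zetaFE} with $E(s;\delta)$ expressed via partial tube zeta functions. Your extra bookkeeping---verifying the hypotheses with $\dimension=2$ and threshold $0$, and flagging the degenerate lattice case $\ell=r$ at $r=\tfrac13$---is sound and only adds detail the paper leaves implicit.
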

\begin{proof}
    This is a corollary of Theorem~\ref{thm:vonKochSFE} and Theorem~\ref{thm:cdimsFromSFE}. See in particular Equation~\ref{eqn:zetaFE} from the proof of the latter. Explicitly, we have that 
    \[ E(s;\delta) = 2\ell^s\partialzeta_{\Knr,\Omega}(s;\delta,\delta/\ell)+(n-1)r^s\partialzeta_{\Knr,\Omega}(s;\delta,\delta/r) \]
    and that $R(\e)\leq (\theta_n+2\cot(\theta_n/2))\e^2$, where $\theta_n=2\pi/n$. 
\end{proof}

\subsection{Languidity of Generalized von Koch Fractals}
In this section, we provide estimates on the growth of $\tubezeta_{\Knr,\Omega}$ known as languidity, see Definitions~5.1.3 and 5.1.4 in \cite{LRZ17_FZF}. These are primarily necessary in order to deduce tube formulae for $V_{\Knr,\Omega}$. Note that here we use $i$ for the imaginary unit, whereas in previous sections we have used it as an indexing variable. 

\begin{theorem}[Languidity of GKF Relative Fractal Drums]
    \label{thm:languidZetaGKF}
    Let $(\Knr,\Omega)$ be the relative fractal drum of a GKF as in Theorem~\ref{thm:cDimsOfGKFs}, with tube zeta function $\tubezeta_{\Knr,\Omega}$. Then $\tubezeta_{\Knr,\Omega}$ is languid with exponent $\kappa=0$. If, a fortiori, $\tubezeta_R$ is strongly languid as a function (in the sense of Definition~\ref{def:stronglyLanguid}) then so too is $\tubezeta_{\Knr,\Omega}$ and the RFD $(\Knr,\Omega)$.
\end{theorem}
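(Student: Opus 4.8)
The plan is to read off the conclusion from the factorization $\tubezeta_{\Knr,\Omega}(s;\delta) = \zeta_L(s)\big(E(s;\delta)+\tubezeta_R(s;\delta)\big)$ of Equation~\ref{eqn:zetaGKF}, bounding the three factors separately and then multiplying. By Proposition~\ref{prop:languidSZF}, the scaling zeta function $\zeta_L$ is already strongly languid with exponent $\kappa=0$: it is uniformly bounded on a sequence of horizontal intervals $I_n=[\alpha+i\tau_n,\beta+i\tau_n]$ straddling the critical strip, and it decays like $\Lambda^{|\sigma|}$, with $\Lambda=\min(\ell,r)$ and $\sigma=\Re(s)$, as $\sigma\to-\infty$. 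Recall also from Theorem~\ref{thm:cDimsOfGKFs} that $\tubezeta_{\Knr,\Omega}$ is holomorphic in $\HH_D$, as required by the languidity hypotheses.

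The crux is that both $E$ and $\tubezeta_R$ have moduli bounded by functions of $\sigma$ alone, with no growth in the imaginary direction. For $E(s;\delta)=2\ell^s\partialzeta_{\Knr,\Omega}(s;\delta,\delta/\ell)+(n-1)r^s\partialzeta_{\Knr,\Omega}(s;\delta,\delta/r)$, this follows from $|t^{s-N-1}|=t^{\sigma-N-1}$ together with the estimate of Equation~\ref{eqn:partialTubeEstimate}; using the identity $\ell^s(\delta/\ell)^{s-N}=\ell^N\delta^{s-N}$ one checks that $|E(\sigma+i\tau)|$ is bounded on every vertical strip and grows at most like $(\Lambda\delta)^{\sigma}$ as $\sigma\to-\infty$. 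For $\tubezeta_R(s;\delta)=n\int_0^\delta t^{s-3}R(t)\,dt$, the bound $R(t)\le(\theta_n+2\cot(\theta_n/2))t^2$ from Theorem~\ref{thm:vonKochSFE} yields $|\tubezeta_R(\sigma+i\tau)|\le n(\theta_n+2\cot(\theta_n/2))\delta^\sigma/\sigma$ for $\sigma>0$, again independent of $\tau$. Thus along any horizontal line each of $E$ and $\tubezeta_R$ contributes exponent $0$.

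For the languidity claim I would fix a screen $S$ lying in a strip $[\sigma_1,\sigma_2]\subset(0,D)$ on which $\zeta_L$ is bounded; such a screen is furnished by the languidity structure of $\zeta_L$ (in the lattice case one may take a vertical line free of poles, and in the non-lattice case a Lipschitz curve weaving around the complex dimensions, the content encapsulated by Proposition~\ref{prop:languidSZF}), and it stays inside $\HH_0$, where $\tubezeta_R$ is holomorphic. On this screen and on the intervals $I_n$, the tube zeta function is a product of three uniformly bounded factors, so $|\tubezeta_{\Knr,\Omega}|\le C$, giving \textbf{L1} and \textbf{L2} with $\kappa=0$.

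For the strong languidity claim, the additional hypothesis that $\tubezeta_R$ is strongly languid supplies its meromorphic continuation past $\HH_0$ together with \textbf{L1} and \textbf{L2}$'$ bounds along a sequence of screens $S_m$ receding to $-\infty$. The decay $\Lambda^{|\sigma|}$ of $\zeta_L$ then exactly absorbs the growth $(\Lambda\delta)^\sigma$ of $E$, so that the product $\zeta_L E$ obeys an \textbf{L2}$'$ bound of the form $CB^{|S_m(\tau)|}(|\tau|+1)^0$ with $B=1/\delta$, while $\zeta_L\tubezeta_R$ inherits the strong languidity of $\tubezeta_R$; on each interval $I_n$ the \textbf{L1} bound holds with a constant depending on the screen. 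Hence the product $\tubezeta_{\Knr,\Omega}$ is strongly languid, and the RFD $(\Knr,\Omega)$ is strongly languid by the convention that an RFD inherits the languidity of its tube zeta function. I expect the main obstacle to be precisely this last matching: verifying that the exponential growth of $E$ in $|\sigma|$ is cancelled by the decay of $\zeta_L$ so that the product satisfies \textbf{L2}$'$ rather than blowing up, and confirming that the single screen for the languidity claim can be kept inside $\HH_0$ in the non-lattice regime.
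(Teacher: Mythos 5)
Your proposal is correct and follows essentially the same route as the paper's proof: factor $\tubezeta_{\Knr,\Omega}$ via Equation~\ref{eqn:zetaGKF}, invoke Proposition~\ref{prop:languidSZF} for $\zeta_L$, bound $E$ using Equation~\ref{eqn:partialTubeEstimate} and $\tubezeta_R$ using the quadratic remainder bound of Theorem~\ref{thm:vonKochSFE}, then multiply the estimates, with strong languidity following because the decay $\Lambda^{|\sigma|}$ of $\zeta_L$ absorbs the exponential growth of $E$. Your explicit identification of $B=1/\delta$ in the \textbf{L2}$'$ bound and your caution about choosing a screen avoiding poles in the non-lattice case (the paper simply takes the constant screen $S(\tau)\equiv\alpha$) are refinements of, not departures from, the paper's argument.
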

\begin{proof}
    Firstly, we note that by Proposition~\ref{prop:languidSZF}, $\zeta_L$ is strongly languid with exponent $\kappa=0$. We shall thus need to estimate the terms in the multiplicative factor in order to deduce the languidity of $\tubezeta_{\Knr,\Omega}$. We do have that 
    \begin{align*}
        |\tubezeta_{\Knr,\Omega}(s;\delta)| &\leq |\zeta_L(s)|(|E(s;\delta)|+|\tubezeta_{R}(s;\delta)|),
    \end{align*}
    so we may estimate $E(s;\delta)$ and $\tubezeta_R(s;\delta)$ independently. 

    Let $\alpha\in(0,D)$, $\beta>2\geq D$, and $\delta>0$ be fixed. Let us define the screen $S(\tau)$ to be the constant function $S(\tau)\equiv \alpha$. Further, suppose that $\set{\tau_m}_{m\in\ZZ}$ is a sequence of real numbers for which $\zeta_L$ satisfies the languidity estimates. In particular, it satisfies that $\lim_{m\to\infty}\tau_m\to\infty$, $\lim_{m\to-\infty}\tau_m=-\infty$, and $\forall m\geq 1$, $\tau_{-m}<0<\tau_m$. 
     
    To estimate the entire function $E(s;\delta)$, we recall the estimates in Equation~\ref{eqn:partialTubeEstimate} for partial tube zeta functions. Using these estimates on the function $E(s;\delta)$, when $\sigma=\Re(s)\neq 2$ we have that:
    \begin{align*}
        |E(s;\delta)| &\leq 2 \ell^\sigma V_{\Knr,\Omega}(\delta/\ell) \frac{\delta^{\sigma-2}}{\sigma-2}(\ell^{2-\sigma}-1) \\
            & + (n-1) r^\sigma V_{\Knr,\Omega}(\delta/r) \frac{\delta^{\sigma-2}}{\sigma-2}(r^{2-\sigma}-1) \\
            &\leq C \frac{\delta^{\sigma-2}}{\sigma-2}(\ell^{2}-\ell^\sigma+r^{2}-r^\sigma).
    \end{align*} 
    Additionally, $E(s;\delta)$ is bounded in a neighborhood of $\sigma = 2$, say $[2-\gamma,2+\gamma]$, by the second estimate of Equation~\ref{eqn:partialTubeEstimate} and the continuity of $E(\cdot;\delta)$. So, $E(s;\delta)$ may be uniformly estimated on the intervals $[\alpha+i\tau_m,(2-\gamma)+i\tau_m]$, $[2-\gamma+i\tau_m,2+\gamma+i\tau_m]$, and $[2+\gamma+i\tau_m,\beta+i\tau_m]$ with a constant $C$ which does not depend on $\tau_m$. Along the screen $S(\tau)\equiv\alpha$, we similarly have that $E(s;\delta)$ is uniformly estimated by a constant determined only by $\alpha$ and the other fixed parameters ($\delta,n,r$ and so forth.) We do note that the estimate involves a factor of the form $B^\alpha$ for some $B>0$, but otherwise is valid for any $\alpha<D$; thus $E(s;\delta)$ may in fact be seen to satisfy the estimates required of strongly languidity.  

    Next, we estimate the remainder term. By Theorem~\ref{thm:vonKochSFE}, we have the estimate $nR(t)\leq C t^2$ for a constant $C>$ depending only on $n$. Therefore, we have that if $\Re(s)=\sigma>0$,
    \[ |\tubezeta_R(s;\delta)| \leq C \int_0^\delta t^{\sigma-1}\,dt \leq C \frac{\delta^\sigma}\sigma . \]  
    Therefore, we have that it is bounded on $\Re(s)=\alpha>0$ independently of $\Im(s)$ (with bound depending only on $n$ and $\alpha$) and on the intervals $[\alpha+i\tau_m,\beta+i\tau_m]$ independently of $\tau_m$. 
    
    Let $C_L$ be a constant for which the estimates on $\zeta_L$ hold, $C_E$ for $E$, and $C_R$ for $\tubezeta_R$. Combining these estimates together with the constant $C=C_L(C_E+C_R)$ yields the estimates for the function $\tubezeta_{\Knr,\Omega}$. Thus, the RFD $(\Knr,\Omega)$ is languid with exponent $\kappa=0$ with respect to any screen $S(\tau)\equiv \alpha>0$. Additionally, if $\tubezeta_R$ satisfies the strong languidity conditions, then so too does $\tubezeta_{\Knr,\Omega}$ since the functions $\zeta_L$ and $E$ both satisfy the necessary estimates.  
\end{proof}

\subsection{Tube Formulae for GKFs}
Having proven the necessary growth estimates, we may now deduce explicit tube formulae for the RFDs $(\Knr,\Omega)$. We shall focus on pointwise formulae in this paper for simplicity, but the interested reader may also wish to consider the distributional analogues which are simpler in expression but more technical in their interpretation. There is an explicit formula for the tube function $V_{\Knr,\Omega}$ which is valid distributionally (Equation~\ref{eqn:distributionGKFtubeFormula}), and the pointwise formulae of Theorem~\ref{thm:GKFtubeFormula} will be valid for the integrated tube functions $V_{\Knr,\Omega}^{[k]}$ for any $k\geq 2$.

In what follows, we shall use the Pochhammer symbol $(s)_k$ which may be defined as
\[ (s)_k = \frac{\Gamma(s+k)}{\Gamma(s)}, \]
which for $k\in\NN$ takes the form $(s)_k = s(s+1)\cdots(s+k-1)$. 

\begin{theorem}[Pointwise Tube Formula for GKFs]
    \label{thm:GKFtubeFormula}
    Let $(\Knr,\Omega)$ be the relative fractal drum of a generalized von Koch snowflake relative to its interior component $\Omega$. Let $V_{\Knr,\Omega}$ be its tube function and let $V_{\Knr,\Omega}^{[k]}$ be the $k\nth$ antiderivative of $V_{\Knr,\Omega}$ with the constraint that $V_{\Knr,\Omega}^{[l]}(0)=0$ for all $l=1,...,k$. 

    Let $\delta>0$ and $k>1$. Then for any $t\in(0,\delta)$, we have that 
    \begin{align}
        \label{eqn:GKFtubeFormula}
        V_{\Knr,\Omega}^{[k]}(t) 
            &= \sum_{\mathclap{\omega\in\Dd_{\Knr,\Omega}(\HH_\alpha)}} 
            \,\Res\Bigg( \frac{t^{2-s+k}}{(3-s)_k} \tubezeta_{\Knr,\Omega}(s;\delta);\omega \Bigg) 
            + O(t^{2-\alpha+k}),
    \end{align}
    as $t\to0^+$ for any $\alpha>0$ for which $\set{\omega\in \Dd_{\Knr,\Omega}(\HH_0) : \Re(\omega) =\alpha}=\emptyset$. 
\end{theorem}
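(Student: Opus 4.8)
The plan is to represent $V_{\Knr,\Omega}^{[k]}$ as an inverse Mellin transform and then to shift the contour of integration leftward across the complex dimensions, collecting their residues and bounding the leftover integral along a screen. First I would invoke Proposition~\ref{prop:zetaMellin}, which (with $\dimension = 2$) identifies $\tubezeta_{\Knr,\Omega}(s;\delta) = \Mm^\delta[t^{-2}V_{\Knr,\Omega}](s)$, so that the Mellin inversion used in Theorem~\ref{thm:solnSFE} gives, for $D = \overline{\dim}_\Mink(\Knr,\Omega)$, any $c \in (D,3)$ (possible since $D < 2 < 3$), and $t \in (0,\delta)$,
\[ V_{\Knr,\Omega}(t) = \frac{1}{2\pi i}\int_{c-i\infty}^{c+i\infty} t^{2-s}\tubezeta_{\Knr,\Omega}(s;\delta)\,ds. \]
Integrating in $t$ a total of $k$ times, using $\int_0^t \tau^{2-s+j}\,d\tau = t^{3-s+j}/(3-s+j)$ together with the normalization $V_{\Knr,\Omega}^{[l]}(0)=0$, accumulates the product $(3-s)(4-s)\cdots(2-s+k) = (3-s)_k$ in the denominator and produces exactly
\[ V_{\Knr,\Omega}^{[k]}(t) = \frac{1}{2\pi i}\int_{c-i\infty}^{c+i\infty} \frac{t^{2-s+k}}{(3-s)_k}\tubezeta_{\Knr,\Omega}(s;\delta)\,ds, \]
where the interchange of the $t$-integrations with the contour integral is legitimate once absolute convergence along vertical lines is in hand, which the decay estimate below supplies precisely when $k > 1$.

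Next I would push the contour to the screen $\Re s = \alpha$ with $0 < \alpha < D$, choosing $\alpha$ so that no complex dimension lies on it (possible by the hypothesis that $\set{\omega \in \Dd_{\Knr,\Omega}(\HH_0) : \Re(\omega) = \alpha} = \emptyset$). Here Theorem~\ref{thm:languidZetaGKF} is the crucial input: since $\tubezeta_{\Knr,\Omega}$ is languid with exponent $\kappa = 0$, along the horizontal segments $[\alpha + i\tau_m,\, c + i\tau_m]$ over the sequence $\set{\tau_m}$ furnished by hypothesis \textbf{L1} the integrand is dominated by $C\,t^{2-\sigma+k}\,|3-s|^{-k} = O(|\tau_m|^{-k})$, which tends to $0$ as $m \to \infty$ because $k > 1$. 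The only poles crossed in the strip $\alpha < \Re s < c$ are those of $\tubezeta_{\Knr,\Omega}$, namely the complex dimensions in $\Dd_{\Knr,\Omega}(\HH_\alpha)$; although the kernel factor $1/(3-s)_k$ itself has poles at $s = 3,\ldots,k+2$, the choice $c < 3$ keeps these to the right of the region swept out, so they never enter the residue count. The residue theorem then yields the residue sum over $\Dd_{\Knr,\Omega}(\HH_\alpha)$ plus a residual integral along the screen $\Re s = \alpha$.

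The technical heart of the argument, and the step I expect to be the main obstacle, is estimating this screen integral. On $\Re s = \alpha$ one has $|t^{2-s+k}| = t^{2-\alpha+k}$ (since $0 < t < 1$), the value $\tubezeta_{\Knr,\Omega}(\alpha + i\tau;\delta)$ is bounded by the languidity constant of Theorem~\ref{thm:languidZetaGKF}, and $|3-s|^{-k}$ is integrable in $\tau$ exactly because $k > 1$; multiplying these gives a bound of the claimed form $O(t^{2-\alpha+k})$ as $t \to 0^+$. Controlling this screen integral while simultaneously forcing the horizontal pieces to vanish is precisely where the two hypotheses $k > 1$ (integrability of $|3-s|^{-k}$) and $\kappa = 0$ (so that $\tubezeta_{\Knr,\Omega}$ contributes no extra growth) are indispensable. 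The whole contour manipulation is an instance of the pointwise explicit formulae with error term developed in Chapter~5 of \cite{LRZ17_FZF} and Chapter~5 of \cite{LapvFr13_FGCD}, whose languidity hypotheses are verified here by Theorem~\ref{thm:languidZetaGKF}; the residues at the $\omega$ are left unevaluated in the stated formula, reducing to the simple-pole expression only when the complex dimensions are simple.
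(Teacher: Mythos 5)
Your proposal is correct and takes essentially the same approach as the paper: both arguments rest on the languidity with exponent $\kappa=0$ established in Theorem~\ref{thm:languidZetaGKF} together with a constant screen $S(\tau)\equiv\alpha$ chosen to avoid the poles, and both obtain the error term $O(t^{2-\alpha+k})$ from the bound along that screen. The only difference is one of packaging: the paper invokes the general pointwise explicit formula (Theorem~5.1.11 and Equation~5.1.33 of \cite{LRZ17_FZF}) as a black box, whereas you reprove that formula inline --- truncated-Mellin inversion, $k$-fold integration producing the kernel $t^{2-s+k}/(3-s)_k$, contour deformation with horizontal segments vanishing since $k>1$ and $\kappa=0$ --- which is precisely the mechanism underlying the cited theorem.
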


\begin{proof}
    By Theorem~\ref{thm:languidZetaGKF}, the RFD $(\Knr,\Omega)$ is languid. Therefore, we may apply Theorem~5.1.11 in \cite{LRZ17_FZF} to obtain Equation~\ref{eqn:GKFtubeFormula}. Explicitly, we construct the screen $S(\tau):=\alpha + i\tau$, which by assumption does not intersect any poles of $\tubezeta_{\Knr}$. In this case, $\sup S=\alpha$ and the error estimate follows from Equation~5.1.33. 
\end{proof}
Note that if a pole $\omega\in \Dd_{\Knr,\Omega}(\HH_0)$ is simple, then we have that 
\[ 
    \Res\Bigg( \frac{t^{2-s+k}}{(3-s)_k} \tubezeta_{\Knr,\Omega}(s;\delta);\omega \Bigg) 
        = \frac{t^{2-\omega+k}}{(3-\omega)_k} \Res(\zeta_L) (E(\omega;\delta)+\tubezeta_R(\omega;\delta)), 
\] 
where $\zeta_L$, $E$, and $\tubezeta_R$ are as in Theorem~\ref{thm:cDimsOfGKFs}. In this case, the formula is seen to be a sum of terms of the form $\frac{r_\omega}{(3-\omega)_k}t^{2-\omega+k}$, where $r_\omega$ is a constant determined from $\tubezeta_{\Knr}$. Note that this computation of the residue relies on Equation~\ref{eqn:zetaGKF} and the holomorphicity of $E$ and $\tubezeta_R$. 

Also, we note that the Pochhammer symbol has a shift by one since the term in Equation~5.1.30 in \cite{LRZ17_FZF} is $(\dimension-s+1)_k$; the codimension $2-\omega$ is still the important term in the denominator. Indeed, by writing the product formula in reverse order and fixing the codimension $2-\omega$ in each factor, we remark that it takes the form 
\[ (3-\omega)_k = (2-\omega+k)(2-\omega+k-1)\cdots(2-\omega+1). \]
In this form, the correspondence between the exponent of $t$ and the factors in the product is clearer.

Lastly, we stress that the sum in Equation~\ref{eqn:GKFtubeFormula} must be interpreted as an appropriate limit of sums over poles contained in windows of increasing size. More explicitly, it is the limit of the sum of the complex dimensions whose imaginary parts are bounded in magnitude by a fixed upper bound, say $T_n$, with the limit taken over the bounding parameter $T_n$ going to infinity. Equivalently, this may be stated in terms of a limit of sums over truncated windows. See Remark~5.1.12 in \cite{LRZ17_FZF} for more details. 

Now, Equation~\ref{eqn:GKFtubeFormula} is explicitly for second order and higher antiderivatives of the tube function $V_{\Knr,\Omega}$. This formula is, however, valid in a distributional sense for all $k\in \ZZ$ (and in particular for the tube function itself where $k=0$) without further assumption; see Theorem~5.2.2 in \cite{LRZ17_FZF}. Interpreting the following equality in the distributional sense, the formula for $k=0$ takes the form:
\begin{equation} 
    \label{eqn:distributionGKFtubeFormula}
    V_{\Knr,\Omega}(t) = \sum_{\mathclap{\omega\in\Dd_{\Knr,\Omega}(\HH_\alpha)}} 
    \,\Res(t^{2-s} \tubezeta_{\Knr,\Omega}(s;\delta);\omega ) + \Rr_{\Knr,\Omega}(t), 
\end{equation}
where $\Rr_{\Knr,\Omega}$ is a distribution whose action on test functions is of asymptotic order $O(t^{2-\alpha})$ for small $\alpha>0$, in the sense of Definition~5.2.9 in \cite{LRZ17_FZF}. If the poles $\omega$ are each known to be simple, then Equation~\ref{eqn:distributionGKFtubeFormula} takes the simpler form 
\begin{equation} 
    \label{eqn:distributionGKFtubeFormulaSimple}
    V_{\Knr,\Omega}(t) = \sum_{\mathclap{\omega\in\Dd_{\Knr,\Omega}(\HH_\alpha)}} 
    \,a_\omega t^{2-\omega}  + \Rr_{\Knr,\Omega}(t), 
\end{equation}
where $a_\omega := \Res(\tubezeta_{\Knr,\Omega}(s;\delta);\omega )$ are constants determined only by the residues of the tube zeta function. We refer the interested reader to Section~5.2 of \cite{LRZ17_FZF} for more detail and most importantly the description of the appropriate space of test functions for which these distributions act. Theorem~5.2.2 in particular leads Equation~\ref{eqn:distributionGKFtubeFormula} and Equation~\ref{eqn:distributionGKFtubeFormulaSimple} follows from Corollary~5.2.12.

\subsection{Lattice/Non-Lattice Dichotomy}
The possible complex dimensions $\Dd_L(\HH_0)$, or equivalently the complex solutions to Equation~\ref{eqn:Moran} with positive real part, have structure heavily dependent on the arithmetic properties of $r$ and its conjugate $\ell$. We shall say that $\Knr$ is arithmetic (or lattice) if $\log r/\log \ell$ is rational and non-arithmetic (or non-lattice) otherwise. In the arithmetic case, this implies there exists a (unique) positive number $x$ and positive integers $p,q$ so that $r=x^p$ and $\ell=x^q$.

\begin{corollary}[Complex Dimensions of $\Knr$, Arithmetic Case]
    Let the scaling ratios $r$ and $\ell=\frac12(1-r)$ be arithmetic, that is, there exist a positive $x$ and positive integers $p,q$ so that $r=x^p$, $\ell=x^q$. 

    Then there exist finitely many poles $D=\omega_1, \omega_2,...,\omega_n$ with real parts $D=\sigma_1>\sigma_2\geq ... \geq \sigma_n$ so that the complex dimensions of $\Knr$ are contained in the set 
    \[ \Big\{ \omega_k + \frac{2\pi i n}{\log x^{-1}} : n\in\ZZ, k=1,...,n \Big\}. \]
    For each $\omega$ with $\Re(\omega)_k>0$, these are indeed complex dimensions, and all of the poles with real part $D$ are simple. 
\end{corollary}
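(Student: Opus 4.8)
The plan is to turn the transcendental Moran-type relation of Theorem~\ref{thm:cDimsOfGKFs} into a polynomial one, read the lattice structure off its roots, and then treat separately the structural claim (location, finiteness of real parts, period) from the sharper claims (genuineness and simplicity). By Theorem~\ref{thm:cDimsOfGKFs} we have $\Dd_{\Knr}(\HH_0)\subseteq\set{\omega:1=2\ell^\omega+(n-1)r^\omega}$ together with the factorization $\tubezeta_{\Knr,\Omega}(s;\delta)=\zeta_L(s)\,h(s)$, where $h(s):=E(s;\delta)+\tubezeta_R(s;\delta)$ is holomorphic in $\HH_0$. Taking $p,q$ coprime without loss of generality (replace $x$ by the generator of the multiplicative semigroup they span) and substituting $z=x^s$, so that $\ell^s=z^q$ and $r^s=z^p$, the defining relation becomes the single polynomial equation
\[ P(z):=2z^q+(n-1)z^p-1=0, \]
which has at most $\max(p,q)$ roots $z_1,\dots,z_{N_0}$.

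For each root $z_k$ the equation $x^\omega=z_k$ has, since $\log x<0$, the solution set $\omega_k+\tfrac{2\pi i m}{\log x^{-1}}$ with $m\in\ZZ$ and $\omega_k:=\log z_k/\log x$ a fixed representative; this gives at once the claimed containment in a union of vertical lines of common period $\tfrac{2\pi}{\log x^{-1}}$. The real parts $\Re(\omega_k)=\log\lvert z_k\rvert/\log x$ take only finitely many values, which I would order to produce $\sigma_1\ge\sigma_2\ge\cdots$. To identify the top one as $D$ and show it is strict, I would apply the triangle inequality: any root obeys $1\le 2\lvert z\rvert^q+(n-1)\lvert z\rvert^p$, and since $u\mapsto 2u^q+(n-1)u^p$ is strictly increasing with value $1$ at $u=x^D$, every root satisfies $\lvert z\rvert\ge x^D$; equality forces $z^q,z^p\ge0$, hence (by coprimality of $p,q$) $z=x^D$. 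Thus $x^D$ is the unique root of minimal modulus, so $\sigma_1=D=\overline{\dim}_\Mink(\Knr,\Omega)$ is attained exactly along the line $\Re=D$ and strictly exceeds the remaining $\sigma_k$.

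Simplicity along $\Re=D$ would then follow by differentiating. Since $P'(x^D)=2q(x^D)^{q-1}+(n-1)p(x^D)^{p-1}>0$, the root $x^D$ is simple; and the denominator $g(s)=1-2\ell^s-(n-1)r^s=-P(x^s)$ has $g'(s)=-P'(x^s)\,x^s\log x$, which is nonzero at every $\omega$ on $\Re=D$ because all such $\omega$ are mapped to the single simple root $x^D$. Hence $\zeta_L$ has only simple poles on that line, and because $h$ is holomorphic there, the simplicity of the poles of $\tubezeta_{\Knr,\Omega}$ reduces to the nonvanishing of $h$.

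The hard part will be the genuineness assertion, namely that each candidate $\omega$ with $\Re(\omega_k)>0$ is an actual pole rather than being cancelled by a zero of $h$. By Corollary~\ref{cor:exactCdims} it suffices to verify $h(\omega)=\sum_i a_i\lambda_i^\omega\partialzeta_{\Knr,\Omega}(\omega;\delta,\delta/\lambda_i)+\tubezeta_R(\omega;\delta)\neq0$; this is precisely the point left conjectural after that corollary, and since $h$ is not periodic (the partial zeta integrals are genuinely oscillatory) nonvanishing at one lattice point does not propagate up a line. For $\omega=D$ itself the pole is unconditional: $\tubezeta_{\Knr,\Omega}$ is the restricted Mellin transform of the nonnegative integrand $t^{-2}V_{\Knr,\Omega}(t)$ whose abscissa of convergence is $D$ by Theorem~\ref{thm:cdimsFromSFE}, so a Landau-type principle makes the abscissa a singularity. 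For the remaining lattice points I would either estimate the oscillatory partial-zeta integrals directly—exploiting that the poles are independent of $\delta$, so $\delta$ may be chosen favorably—or verify the hypothesis of Corollary~\ref{cor:exactCdims}; this nonvanishing is where essentially all the difficulty lies.
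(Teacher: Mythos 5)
Your structural argument is correct, but it takes a genuinely different route from the paper: the paper's entire proof is a one-line citation of Theorem~2.16 of \cite{LapvFr13_FGCD}, applied to $\zeta_L$ on the grounds that it has the same form as the geometric zeta function of a lattice self-similar string. What you have done is reprove the relevant content of that theorem in this special case: the substitution $z=x^{s}$ (after reducing to coprime $p,q$ by replacing $x$ with $x^{\gcd(p,q)}$), the polynomial $P(z)=2z^{q}+(n-1)z^{p}-1$, the triangle-inequality argument showing $x^{D}$ is the unique root of minimal modulus (coprimality forcing equality only at the positive real root), and simplicity on $\Re s=D$ via $P'(x^{D})>0$ together with $g'(\omega)=-P'(x^{\omega})x^{\omega}\log x\neq0$. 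All of these steps are sound, and they are exactly the mechanism inside the cited theorem; your version buys self-containedness at the cost of length, and your observation that every root satisfies $\lvert z\rvert\ge x^{D}$, hence $\Re\omega\le D$ with equality only on the line over $D$, cleanly yields the strict inequality $\sigma_{1}=D>\sigma_{2}$. Note also that since $h$ is holomorphic in $\HH_0$ and the poles of $\zeta_L$ on $\Re s=D$ are simple, any pole of the product $\zeta_L\cdot h$ on that line is automatically simple, so the simplicity claim does not depend on the nonvanishing of $h$.

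On the genuineness claim you flag as ``the hard part'': you are right that the factorization $\tubezeta_{\Knr,\Omega}=\zeta_L\cdot h$ only gives containment, and that upgrading a lattice point to an actual pole of $\tubezeta_{\Knr,\Omega}$ requires $h(\omega)\neq0$, which the paper leaves conjectural after Corollary~\ref{cor:exactCdims}. But the paper's citation does not resolve this either: Theorem~2.16 of \cite{LapvFr13_FGCD} is a statement about $\zeta_L$ itself, i.e.\ about $\Dd_L$, for which genuineness is immediate, since $\ell^{s}$ and $r^{s}$ are invariant under $s\mapsto s+2\pi i/\log x^{-1}$ and hence every point $\omega_k+2\pi i n/\log x^{-1}$ solves the Moran equation. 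Read this way (complex dimensions ``associated to $L$,'' as in Proposition~\ref{prop:propertiesSZF}), the corollary follows from your argument as well and nothing is missing; read as a statement about poles of the tube zeta function, the paper's own proof has the same gap you identify. Your Landau-type argument that $s=D$ is unconditionally a pole---nonnegative integrand, abscissa of convergence $D$, meromorphy in $\HH_0$ forcing the singularity to be a pole---is a genuine addition that goes beyond what the paper establishes.
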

\begin{proof}
    This result is a corollary of Theorem~\ref{thm:cDimsOfGKFs} and the results in \cite[Theorem 2.16]{LapvFr13_FGCD} regarding self-similar zeta functions having the same form as $\zeta_L$. 
\end{proof}
This corollary implies that whenever $\Knr$ is arithmetic, it is not Minkowski measurable. 

\begin{figure}[ht!]
    \centering
    \begin{tabular}{ccc}
        $K_{3,\frac13}$ & $K_{4,\frac14}$ & $K_{5,\frac15}$ 
        \vspace{0.1cm}\\
        \includegraphics[width=3cm]{figures/K_n=3,r=0.33,lv=6.pdf}
        &
        \includegraphics[width=3cm]{figures/K_n=4,r=0.25,lv=6.pdf}
        &
        \includegraphics[width=3cm]{figures/K_n=5,r=0.20,lv=6.pdf}
        \\
        \includegraphics[width=3cm]{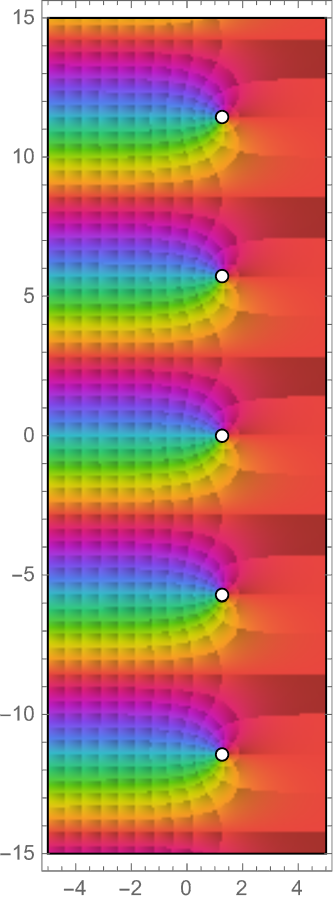}
        &
        \includegraphics[width=3cm]{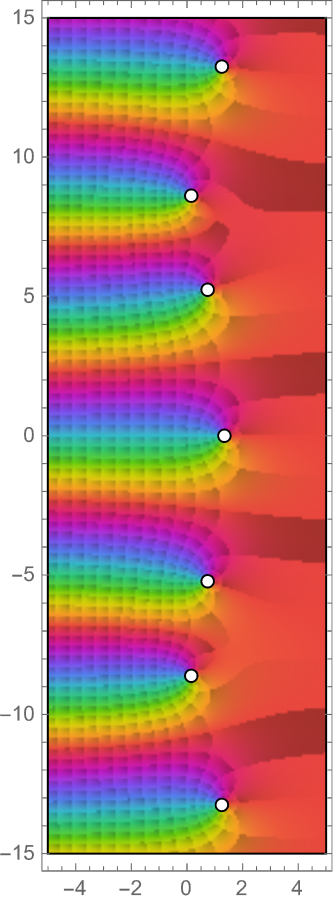}
        &
        \includegraphics[width=3cm]{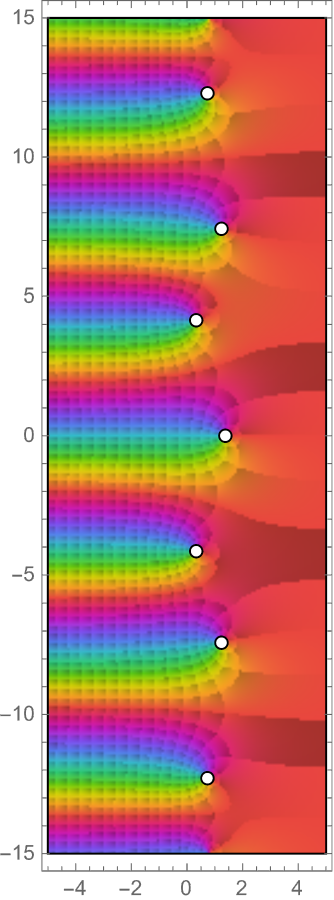} 
    \end{tabular}
    \caption{Two dimensional complex argument plots of associated scaling zeta functions of the fractals $K_{3,\frac13}$ (left), $K_{4,\frac14}$ (middle), and $K_{5,\frac15}$ (right). The possible complex dimensions of the fractals occur at the poles of these functions, indicated in the plots by the white circles.}
    \label{fig:cDimPlots2D}
\end{figure}

Meanwhile, in the non-lattice case, the complex dimensions are no longer distributed periodically along finitely many vertical lines. Instead, there is a unique pole of highest order, the Minkowski dimension $D$ of the fractal or equivalently the unique positive real solution to Equation~\ref{eqn:Moran} specialized to these fractals. Compare the possible complex dimensions of the fractals plotted in Figure~\ref{fig:cDimPlots2D}: the first example is the lattice case since there is one scaling ratio $r=\frac13$. However, in the other cases, the ratios of logarithms of scaling ratios take the form $\log(1/4)/\log(3/8)$ for $K_{4,\frac14}$ and $\log(1/5)/\log(2/5)$ for $K_{5,\frac15}$. 

Describing these poles exactly is in general quite challenging since they are solutions to a transcendental equation without the simplifications possible in the arithmetic case. However, the possible complex dimensions in the non-lattice case have a quasiperiodic structure which in particular may be approximated by sets of lattice complex dimensions. We refer the reader to Chapter~3, and in particular Theorem~3.6 of \cite{LapvFr13_FGCD} for a thorough description of the structure of non-lattice complex dimensions and the proofs of these statements. The theory of approximation of non-lattice complex dimensions may be found in Section~3.4 of \cite{LapvFr13_FGCD} and also in the further study of Lapidus, van Frankenhuijsen, and Voskanian in \cite{LvFV21}.

\section{Conclusion}
\label{sec:conclusion}
%
%

In summary, we have analyzed fractals arising from self-similar systems by constructing scaling functional equations. These scaling functional equations may be solved by means of truncated Mellin transforms. In the case of tube formulae for relative fractal drums, this process induces a functional equation satisfied by the zeta function. Moreover, the poles of this zeta function are essentially controlled by the self-similar complex dimensions from the scaling functional equation. 

We conclude the work by discussing future directions and open problems related to this work.

\subsection{Future Directions}
Firstly, the methods here may be applied to compute the possible complex dimensions of many other self-similar fractals. All of the necessary framework is present in Sections~\ref{sec:generalSFEs} and \ref{sec:fractalSFEs}, and one primarily needs to verify that an RFD $(X,\Omega)$ is osculant and languid in order to deduce an explicit tube formula such as Equation~\ref{eqn:GKFtubeFormula} for generalized von Koch fractals. 

Secondly, it is of interest to more carefully analyze the exactness of complex dimensions obtained via this methodology. Under what conditions are the possible complex dimensions exact, such as when does Corollary~\ref{cor:exactCdims} apply? We conjecture that a much stronger conclusion is possible, namely that Equation~\ref{eqn:cdimsFromSFE} is in fact an equality with only mild assumptions on the remainder term. 

Additionally, the analysis of generalized von Koch fractals has been done in this specific way in order to better understand the connection between geometry and spectrum of fractals in higher dimensions. For one-dimensional fractal strings, the relationship is elegantly given by an identity of geometric and spectral zeta functions using Riemann's zeta function (see \cite{LapvFr13_FGCD}.) The relationship is being explored in higher dimensions, and while there are known results (see for example Chapter~4, Section~3 in \cite{LRZ17_FZF}, and explicitly Theorems~4.3.8 and 4.3.11), the general problem is open. See Problem~6.2.32 in \cite{LRZ17_FZF} (and the references therein) for a more thorough discussion of this problem. 

In particular, we have studied geometric properties of generalized von Koch fractals by means of scaling functional equations in order to compare with the work of Michiel van den Berg such as in \cite{vdB00,vdBGil98,vdBHol99} on the heat equation on such fractals. We plan to further explore this connection in future work.

\bigskip
\noindent \textbf{Acknowledgements.} We wish to thank Dr. Michel Lapidus for his mentorship and helpful conversations regarding this work.


\bibliographystyle{amsalpha}
\bibliography{hoffer}

\newcommand{\etalchar}[1]{$^{#1}$}
\providecommand{\bysame}{\leavevmode\hbox to3em{\hrulefill}\thinspace}
\providecommand{\MR}{\relax\ifhmode\unskip\space\fi MR }
\providecommand{\MRhref}[2]{%
  \href{http://www.ams.org/mathscinet-getitem?mr=#1}{#2}
}
\providecommand{\href}[2]{#2}
\begin{thebibliography}{DGM{\etalchar{+}}17}

\bibitem[Bar88]{Bar88}
Michael Barnsley, \emph{{F}ractals {E}verywhere}, Academic Press, Inc., Boston,
  MA, 1988. \MR{977274}

\bibitem[BKS91]{BedKS}
Tim Bedford, Michael Keane, and Caroline Series (eds.), \emph{Ergodic theory,
  symbolic dynamics, and hyperbolic spaces}, Oxford Science Publications, The
  Clarendon Press, Oxford University Press, New York, 1991, Papers from the
  Workshop on Hyperbolic Geometry and Ergodic Theory held in Trieste, April
  17--28, 1989. \MR{1130170}

\bibitem[Dem24]{Dem23}
Yann Demichel, \emph{Who invented von koch's snowflake curve?}, The American
  Mathematical Monthly \textbf{0} (2024), no.~0, 1--7.

\bibitem[DGM{\etalchar{+}}17]{DGMRSurvey}
Kristin Dettmers, Robert Giza, Rafael Morales, John~A. Rock, and Christina
  Knox, \emph{A survey of complex dimensions, measurability, and the
  lattice/nonlattice dichotomy}, Discrete Contin. Dyn. Syst. Ser. S \textbf{10}
  (2017), no.~2, 213--240. \MR{3600644}

\bibitem[DK{\"O}{\"U}15]{DKOU15}
Ali Deniz, {\c{S}}ahin Ko{\c{c}}ak, Yunus {\"O}zdemir, and Adem~Ersin
  {\"U}reyen, \emph{Tube volumes via functional equations}, Journal of Geometry
  \textbf{106} (2015), 153--162.

\bibitem[Fal90]{Fal90_FG}
Kenneth Falconer, \emph{{Fractal Geometry}}, John Wiley \& Sons, Ltd.,
  Chichester, 1990, Mathematical foundations and applications. \MR{1102677}

\bibitem[Fel50]{Fel1950}
William Feller, \emph{An {I}ntroduction to {P}robability {T}heory and {I}ts
  {A}pplications. {V}ol. {I}}, John Wiley \& Sons, Inc., New York, 1950.
  \MR{38583}

\bibitem[Gra10]{Graf10}
Urs Graf, \emph{Introduction to {H}yperfunctions and {T}heir {I}ntegral
  {T}ransforms: {A}n {A}pplied and {C}omputational {A}pproach}, Birkh{\"a}user
  Basel, Basel, 2010.

\bibitem[Hut81]{Hut81}
John~E. Hutchinson, \emph{Fractals and self-similarity}, Indiana Univ. Math. J.
  \textbf{30} (1981), no.~5, 713--747. \MR{625600}

\bibitem[KL93]{KiLap93}
Jun Kigami and Michel~L. Lapidus, \emph{Weyl's problem for the spectral
  distribution of {L}aplacians on p.c.f. self-similar fractals}, Comm. Math.
  Phys. \textbf{158} (1993), no.~1, 93--125. \MR{1243717}

\bibitem[KP10]{KelPaq10}
Tam\'{a}s Keleti and Elliot Paquette, \emph{The trouble with von {K}och curves
  built from {$n$}-gons}, Amer. Math. Monthly \textbf{117} (2010), no.~2,
  124--137. \MR{2590192}

\bibitem[Lal88]{Lal1}
Steven~P. Lalley, \emph{The packing and covering functions of some self-similar
  fractals}, Indiana Univ. Math. J. \textbf{37} (1988), no.~3, 699--710.
  \MR{962930}

\bibitem[Lal89]{Lal2}
\bysame, \emph{Renewal theorems in symbolic dynamics, with applications to
  geodesic flows, non-{E}uclidean tessellations and their fractal limits}, Acta
  Math. \textbf{163} (1989), no.~1-2, 1--55. \MR{1007619}

\bibitem[Lap93]{LapDundee}
Michel~L Lapidus, \emph{Vibrations of fractal drums, the {R}iemann hypothesis,
  waves in fractal media and the {W}eyl-{B}erry conjecture}, Pitman Research
  Notes in Mathematics Series (1993), 126--126.

\bibitem[LP06]{LapPea06}
Michel~L. Lapidus and Erin P.~J. Pearse, \emph{A tube formula for the {K}och
  snowflake curve, with applications to complex dimensions}, J. London Math.
  Soc. (2) \textbf{74} (2006), no.~2, 397--414. \MR{2269586}

\bibitem[LP10]{LapPea10}
Michel~L Lapidus and Erin~PJ Pearse, \emph{Tube formulas and complex dimensions
  of self-similar tilings}, Acta Applicandae Mathematicae \textbf{112} (2010),
  no.~1, 91--136.

\bibitem[LPW11]{LapPea11}
Michel~L Lapidus, Erin~PJ Pearse, and Steffen Winter, \emph{Pointwise tube
  formulas for fractal sprays and self-similar tilings with arbitrary
  generators}, Advances in Mathematics \textbf{227} (2011), no.~4, 1349--1398.

\bibitem[LR{\v{Z}}17]{LRZ17_FZF}
Michel~L. Lapidus, Goran Radunovi\'{c}, and Darko {\v{Z}}ubrini\'{c},
  \emph{{Fractal Zeta Functions and Fractal Drums}}, Springer Monographs in
  Mathematics, Springer, Cham, 2017, Higher-Dimensional Theory of Complex
  Dimensions. \MR{3587778}

\bibitem[LvF13]{LapvFr13_FGCD}
Michel~L. Lapidus and Machiel van Frankenhuijsen, \emph{{Fractal Geometry,
  Complex Dimensions and Zeta Functions}}, second ed., Springer Monographs in
  Mathematics, Springer, New York, 2013, Geometry and Spectra of Fractal
  Strings. \MR{2977849}

\bibitem[LvFV21]{LvFV21}
Michel~L. Lapidus, Machiel van Frankenhuijsen, and Edward~K. Voskanian,
  \emph{Quasiperiodic patterns of the complex dimensions of nonlattice
  self-similar strings, via the {LLL} algorithm}, Mathematics \textbf{9}
  (2021), no.~6, article 591.

\bibitem[Mor46]{Moran46}
P.~A.~P. Moran, \emph{Additive functions of intervals and {H}ausdorff measure},
  Mathematical Proceedings of the Cambridge Philosophical Society \textbf{42}
  (1946), no.~1, 15--23.

\bibitem[Str90]{Str1}
Robert~S Strichartz, \emph{Self-similar measures and their {F}ourier transforms
  {I}}, Indiana University Mathematics Journal (1990), 797--817.

\bibitem[Str93a]{Str2}
Robert~S. Strichartz, \emph{Self-similar measures and their {F}ourier
  transforms. {II}}, Trans. Amer. Math. Soc. \textbf{336} (1993), no.~1,
  335--361. \MR{1081941}

\bibitem[Str93b]{Str3}
\bysame, \emph{Self-similar measures and their {F}ourier transforms. {III}},
  Indiana Univ. Math. J. \textbf{42} (1993), no.~2, 367--411. \MR{1237052}

\bibitem[Tit48]{Tit1937}
E.~C. Titchmarsh, \emph{Introduction to the theory of {F}ourier integrals}, The
  Clarendon Press, Oxford, 1948.

\bibitem[vdB00]{vdB00}
M.~van~den Berg, \emph{Renewal equation for the heat equation of an arithmetic
  von {K}och snowflake}, Infinite dimensional stochastic analysis ({A}msterdam,
  1999), Verh. Afd. Natuurkd. 1. Reeks. K. Ned. Akad. Wet., vol.~52, R. Neth.
  Acad. Arts Sci., Amsterdam, 2000, pp.~25--37. \MR{1831409}

\bibitem[vdBdH99]{vdBHol99}
M.~van~den Berg and F.~den Hollander, \emph{Asymptotics for the heat content of
  a planar region with a fractal polygonal boundary}, Proc. London Math. Soc.
  (3) \textbf{78} (1999), no.~3, 627--661. \MR{1674840}

\bibitem[vdBG98]{vdBGil98}
Michiel van~den Berg and Peter~B Gilkey, \emph{A comparison estimate for the
  heat equation with an application to the heat content of the s-adic von
  {K}och snowflake}, Bulletin of the London Mathematical Society \textbf{30}
  (1998), no.~4, 404--412.

\bibitem[vK04]{Koch1904}
H.~von Koch, \emph{Sur une courbe continue sans tangente, obtenue par une
  construction g{\'e}om{\'e}trique {\'e}l{\'e}mentaire}, Arkiv for Matematik,
  Astronomi och Fysik \textbf{1} (1904), 681--704.

\bibitem[vK06]{Koch1906}
\bysame, \emph{Une m{\'e}thode g{\'e}om{\'e}trique {\'e}l{\'e}mentaire pour
  l'{\'e}tude de certaines questions de la th{\'e}orie des courbes planes},
  Acta mathematica \textbf{30} (1906), no.~1, 145--174.

\end{thebibliography}

\noindent
\textbf{Will Hoffer}

\noindent
Department of Mathematics, University of California, Riverside, 900 University Avenue, Riverside, CA 92521-0135, USA. 

\noindent
\textit{Email address:} \href{mailto:whoff003@ucr.edu}{whoff003@ucr.edu}

\noindent
\textit{Website:} \url{https://willhoffer.com}

\end{document}